  \theoremstyle{acmplain}
  \newtheorem{axiom}[theorem]{Axiom}
  \theoremstyle{acmdefinition}
  \newtheorem{construction}[theorem]{Construction}
  \newtheorem{computation}[theorem]{Computation}
  \newtheorem{notation}[theorem]{Notation}
  \newtheorem{remark}[theorem]{Remark}
\title{Logical Relations as Types}
\author{Jonathan Sterling}
\email{jmsterli@cs.cmu.edu}
\author{Robert Harper}
\email{rwh@cs.cmu.edu}
\affiliation{
  \institution{Carnegie Mellon University}
  \city{Pittsburgh}
  \state{Pennsylvania}
  \country{USA}
}
\begin{abstract}
The theory of program modules is of interest to language designers not only for
its practical importance to programming, but also because it lies at the nexus
of three fundamental concerns in language design: \emph{the phase distinction},
\emph{computational effects}, and \emph{type abstraction}.  We contribute a
fresh ``synthetic'' take on program modules that treats modules as the
fundamental constructs, in which the usual suspects of prior module calculi
(kinds, constructors, dynamic programs) are rendered as derived notions in
terms of a modal type-theoretic account of the phase distinction.  We simplify
the account of type abstraction (embodied in the generativity of module
functors) through a \emph{lax modality} that encapsulates computational
effects, placing \emph{projectibility} of module expressions on a
type-theoretic basis. 

Our main result is a (significant) proof-relevant and phase-sensitive
generalization of the Reynolds abstraction theorem for a calculus of program
modules, based on a new kind of logical relation called a \emph{parametricity
structure}.  Parametricity structures generalize the proof-irrelevant relations
of classical parametricity to proof-\emph{relevant} families, where there may
be non-trivial evidence witnessing the relatedness of two programs ---
simplifying the metatheory of strong sums over the collection of types, for
although there can be no ``relation classifying relations'', one easily
accommodates a ``family classifying small families''.

Using the insight that logical relations/parametricity is \emph{itself} a
form of phase distinction between the syntactic and the semantic, we
contribute a new synthetic approach to phase separated parametricity based on
the slogan \emph{logical relations as types}, by iterating our modal account
of the phase distinction. We axiomatize a dependent type theory of
parametricity structures using two pairs of complementary modalities
(syntactic, semantic) and (static, dynamic), substantiated using the topos
theoretic \emph{Artin gluing} construction. Then, to construct a simulation
between two implementations of an abstract type, one simply programs a third
implementation whose type component carries the representation invariant.

 \end{abstract}
\begin{document}
\maketitle

\section{Introduction}\label{sec:intro}

Program modules are the application of dependent type theory with universes to
the large-scale structuring of programs. As \citet{macqueen:1986} observed, the
hierarchical structuring of programs is an instance of dependent sum; consider
the example of a type together with a pretty printer:

\begin{code}
    (* SHOW $:=$ $\Sum{T:\Univ}{\prn{T\rightharpoonup\mathcd{string}}}$ *)
    signature SHOW =
    sig
      type t
      val show : t $\rightharpoonup$ string
    end
\end{code}

On the other hand, the parameterization of a program component in
another component is an instance of dependent product; for instance, consider a
module functor that implements a pretty printer for a product type:
\begin{code}
    (* ShowProd $:$ $\Prod{S_1,S_2:\mathcd{SHOW}}{\prn{\pi_1\prn{S_1}*\pi_1\prn{S_2}\rightharpoonup\mathcd{string}}}$ *)
    functor ShowProd (S1 : SHOW) (S2 : SHOW) :
    sig
      type t = S1.t * S2.t
      val show : t $\rightharpoonup$ string
    end = ...
\end{code}

Modules are more than just dependent products, sums, and universes, however: a
module language must account for abstraction and the phase
distinction, two critical notions that seem to complicate the simple story of
modules as dependent types. In \cref{sec:modal-abstraction}, we
introduce \ModTT{}, our take on a type theory for program modules, and explain
how to view abstraction and generativity in terms of a \emph{lax modality} or
strong monad; in \cref{sec:phase-distinction}, the phase distinction is seen to
arise naturally from an \emph{open modality} in the sense of topos theory.

\subsection{Abstraction and computational effects}\label{sec:modal-abstraction}

Reynolds famously argued that \emph{``Type structure is a syntactic discipline
for enforcing levels of abstraction''}~\citep{reynolds:1983}; abstraction is
the facility to manage the non-equivalence of types at the boundary
between spuriously compatible program fragments --- for instance, the boundary
between a fragment of a compiler that emits a De Bruijn index (address of a variable counted
from the right) and a fragment that accepts a De Bruijn level (the address counted from the left).

\subsubsection{Static abstraction via let binding}

The primary aspect of abstraction is, then, to prevent the ``false linkage'' of
programs permitted by coincidence of representation; the static distinction
between two different uses of the same type can be achieved by the standard
rule for (non-dependent) let-binding in type theory:
\[
  \inferrule[non-dependent let]{
    \Gamma\vdash A\ \mathit{type}\\
    \Gamma\vdash B\ \mathit{type}\\
    \Gamma\vdash N : A\\
    \Gamma, x : A \vdash M : B
  }{
    \Gamma \vdash \prn{\mathsf{let}\ x : A = N\ \mathsf{in}\ M} : B
  }
\]

Static ``let abstraction'' as above enables the programmer to treat the same
type differently in two locations, but share the same values at runtime. For
instance, consider the following expression that binds the \emph{integer
equality} structure twice, for two different purposes:
\[
  \begin{array}{l}
    \mathsf{let}\ \mathsf{DeBruijnLevel} : \mathsf{EQ} = \prn{\mathsf{int},\mathsf{int\_eq}}\ \mathsf{in}\\
    \mathsf{let}\ \mathsf{DeBruijnIndex} : \mathsf{EQ} = \prn{\mathsf{int},\mathsf{int\_eq}}\ \mathsf{in}\\
    M
  \end{array}
\]

In the scope of $M$ it is not the case that $\mathsf{DeBruijnLevel}$ and
$\mathsf{DeBruijnIndex}$ have the same type component. But at runtime, $M$ will
be instantiated with the same type and value components in both positions.  In
the Standard~ML implementation of modules, a more sophisticated form of
let binding is elaborated that actually exposes the static identity of the
bound term in the body; for this reason, Standard~ML programmers use
\emph{dynamic} abstraction (\cref{sec:dynamic-abstraction}) via the opaque
ascription \textcd{M :> S} to negotiate both static and dynamic abstraction situations.

\subsubsection{Dynamic abstraction via modal binding}\label{sec:dynamic-abstraction}

In the presence of computational effects and module functors, it is not always
enough to statically distinguish between two ``instances'' of the same type:
the body of a module functor may contain a local state that must be distinctly
initiated in every instantiation. Sometimes referred to as \emph{generativity},
the need for this dynamic form of abstraction can be illustrated by means of an
ephemeral structure to manage a given namespace in a compiler:

\begin{code}
    signature NAMESPACE =
    sig
      type symbol

      val defined : string $\rightharpoonup$ bool

      val into : string $\rightharpoonup$ symbol
      val out : symbol $\rightharpoonup$ string
      val eq : symbol * symbol $\rightharpoonup$ bool
    end
\end{code}

\begin{figure}
  \begin{code}
    functor Namespace (A : ARRAY) :> NAMESPACE =
    struct
      type symbol = int

      val table = A.new (* allocation size *)
      val defined str = (* see if [str] has already been allocated *)
      val into str = (* hash [str] and insert it into [table] if needed *)
      fun out sym =
        case A.sub (table, sym) of
        | NONE $\Rightarrow$ raise Impossible
        | SOME str $\Rightarrow$ str
    end
  \end{code}
  \caption{A functor that generates a new namespace in Standard~ML.}
\end{figure}

To manage two different namespaces, one requires two distinct copies
\textcd{NS1, NS2} of the \textcd{Namespace} structure. If it were not for the
\textcd{defined} operator, it would be safe to generate a single
\textcd{Namespace} structure and bind it to two different module variables: we
would have \textcd{NS1.symbol $\not=$ NS2.symbol} but at runtime, the same table would be
used. However, this behavior becomes observably incorrect in the presence of
\textcd{defined}, which exposes the internal state of the namespace.

The dynamic effect of initializing the namespace structure once per
instantiation has historically been treated in terms of a notion of
\emph{projectibility}~\citep{dreyer-crary-harper:2003,harper:2016}, restricting
when the components of a module expression can be projected; under the
generative semantics of module functors, a functor application is never
projectible. Projectibility, however, is not a type-theoretic concept because it
does not respect substitution!

We argue that it is substantially simpler to present the module calculus with
an explicit separation of effects via a lax modality / strong monad
$\SigCmp$; concurrent work of Crary supports the same
conclusion~\citep{crary:2020}. \ModTT{} distinguishes between commands
$M\mathrel{\div} \sigma$ and values $V : \sigma$, and mediates between them
using the standard rules of the lax modality~\citep{fairtlough-mendler:1997}:
\begin{mathpar}
  \inferrule{
    \IsSig{\Gamma}{\sigma}
  }{
    \IsSig{\Gamma}{\SigCmp{\sigma}}
  }
  \and
  \inferrule{
    \IsMod{\Gamma}{V}{\sigma}
  }{
    \IsCmp{\Gamma}{\CmpRet{V}}{\sigma}
  }
  \and
  \inferrule{
    \IsMod{\Gamma}{V}{\SigCmp{\sigma}}\\
    \IsCmp{\Gamma,X:\sigma}{M}{\sigma'}
  }{
    \IsCmp{\Gamma}{\prn{\CmpBind{X}{V}{M}}}{\sigma'}
  }
  \and
  \inferrule{
    \IsCmp{\Gamma}{M}{\sigma}
  }{
    \IsMod{\Gamma}{\ModCmp{M}}{\SigCmp{\sigma}}
  }
\end{mathpar}

In this style, one no longer needs the notion
of projectibility: a generative functor is nothing more than a module-level
function $\sigma\Rightarrow\SigCmp{\tau}$, and the result of applying such a function
must be bound in the monad before it can be used, so one naturally
obtains the generative semantics without resorting to an ad hoc notion of
``generative'' or ``applicative'' function space.

\begin{code}
    NS1 $\leftarrow$ Namespace (Array);
    NS2 $\leftarrow$ Namespace (Array); ...
\end{code}

\subsection{The phase distinction}\label{sec:phase-distinction}

The division of labor between the lightweight syntactic verification provided
by type abstraction and the more thoroughgoing but expensive verification
provided by program logics is substantiated by the phase distinction
between the static/compiletime and dynamic/runtime parts of a program respectively.
Respect for the phase distinction means that there is a well-defined notion of
static equivalence of program fragments that is independent of dynamic
equivalence; moreover, one must ensure that static equivalence is efficiently
decidable for it to be useful in practice.

\subsubsection{Explicit phase distinction}\label{sec:explicit-phase-distinction}

The phase distinction calculi of \citet{moggi:1989,harper-mitchell-moggi:1990}
capture the separation of static from dynamic in an explicit and intrinsic way: a core
calculus of modules is presented with an explicit distinction between (modules,
signatures) and (constructors, kinds) in which the latter play the role of the
static part of the former.
A signature is explicitly split into a (static) kind $k:\mathit{kind}$ and a
(dynamic) type $u : k \vdash t(u):\mathit{type}$ that depends on it, and
module value is a pair $(c,e)$ where $c : k$ and $e : t(c)$. Functions of modules
are defined by a ``twinned'' lambda abstraction $\lambda u/x.M$, and scoping
rules are used to ensure that static parts depend only on constructor variables
$u:k$ and not on term variables $x:t$.

An unfortunate consequence of the explicit presentation of phase separation is
that the rules for type-theoretic connectives (dependent product, dependent
sum) become wholely non-standard and it is not immediately clear in which sense
these actually \emph{are} dependent product or sum. For instance, one has rules
like the following for dependent product:

{
  \small
  \[
    \inferrule[pi formation*]{
      \Delta\vdash k\ \mathit{kind}\\
      \Delta,u:k \vdash k'(u)\ \mathit{kind}\\\\
      \Delta,u:k;\Gamma\vdash \sigma\prn{u}\ \mathit{type}\\
      \Delta,u:k;\Gamma,u':k'\prn{u};\Gamma \vdash \sigma'\prn{u,u'}\ \mathit{type}\\
    }{
      \Delta;\Gamma\vdash \Pi u/X : \brk{u:k. \sigma\prn{u}}. \brk{u':k'\prn{u}. \sigma'\prn{u,u'}} \equiv
      \brk{k : \prn{\Pi u : k. k'\prn{u}}; \Pi u : k. \sigma\prn{u} \to \sigma'\prn{u, v\prn{u}}}\
      \mathit{sig}
    }
  \]
}

\paragraph{The Grothendieck construction}

\citeauthor{moggi:1989} observed that the explicit phase distinction calculus
can be understood as arising from an indexed category in the following sense:
\begin{enumerate}

  \item One begins with a purely static language, \ie a category $\BCat$ whose
    objects are kinds and whose morphisms are constructors.

  \item Next one defines an indexed category
    $\Mor[\CCat]{\OpCat{\BCat}}{\CAT}$: for a kind $k$, the fiber category
    $\CCat\prn{k}$ is the collection of signatures with static part $k$, with
    morphisms given by functions of module expressions.

\end{enumerate}

Then, the syntactic category of the full calculus is obtained by the
\emph{Grothendieck construction} $\GCat = \int\Sub{\BCat}\CCat$, which takes an
indexed category to its total category. An object of $\GCat$ is a pair
$\prn{k, \sigma}$ with $k : \BCat$ and $\sigma : \CCat\prn{k}$; a morphism
$\Mor{\prn{k,\sigma}}{\prn{k',\sigma'}}$ is a morphism $\Mor[c]{k}{k'}:\BCat$
together with a morphism $\Mor{\sigma}{c^*\sigma'} : \CCat\prn{k}$, where, as
usual, $c^{\ast}$ is $\CCat{c}$.

The benefit of considering $\GCat$ is that the non-standard rules for type
theoretic connectives become a special case of the standard ones: from this
perspective, the strange \textsc{pi formation*} rule (with its nonstandard
contexts and scoping and variable twinning) above can be seen to be a certain
calculation in the Grothendieck construction of a certain dependent
product.

\subsubsection{Implicit phase distinction}

An alternative to the explicit phase separation of
\citet{harper-mitchell-moggi:1990} is to treat the module calculus as
ordinary type theory, extended by a judgment for \emph{static
equivalence}. Then, two modules are considered statically equivalent when they
have the same static part --- though the projection of static parts is defined
metatheoretically rather than intrinsically. This approach is represented by
\citet{dreyer-crary-harper:2003}.

\subsubsection{This paper: synthetic phase distinction}\label{sec:synthetic-phase-distinction}

Taking inspiration from both the explicit and implicit accounts of phase
separation, we note that the detour through indexed categories was strictly
unnecessary, and the object of real interest is the category $\GCat$ and the
corresponding fibration $\FibMor{\GCat}{\BCat}$ that projects the static
language from the full language. We obtain further leverage by additionally specifying
$\BCat$ as a slice $\Sl{\GCat}{\StOpn}$ for a special object $\StOpn : \GCat$.
In the phase-split setting, the object $\StOpn$ corresponds to a signature
$\prn{\_ : \top.\bot}$ whose static part is terminal and whose dynamic part is
initial; the intuition behind this definition is that the presence of $\bot$ at
the dynamic level ``zeroes out'' any dynamic data to its right, whereas $\top$
at the static level has no effect.

The view of $\BCat$ as a slice of $\GCat$ is inspired by Artin
gluing~\citep{sga:4}, a mathematical version of logical predicates in which
the syntactic category of a theory is reconstructed as a slice of a topos of
logical predicates: there is a very precise sense in which the notion of
``signature over a kind'' can be identified with ``logical predicate on a
kind''.
The connection between phase separation and gluing/logical predicates is, to
our knowledge, a novel contribution of this paper.

Put syntactically, the language corresponding to $\GCat$ possesses a new
context-former $\prn{\Gamma, \StOpn}$ called the ``static open'';\footnote{The
terminology of ``opens'' is inspired by topos theory, in which proof irrelevant
propositions correspond to partitions into open and closed subtopoi. Indeed,
such a partition is the geometrical prototype of the phase distinction, an insight that informs the central tool of this paper.} when
$\StOpn$ is in the context, everything except the static part of an object is ignored by the judgmental equality relation $A\equiv B$.
For instance, module commands and terms of program type are rendered purely
dynamic / statically inert by means of special rules of static
connectivity under the assumption of $\StOpn$:
\begin{mathpar}
  \inferrule[static open]{
    \IsCx{\Gamma}
  }{
    \IsCx{\Gamma,\StOpn}
  }
  \and
  \inferrule[static connectivity (1)]{
    \IsMod{\Gamma}{t}{\SigTp}\\
    \Gamma\vdash \StOpn
  }{
    \IsMod{\Gamma}{*}{t}
  }
  \and
  \inferrule[static connectivity (2)]{
    \IsMod{\Gamma}{t}{\SigTp}\\
    \IsMod{\Gamma}{e}{t}\\
    \Gamma\vdash \StOpn
  }{
    \EqMod{\Gamma}{e}{*}{t}
  }
  \and
  \inferrule[static connectivity (3)]{
    \IsSig{\Gamma}{\sigma}\\
    \Gamma\vdash \StOpn
  }{
    \IsCmp{\Gamma}{*}{\sigma}
  }
  \and
  \inferrule[static connectivity (4)]{
    \IsSig{\Gamma}{\sigma}\\
    \IsCmp{\Gamma}{M}{\sigma}\\
    \Gamma\vdash \StOpn
  }{
    \EqCmp{\Gamma}{M}{*}{\sigma}
  }
\end{mathpar}

\paragraph{Signatures, kinds, and static equivalence}

In our account, the phase distinction between signatures/modules and
kinds/constructors is expressed by a universal property: a signature
$\Gamma\vdash\sigma\ \mathit{sig}$ is called a kind iff the weakening of sets
of equivalence classes from $\brc{\brk{V}\mid \Gamma\vdash V : \sigma}$ to
$\brc{\brk{V}\mid \Gamma,\StOpn\vdash V:\sigma}$ is an isomorphism natural in
$\Gamma$. In other words, the exponentiation by $\StOpn$ defines an \emph{open
modality} $\MSt = \prn{\IHom{\StOpn}{-}}$ in the sense of topos theory.

Because the modality $\MSt$ is idempotent, we may define (internally!) the
static part of any signature $\sigma$ as $\MSt{\sigma}$; the
modal unit $\Mor[\upeta\Sub{\MSt}]{\sigma}{\MSt{\sigma}}$ abstractly implements
the projection of constructors from module values. Because the modality $\MSt$
is defined by exponentiation with a subterminal (\ie a proof-irrelevant
sort), it is easy to show internally that the usual equations of static
projection hold (naturally, up to isomorphism): for instance, we have $\MSt{\sigma\Rightarrow \tau} \cong
\MSt{\sigma}\Rightarrow\MSt{\tau}$, \etc.

The notion of static equivalence from \citet{dreyer-crary-harper:2003} is then
reconstructed as ordinary judgmental equality in the context of $\StOpn$;
the view of phase separation as a projection functor from \citet{moggi:1989} is
reconstructed by the weakening $\Mor{\GCat}{\Sl{\GCat}{\StOpn}}$.

\subsection{Sharing constraints, singletons, and the \emph{static extent} connective}

An important practical aspect of module languages is the ability to constrain the
identity of a substructure; for instance, the implementation of IP in the
FoxNet protocol stack~\citep{foxnet:1994} is given as a functor taking two structures as arguments
\emph{under the additional constraint} that the structures have compatible type components:

\begin{code}
    functor Ip
      (structure Lower : PROTOCOL
       structure B : FOX\_BASIS
         where type Receive\_Packet.T = Lower.incoming\_message
       ...)
\end{code}

\subsubsection{Sharing as pullback}

The above fragment of the input to the \textcd{Ip} functor can be viewed as a pullback of two
signatures along type projections, rather than a product of two signatures:
\[
  \DiagramSquare{
    nw/style = pullback,
    width = 5cm,
    height = 1.5cm,
    ne = \mathcd{PROTOCOL},
    sw = \mathcd{FOX\_BASIS},
    se = \mathcd{type},
    south = \mathcd{.Receive\_Packet.T},
    east = \mathcd{.incoming\_message},
    north = \mathcd{.Lower},
    west = \mathcd{.B}
  }
\]

The view of sharing in terms of pullback or equalizers, proposed by
\citet{mitchell-harper:1988}, is perfectly appropriate from a semantic
perspective; however, it unfortunately renders type
checking undecidable~\citep{castellan-clairambault-dybjer:2017}. Because types
in ML-style languages are meant to provide \emph{lightweight} verification, it
is essential that the type checking problem be tractable: therefore, something
weaker than general pullbacks is required. Semantically speaking, what one
needs is roughly pullback along display maps only, \ie equations that can
be oriented as definitions.

\subsubsection{Type sharing via singletons}

A strategy more well-adapted to implementation is to elaborate type sharing in
a way that involves a new singleton type signature $\mathcal{S}\prn{t}\
\mathit{sig}$ for each $t : \mathcd{type}$, as pioneered
by~\citet{harper-stone:2000}.  There is up to judgmental equality exactly
one module of signature $\mathcal{S}\prn{t}$, namely $t$ itself; in contrast to
general pullbacks, the singleton signature does not disrupt the decidability of
type equivalence~\citep{stone-harper:2006,abel-coquand-pagano:2009}.

The truly difficult part of singleton types, dealt with by
\citet{stone-harper:2006}, is their subtyping and re-typing principles: not
only should it be possible to pass from a more specific type to a less specific
type, it must also be possible to pass from a less specific type to a more
specific type when the identity of the value is known. Because of the
dependency involved in the latter transition, ordinary subtyping is not
enough to account for the full expressivity of singletons, hence the
extensional retyping principles of earlier work on singleton
calculi~\citep{dreyer-crary-harper:2003,crary:2019}.

As a basic principle, we do not treat subtyping or retyping directly in the
core type theory: we intend to give an \emph{algebraic} account of program
modules, so both subtyping and retyping become a matter of elaborating
coercions.  We propose to account for both the subtyping and retyping
principles via an elaboration algorithm guided by the $\eta$-laws of each
connective, including the $\eta$-laws of the singleton type connective.
Early evidence that our proposal is tractable can be found in the
implementation of the \texttt{\textcolor{RegalBlue}{cool}tt} proof assistant for cubical type theory, which treats a
generalization of singleton types via such an
algorithm~\citep{cooltt:2020}.\footnote{An example of the application of
\texttt{\textcolor{RegalBlue}{cool}tt}'s elaboration algorithm to the subtyping and retyping of
singletons can be found here:
\url{https://github.com/RedPRL/cooltt/blob/7be1bb32f8b0eaae75c5a11f1c1c5b0ff1086c94/test/selfification.cooltt}.}

\subsubsection{General sharing via the \emph{static extent}}

It is useful to express the compatibility of components of modules other than types:
families of types (e.g. the polymorphic type of lists)
are one example, but arguably one should be able to express a sharing
constraint on an entire substructure. Type theoretically, it is trivial to
generalize the type singletons in this direction, but we risk incurring static
dependencies on dynamic components of signatures, violating the spirit of the
phase distinction.

One of the design constraints for module systems, embodied in the phase
distinction, is that dependency should only involve static constructs;
the decidable fragment of the dynamic algebra of programs is
unfortunately too fine to act as more than an obstruction to the composition of
program components.
From our synthetic view of the phase distinction, it is most natural to rather
generalize the type singletons to a signature connective $\SigWhere{\sigma}{V}$
that classifies the ``static extent'' of a module $V:\sigma$ for an arbitrary
signature $\sigma$, summarized in the following rules of
inference:\footnote{For simplicity, we present these rules in a style that
violates uniqueness of types; the actual encoding in the logical framework is
achieved using explicit introduction and elimination forms.}
\begin{mathpar}
  \inferrule[formation]{
    \IsSig{\Gamma}{\sigma}\\\\
    \IsMod{\Gamma, \StOpn}{V}{\sigma}
  }{
    \IsSig{\Gamma}{\SigWhere{\sigma}{V}}
  }
  \and
  \inferrule[introduction]{
    \IsMod{\Gamma}{U}{\sigma}\\\\
    \EqMod{\Gamma,\StOpn}{U}{V}{\sigma}
  }{
    \IsMod{\Gamma}{U}{\SigWhere{\sigma}{V}}
  }
  \and
  \inferrule[elimination]{
    \IsMod{\Gamma}{U}{\SigWhere{\sigma}{V}}
  }{
    \IsMod{\Gamma}{U}{\sigma} \\
    \EqMod{\Gamma,\StOpn}{U}{V}{\sigma}
  }
\end{mathpar}

In \ModTT{}, the elements of the \emph{static extent} of a module $V:\sigma$ are all
the modules whose static part is judgmentally equal to $V$; therefore
$\SigWhere{\sigma}{V}$ is not a singleton in general, but it is a
singleton when $\sigma$ is purely static. Our approach is equivalent to (but
arguably more convenient than) the use of singleton kinds: the static extent
is admissible under the explicit phase distinction.

\paragraph{Extension types in cubical type theory}

Our static extent connective is inspired by the \emph{extension types} of
\citet{riehl-shulman:2017}, already available in a few implementations of
cubical type theory~\citep{redtt:2018,cooltt:2020}.
Whereas in cubical type theory one extends along a cofibrant subobject
$\Mor|>->|{\phi}{\mathbb{I}^n}$ of a cube, in a phase
separated module calculus one extends along the open domain
$\Mor|>->|{\StOpn}{\mathbf{1}}$. The static extent connective is also closely
related to the formal disk bundle of \citet{wellen:2017}, which
classifies the ``infinitesimal extent'' of a given point in synthetic
differential (higher) geometry.

\paragraph{Strong structure sharing \`a la SML~'90}
Another account of the sharing of structures is argued for in earlier versions
of Standard~ML~\citep{milner-tofte-harper:1990}, in which each structure is in
essence tagged with a static identity~\citep{macqueen-harper-reppy:2020}; this
``strong'' structure sharing was replaced in SML~'97 by the current ``weak''
structure sharing, which has force only on the static components of the
signature~\citep{milner-tofte-harper-macqueen:1997}. Our static extents capture exactly the semantics of weak
structure sharing; we note that the strong sharing of
SML~'90 can be simulated by adding a dummy abstract type to each signature
during elaboration.

\subsection{Proof-relevant parametricity: the objective metatheory of ML modules}

We outline an approach to the definition and metatheory of a calculus for
program modules, together with a modernized take on logical relations /
Tait computability that enables succinct proofs of representation independence
and parametricity results.

\subsubsection{Algebraic metatheory in an equational logical framework}\label{sec:algebraic-metatheory}

Many existing calculi for program modules are formulated using raw terms, and
animated via a mixture of judgmental equality (for the module layer) and
structural operational semantics (for the program layer). In contrast, we
formulate \ModTT{} entirely in an equational logical framework,\footnote{Though
we present it using standard notations for readability.} eschewing raw terms
entirely and \emph{only} considering terms up to typed judgmental equality.
Because we have adopted a modal separation of effects
(\cref{sec:modal-abstraction}), there is no obstacle to accounting for genuine
computational effects in the program layer, even in the purely equational
setting~\citep{staton:2013}.

The mechanization of Standard
ML~\citep{lee-crary-harper:2007,crary-harper:2009} in the Edinburgh Logical
Framework~\citep{harper-honsell-plotkin:1993} is an obvious precursor to our
design; whereas in the cited work, the LF's function space was used to encode
the binding structure of raw terms and derivations, we employ the internal language of
locally Cartesian closed categories as a logical framework to account for both
typing and judgmental equality of abstract terms, as explicated by
\citet{gratzer-sterling:2020}. The idea of dependently typed equational logical
frameworks goes back to \citet{cartmell:1978} (for theories without binding),
and was further developed by Martin-L\"of for theories with binding of
arbitrary order~\citep{nordstrom-peterson-smith:1990}.
Because we work only with typed terms up to judgmental equality, we may use
\emph{semantic} methods such as Artin gluing to succinctly prove syntactic
results as in several recent
works~\citep{altenkirch-kaposi:2016:nbe,coquand:2019,kaposi-huber-sattler:2019,coquand-huber-sattler:2019,
sterling-angiuli-gratzer:2019,sterling-angiuli-gratzer:2020,sterling-angiuli:2020}.

The effectiveness of algebraic methods relies on the existence of initial
algebras for theories defined in a logical framework. The existence of initial
algebras is not hard to prove and usually follows from
standard results in category theory.
That an initial algebra can be \emph{presented} by a quotient of raw syntax is
more laborious to prove for a given logical framework (see
\citet{streicher:1991} for a valiant effort); such a result is the combination
of soundness and completeness.

It comes as a pleasant surprise, then, that the syntactic presentation of the
core language is not in practice germane to the study of real type theories
and programming languages: the only raw syntax one need be concerned with is
that of the surface language, but the surface language is almost
never expected to be complete for the core language, or even to have
meaning independently of its elaboration into the core language.
The fulfillment of any such expectation is immediately obstructed by the myriad
non-compositional aspects of the elaboration of surface languages, including
not only the use of unification to resolve implicit arguments and coercions,
but also even the complex name resolution scopes induced by ML's \mathcd{open}
construct.

\subsubsection{Artin gluing and logical relations}

Logical relations, or Tait computability~\citep{tait:1967}, is a method by
which a relation on terms of base type is equipped with a canonical
\emph{hereditary} action on type constructors. The hereditary action can be
seen as a generalization of the induction hypothesis that allows a non-trivial
property of base types to be proved, a perspective summarized in Harper's
tutorial note~\citep{harper:2019:tait}. For instance, let $R\Sub{\TpBool}\subseteq
\mathsf{ClosedTerms}\prn{\TpBool}$ be the property of being either $\ValTt$ or
$\ValFf$; one shows that $R\Sub{\TpBool}$ holds of every closed boolean by lifting
it to each connective in a compositional way:
\[
  f \in R\Sub{\sigma\to \tau} \Leftrightarrow
  \forall x\in R\Sub{\sigma}. f(x) \in R\Sub{\tau}
\]

Other properties (like parametricity) lift to
the other connectives in a similar way. The main obstruction to
replacing this method by a general theorem is the fact that programming
languages are traditionally defined in terms of hand-coded raw terms and operational
semantics; for languages defined in this way, there is \emph{a priori} no way
to factor out the common aspects of logical relations.

In an algebraic setting, however, the syntax of a programming language
is embodied in a particular category equipped with various structures
characterized by universal properties (as detailed in
\cref{sec:algebraic-metatheory}). Here, it is possible to replace the
\emph{method} of logical relations with a \emph{general theory} of logical
relations, namely the theory of Artin gluing.
First developed in the 1970s by the Grothendieck school for the purposes of
algebraic geometry~\citep{sga:4}, Artin gluing can be viewed as a tool to
``stitch together'' a type theory's syntactic category with a category of
semantic things, leading to a category of ``families of semantic things indexed
in syntactic things''. Logical relations are then the proof-irrelevant special
case of gluing, where families are restricted to have subsingleton fibers.

\begin{example}[Canonicity by global sections]
  For instance, let $\CCat$ be the category of contexts and substitutions for a
  given language; the global sections functor $\Mor[\Hom{1}{-}]{\CCat}{\SET}$
  takes each context $\Gamma:\CCat$ to the set $\Hom{1}{\Gamma}$ of closed
  substitutions for $\Gamma$. Then, the gluing of $\CCat$ along $\Hom{1}{-}$
  is the category $\GCat$ of pairs $\prn{\Gamma,\tilde{\Gamma}}$ where
  $\tilde{\Gamma}$ is a family of sets indexed in closing substitutions for
  $\Gamma$; given a closing substitution $\gamma\in\Hom{1}{\Gamma}$, an element
  of the fiber $\tilde\Gamma_\gamma$ should be thought of as evidence
  that $\gamma$ is ``computable''. An object of $\GCat$ is called a
  \emph{computability structure} or a \emph{logical family}.

  The fundamental lemma of logical relations is located in the proof
  that $\GCat$ admits the structure of a model of the given type theory, and
  that the projection functor $\FibMor{\GCat}{\CCat}$ is a homomorphism of
  models. In particular, one may choose to define the $\GCat$-structure of the
  booleans to be the following, letting $q : 2 \to \Hom{1}{\TpBool}$ be the
  function determined by the pair of closed terms $\prn{\ValTt, \ValFf}$:
  \[
    \prn{
      \TpBool,
      \DelimMin{1}
      \brc{i : 2 \mid q(i) = b}\Sub{\prn{b\in \Hom{1}{\TpBool}}}
    }
  \]

  Then, by the fundamental lemma, every closed boolean is either $\ValTt$ or $\ValFf$.
\end{example}

\begin{example}[Binary logical relations on closed terms]
  Rather than gluing along the global sections functor $\Hom{1}{-}$, one may
  glue along $\Hom{1}{-}\times\Hom{1}{-}$: then a computability structure over context
  $\Gamma$ is a family of sets $\tilde{\Gamma}$ indexed in pairs of
  closing substitutions for $\Gamma$. An ordinary binary logical relation is,
  then, a computability structure $\tilde{\Gamma}$ such that each fiber
  $\tilde{\Gamma}\Sub{\gamma,\gamma'}$ is subsingleton.
\end{example}

Because traditional logical relations are defined on raw terms rather than
judgmental equivalence classes thereof, their substantiation requires a great
deal of syntactical bureaucracy and technical lemmas. By working abstractly
over judgmental equivalence classes of typed terms, Artin gluing sweeps away
these inessential details completely, but this is only possible by virtue of
the fact that Artin gluing treats \emph{families} (proof-relevant relations) in
general, rather than only proof-irrelevant relations: the computability of a
given term is a structure with evidence, rather than just a property of the
term.

The proof relevance is important for many applications: for instance, a redex
and its contractum lie in the same judgmental equivalence class, so it would
seem at first that there is no way to treat normalization in a super-equational
way. The insight of \citet{fiore:2002,altenkirch-hofmann-streicher:1995} from
the 1990s is that normal forms can be presented as a \emph{structure} over
equivalence classes of typed terms, rather than as a \emph{property} of raw
terms.  In many cases, the structures end up being fiberwise subsingleton, but
this usually cannot be seen until after the fundamental lemma is proved.

An even more striking use of proof relevance, explained by \citet{shulman:blog:scones-logical-relations,shulman:2015} and \citet{coquand:2019},
is the computability interpretation of
universes. A universe is a special type $\Univ$ whose elements
$A:\Univ$ may be regarded as types $\mathsf{El}\prn{A}\ \mathit{type}$;
in order to substantiate the part of the fundamental lemma that expresses
closure under $\mathsf{El}\prn{-}$, we must have a way to extract a logical
relation over $\mathsf{El}\prn{A}$ from each computable element $A:\Univ$. This would
seem to require a ``relation of relations'', but there can be no such thing:
the fibers of relations are subsingleton.

In the past, some type theorists have accounted for the logical relations of
universes by parameterizing the construction in the graph of an assignment of
logical relations to type codes~\citep{allen:1987:lics}, or by using
induction-recursion; either approach, however, forces the universe to be closed
and inductively defined --- disrupting certain applications of logical
relations, including parametricity. The proof relevance accorded by Artin
gluing offers a more direct solution to the problem: one can always have a
``family of small families''. This insight is also employed in proof-relevant
models of parametricity as discussed in \cref{sec:related:proof-relevant}.

\subsubsection{Synthetic Tait computability for phase separated parametricity}

For a specific type theory, the explicit construction of the gluing
category and the substantiation of the fundamental lemma can be quite
complicated. A major contribution of this paper is a synthetic version
of type-theoretic gluing that situates type theories and their logical relations in the
language of topoi, where we have a wealth of classical results to draw
on~\citep{sga:4,johnstone:2002}: surprisingly, these classical results suffice
to eliminate the explicit and technical constructions of logical
relations and their fundamental lemma, replacing them with elementary
type-theoretic arguments (\cref{sec:parametricity-judgments}).

Following the methodology pioneered (in another context) by
\citet{orton-pitts:2016}, we axiomatize the structure required to work
synthetically with phase separated proof-relevant logical relations
(``parametricity structures''): in \cref{sec:paramtt}, we specify a dependent type
theory \PSTT{} in which every type can be thought of as a parametricity
structure.\footnote{The type theory of synthetic parametricity structures will
turn out to be the internal language of a certain topos $\XTop$, to be defined
in \cref{sec:topos}.} To substantiate the view of \emph{logical relations as
types} we extend \PSTT{} with the following
constructs:

\begin{enumerate}

  \item A proof-irrelevant proposition $\SynOpn$ called the \emph{syntactic
    open} that splits into two disjoint parts $\SynOpn = \LOpn\lor\ROpn$
    corresponding to the left and right components of binary parametricity.
    Then, given a synthetic parametricity structure $A$, we may project the
    \emph{syntactic part} of $A$ as $\MSyn{A} = \prn{\IHom{\SynOpn}{A}}$. It is
    easy to see that $\MSyn$ defines a lex (finite limit preserving) idempotent
    monad, and furthermore commutes with dependent products; a modality defined
    in this way is called an \emph{open modality}.
    Then, a parametricity structure $A$ is called \emph{purely syntactic} if
    the unit $\Mor{A}{\MSyn{A}}$ is an isomorphism.

  \item A proof-irrelevant proposition $\StOpn$ called the
    \emph{static open}; then, given a synthetic parametricity structure $A$,
    the \emph{static part} of $A$ is projected by $\MSt{A} =
    \prn{\IHom{\StOpn}{A}}$, and a \emph{purely static} parametricity structure
    $A$ is one for which $\Mor{A}{\MSt{A}}$ is an isomorphism.

  \item An embedding $\floors{-}$ of \ModTT{}'s syntax as a collection of
    purely syntactic types and functions, such that for any sort $T$ of
    \ModTT{}, the static projection commutes with the embedding: $\floors{\IHom{\StOpn}{T}} \cong \IHom{\StOpn}{\floors{T}}$.

\end{enumerate}

We may then form complementary \emph{closed modalities} $\MSem,\MDyn$ to the
open modalities $\MSyn,\MSt$ that allow one to project the semantic and dynamic
parts respectively of a synthetic parametricity structure, as summarized in
\cref{fig:synthetic-parametricity-structures}. The explanation of their meaning
will have to wait, but we simply note that the ``semantic modality'' $\MSem$ is
the universal way to trivialize the syntactic part of a parametricity
structure, and the ``dynamic modality'' $\MDyn$ is the universal way to
trivialize the static part of a parametricity structure.

\begin{figure}
  \begingroup\small
  \begin{tabular}{llll}
    \textbf{Propositions:} & $\LOpn,\ROpn,\SynOpn,\StOpn:\Omega$ & $\SynOpn=\LOpn\lor\ROpn$ & $\LOpn\land\ROpn=\bot$\\[8pt]
    \textbf{Open modalities:} & $\MSyn,\MSt:\IHom{\Univ}{\Univ}$ & $\MSyn{A}=\IHom{\SynOpn}{A}$ & $\MSt{A} = \IHom{\StOpn}{A}$\\[8pt]
    \textbf{Closed modalities:} & $\MSem,\MDyn:\IHom{\Univ}{\Univ}$ & $\MSem{A}=A\sqcup\Sub{A\times\SynOpn}\SynOpn$ & $\MDyn{A} = A\sqcup\Sub{A\times\StOpn}\StOpn$
  \end{tabular}
  \endgroup

  \caption{A summary of the structure available in the internal language of a
  topos of \emph{synthetic phase separated parametricity structures}. Above,
  $A\sqcup\Sub{A\times\phi}\phi$ is the pushout along the product projections of
  $A\times\phi$. In \cref{sec:closed-modalities,lem:connectedness}, we prove
  the closed modalities are complementary to the open modalities in the sense
  that $\MSyn{\MSem{A}} \cong \ObjTerm{}$ and $\MSt{\MDyn{A}} = \ObjTerm{}$.}
  \label{fig:synthetic-parametricity-structures}
\end{figure}

\paragraph{Synthetic vs.\ analytic Tait computability}

Traditional analytic accounts of Tait computability proceed by defining
exactly how to \emph{construct} a logical relation out of more primitive things
like sets of terms. In contrast, our synthetic viewpoint emphasizes what can be
\emph{done} with a logical relation: the syntactic and semantic parts can be
extracted and pieced together again. The former primitives, such as sets of
terms, then arise as logical relations $A$ such that $A\cong \MSyn{A}$.

Just as Euclidean geometry takes lines and circles as primitives rather than
point-sets, the synthetic account of Tait computability takes the notion of
logical relation as a primitive, characterized by what can be done with
it. Perhaps surprisingly, we have found that all aspects of standard
computability models can be reconstructed in the synthetic setting in a less
technical way.

\NewDocumentCommand\OneML{}{\textsf{1ML}}
\subsection{Discussion of related work}

\subsubsection{\texorpdfstring{\OneML}{1ML} and F-ing Modules}

Most similar in spirit to our module calculus is that of
\OneML{}~\citep{rossberg:2018}, which, as here, uses a universe to represent a
signature of ``small'' types of run-time values. Although \ModTT{} does not
have first class modules, there is no obstacle to supporting the packaging of
modules of small signature into a type. \OneML{} also features a module
connective analogous to the static extent, though the universal property of
this connective is not explicated --- in fact, declarative rules for neither
typing nor equality of modules are specified by
\citet{rossberg-russo-dreyer:2014,rossberg:2018}.
Consequently, the most substantial difference between \ModTT{} and \OneML{} is
that the latter is defined by its translation into System~F$\omega$, whereas
\ModTT{} is given intrinsically as an algebraic theory that expresses equality
of modules, with a modality to confine attention to their static parts.
To be sure, it is elegant and practical to consider the compilation of modules
by a phase-separating translation, as was done for example
by~\citet{petersen:2005}.  Nevertheless, it is also important to give a direct
type-theoretic account of program modules \emph{as they are to be used and
reasoned about}.

It is true that our language too would require elaboration to be usable in
practice, but elaboration here is needed only to introduce subtyping. The
transformation of source code into core language code therefore preserves the
intuitive meanings of all modular constructs. For instance, the meaning of
module hierarchy in our calculus is simply dependent sum; in contrast, the
meaning of a module hierarchy in the F-ing calculi can only be understood by
unraveling the somewhat complex relationship between module signatures and the
$n$-ary iterations of existential types they denote.

A very elegant contribution of \citet{rossberg-russo-dreyer:2014} is a more
compositional reduction of the \emph{dot notation} \cd{M.t} to existential
unpacking than that of \citet{cardelli-leroy:1990}. In light of the fact that
an existential encoding of dependent sums cannot satisfy the $\eta$-law,
however, we find that \citeauthor{macqueen:1986}'s intervention remains in
force today: abstract types do \emph{not} have existential type (\emph{pace}
\citet{mitchell-plotkin:1985,mitchell-plotkin:1988}). We welcome further
exploration of the F-ing interpretation's equational theory, whose relationship
to the equational theory of modules themselves remains somewhat opaque.

\subsubsection{Modules, Abstraction, and Parametric Polymorphism}

In a pair of recent papers~\citep{crary:2017,crary:2019}, Crary develops (1)
the relational metatheory of a calculus of ML modules and (2) a fully abstract
compilation procedure into a version of System~F$\omega$.  Although our two
calculi have similar expressivity, the rules of \ModTT{} are simpler and more
direct; in part, this is because subtyping and retyping are shifted into
elaboration for us, but we also remark that Crary has placed side conditions on
the rules for dependent sums to ensure they only apply in the non-dependent
case, which are unnecessary in \ModTT{}. Crary, however, treats general
recursion at the value level, which we have not attempted in this paper.
In more recent work \citet{crary:2020} joins us in advocating that module projectibility be
reconstructed in terms of a lax modality.

Crary's account of parametricity, the first to rigorously substantiate an
abstraction theorem for modules, achieves a similar goal to our work, but is
much more technically involved.  In particular we have gained much leverage
from working over equivalence classes of typed terms, rather than using
operational semantics on untyped terms --- in fact, our entire development
proceeds without introducing any technical lemmas whatsoever. Another advantage
of our approach is the use of proof relevance to account directly for strong
sums over the collection of types; working in a proof-irrelevant setting, Crary
must resort to an ingenious staging trick in which classes of precandidates are
first defined for every kind, and then the candidates for module signatures are
relations between a pair of module values \emph{and} a precandidate. This can
be seen as a defunctionalization of the proof-relevant interpretation of kinds
by \citet{atkey:2012}, and is not likely to scale to more universes.

\subsubsection{Applicative functor semantics in OCaml}

The interaction between effects and module functors lies at the heart of nearly
all previous work on modules. Leroy proposed an applicative semantics
for module functors~\citep{leroy:1995}, later used in OCaml's module
system~\citep{ocaml:system-manual:4.10.0}: whereas generative functors can be
thought of as functions $\sigma\Rightarrow\SigCmp{\tau}$, applicative functors
correspond roughly to $\SigCmp*{\sigma\Rightarrow\tau}$ as noted by \citet{shao:1999}, but subtleties abound.
The subtleties of applicative and generative functor semantics (studied by
\citet{dreyer-crary-harper:2003} as \emph{weak} and \emph{strong} sealing) are
mostly located in the view of sealing as a computational effect: how can a
structure be ``pure'' if a substructure is sealed? In contrast, we view sealing
in the sense of static information loss as a (clearly pure) projection function
inserted during typechecking, using the user's signature annotations as a
guide. By decoupling sealing from the effect of generating a fresh
abstract type, we obtain a simpler and more type-theoretic account of
generativity embodied in the lax modality.

\subsubsection{Proof-relevant relational interpretation}\label{sec:related:proof-relevant}

We are not the first to consider proof-relevant relational interpretations,
which are essential in the context of dependent type theory because they enable
a compositional interpretation of the universe, an insight employed by
\citet{atkey-ghani-johann:2014,nuyts-vezzosi-devriese:2017}. \citet{atkey:2012}
uses the same insight in his interpretation of \emph{kinds} as reflexive
graphs, with the kind of types given by the reflexive graph of set-theoretic
relations. \citet{sojakova-johann:2018} define a
general framework for parametric models, which can be instantiated to give rise
to a proof-relevant version of parametricity.
\citet{benton-hofmann-nigam:2013,benton-hofmann-nigam:2014} use proof-relevant
logical relations to work around the fact that logical relations involving an
existential quantifier rarely satisfy an important closure condition known as
admissibility, a problem also faced by \citet{crary:2017}.  In the
proof-irrelevant setting this can be resolved either by using continuations
explicitly or by imposing a biorthogonal closure condition that amounts to much
the same thing.

\subsubsection{Syntactic \emph{vs.}\ semantic parametricity}\label{sec:semantic-parametricity}

Parametricity has historically been studied in two forms: the present work is
about \emph{syntactic parametricity} aims to establish identifications between
definable terms within a theory, whereas \emph{semantic parametricity} aims to
establish the compatibility of a theory with certain identifications
by means of a model. Syntactic parametricity can be construed somewhat crudely
as being about \emph{counting} the number of definable elements of a given
type, whereas semantic parametricity has no bearing at all on the question of
how many elements a given type has.
The relationship between the semantic and syntactic parametericity is somewhat
analogous to the difference between a model of typed lambda calculus that
interpets \cd{bool} as $\mathbf{1}+\mathbf{1}$, and a logical relations or gluing
argument that establishes that there are exactly two distinct closed terms of
type \cd{bool}. The former establishes the compatiblity of the language with
the standard booleans, whereas the latter establishes that the language
actually has the standard booleans.

A particularly attractive model of semantic parametricity is given in
\emph{reflexive
graphs}~\citep{robinson-rosolini:1994,atkey:2012,atkey-ghani-johann:2014}. A
reflexive graph is given by an object $E$ of edges, an object $N$ of nodes, two
morphisms $\Mor[\pi_L,\pi_R]{E}{N}$, and a morphism $\Mor[r]{N}{E}$ that is a
section of both $\pi_L,\pi_R$. Reflexive graphs can be seen to be a
proof-relevant generalization of reflexive relations, because the two boundary
projections can be viewed a single map $\Mor{E}{N\times N}$ which need not be a
monomorphism.  As a category of diagrams, reflexive graphs give rise to a
presheaf topos, and hence they have the advantage of being closed under
universes as discussed above (\cref{sec:related:proof-relevant}).

\subsubsection{Internal parametricity}

Abstracting from the semantics of parametricity, it is possible to consider
extensions of dependent type theory for \emph{internal parametricity} that
involve connectives for
relatedness~\citep{krishnaswami-dreyer:2013,bernardy-moulin:2012,bernardy-moulin:2013,bernardy-coquand-moulin:2015,nuyts:2018,cavallo-harper:2020,cavallo:2021}.
Semantic parametricity, especially as embodied in reflexive graphs, can be seen
to be a truncation of a much higher-dimensional structure; going one level up,
one can consider a reflexive graph enriched in reflexive graphs, but there is
no need to stop there. Iterating the reflexive graph construction infinitely,
one gains the ability to speak non-trivially of relatedness of proofs of
relatedness, and so on. Abstracting from these semantics, one obtains
higher-dimensional relatedness connectives as in the work of
\citet{bernardy-coquand-moulin:2015,cavallo-harper:2020}.

As we noted in \cref{sec:semantic-parametricity}, semantic parametricity
differs from syntactic parametricity in that it does not prove any non-trivial
property of a \emph{language}. Internal parametricity can be seen as an extreme
way to resolve this deficiency, by defining a new language that is inspired by
the parametric model.

\subsubsection{Representation independence via univalence}

The \emph{principle of invariance} is a law of structural mathematics stating
that all definable constructs ought to be invariant under
isomorphism~\citep{awodey:2014}. The closure of formal languages for
mathematics under this principle can be seen to be somewhat analogous to
parametricity arguments in which one restricts attention to relations that are
the graphs of isomorphisms, as pointed out by
\citeauthor{martin-lof:nagel:2013} in his Ernest Nagel
Lecture~\citep{martin-lof:nagel:2013}. Voevodsky's \emph{univalence}
principle~\citep{voevodsky:2006,hottbook} is the internalization of this
invariance into a new formal language for mathematics, Homotopy Type Theory /
Univalent Foundations~\citep{hottbook}.

Univalence states that isomorphic types are interchangeable; as a programming
tool, univalence allows one to replace any goal of the form $P(A)$ with one of
the form $P(B)$ provided as one has an isomorphism $A\cong B$. While
univalence is stated only for isomorphisms $A\cong B$ between types, the
principle also applies to structure-preserving isomorphisms between
implementations of \emph{abstract types}, \ie types equipped with operations.
In this way, univalence is an internal parametricity principle for abstract
types \emph{vis-\`a-vis} isomorphisms rather than arbitrary relations.

\citet{acmz:2021} demonstrate that the limitation of univalence to isomorphisms
is not a serious one in practice as far as representation independence is
concerned; on the other hand, actual parametricity is still needed to obtain
\emph{free theorems} in the sense of \citet{wadler:1989}. Most practical
examples of representation independence where the relation is not an
isomorphism can be ``upgraded'' to a structure-preserving isomorphism by
quotienting the representation types on either side. The contribution of
\citet{acmz:2021} is to identify sufficient conditions on a relation required
for such an upgrade to take place, and to develop a library of lemmas and proof
tactics that facilitate the use of univalence to prove internal representation
independence results.

\subsubsection{The Plotkin--Abadi parametricity logic}

Similar in spirit to our efforts is the Plotkin--Abadi logic for parametricity
polymorphism~\citep{plotkin-abadi:1993}, which overlays a logic over System~F
that includes not only the equational theory of System~F but an additional
non-logical axiom scheme for parametricity.  The Plotkin--Abadi logic relies on
a built-in parametricity translation: for each family of System~F types
$\alpha\vdash \sigma\brk{\alpha}$ and relation $R\subseteq \tau_L\times\tau_R$,
there is a relation $\sigma\bbrk{R} \subseteq \sigma\brk{\tau_L}\times
\sigma\brk{\tau_R}$ defined inductively on the structure of $\sigma[\alpha]$.
Then the parametricity axiom scheme asserts for for any polymorphic program $u
: \forall \alpha.\sigma\brk{\alpha}$ and any relation $R\subseteq
\tau_L\times\tau_R$, we have $\prn{u\brk{\tau_L},u\brk{\tau_R}} \in
\sigma\bbrk{R}$.
The resulting logic can be used to refine and prove theorems about System~F
programs that would follow from parametricity. \citet{birkedal-mogelberg:2005}
provide a category-theoretic notion of parametricity that is sound and complete
for the Plotkin--Abadi logic.

Our work can be seen as a more systematic way to recover the consequences of
parametricity. \citet{plotkin-abadi:1993} need to \emph{axiomatize}
parametricity atop a built-in parametricity translation embodied in
the relational instantiation $\sigma\bbrk{R}$. In contrast, we derive
parametricity results from more general considerations; in particular, the
connection between types and relations that lies at the heart of the
Plotkin--Abadi axiom is reflected in our setting by the classic \emph{fracture
theorem} for recollments~\citep{sga:4,rijke-shulman-spitters:2017}. On the
other hand, we do \emph{not} deal with impredicative polymorphism and hence our
approach does not account for System~F; it is plausible that this could be
resolved by replaying our constructions over a realizability topos as in the work of
\citet{pitts:1987,hyland:1988}, but this remains to be verified and is by no means obvious.

\subsubsection{Parametricity translations}

Related to our synthetic account of logical relations, in which the relatedness
of two programs is substantiated by a third program, is the tradition of
parametricity translations exemplified by
\citet{bernardy-jansson-paterson:2012,
pedrot-tabareau-fehrman-tanter:2019,tabareau-tanter-sozeau:2018}, also taken up
by Per Martin-L\"of in his Ernest Nagel Lecture in
2013~\citep{martin-lof:nagel:2013}. In the case of unary parametricity (logical
predicates), one has the \emph{deliverables} translation described by
\citet{mckinna-burstall:1993} and investigated fibrationally by
\citet{hermida:1993}.
Closely related to the binary parametricity translations is the System~R calculus of
\emph{formal parametricity} by \citet{abadi-cardelli-curien:1993}. To put it
somewhat crudely, System~R is a language for programming in the image of a
parametricity translation.

One methodological difference between these and our work is that the
parametricity translations are analytic, explicitly transforming types into
(proof-relevant) logical relations, whereas our theory of parametricity
structures is synthetic: we assume that everything in sight is a logical
relation, and then \emph{modally} isolate the ones that are degenerate in
either the syntactic or semantic direction.

Another significant difference between our work and the parametricity
translations (as well as internal parametricity) is that we account for the
quite common situation in which the semantic parts of parametricity structures
come from an entirely different category than their syntactic parts; this is
important, because there \emph{is} a difference between (e.g.) parametricity
with respect to definable relations, and parametricity with respect to
set-theoretic relations on closed terms. In fact, we make use of this
flexibility in this paper by considering parametricity with respect to
\emph{phase-separated} relations on (phase-separated) closed terms.

\subsubsection{Doubling the syntax}

In \cref{sec:topos} we consider the copower $2\cdot\PrTop{\ThCat}$ of a topos
$\PrTop{\ThCat}$ representing the syntax of \ModTT{}; this ``doubled topos''
serves as a suitable index to a gluing construction, yielding a topos $\XTop =
\prn{\prn{2\cdot\PrTop{\ThCat}} \times \SP} \sqcup\Sub{2\cdot\PrTop{\ThCat}} \SP$ of
phase separated parametricity structures. The fact that doubling the syntax of
a suitable type theory preserves all of its structure was noticed and used effectively by
\citet{wadler:2007}.  This same observation lies at the heart of our convenient
\cref{notation:systems} for working synthetically with the left- and right-hand
sides of parametricity structures.

\subsubsection{Computational effects}

Lax modalities do not interact cleanly with dependent type structure, unlike
the idempotent lex and open modalities of~\citet{rijke-shulman-spitters:2017}.
A potentially promising approach to the integration of real (non-idempotent)
effects into dependent type theory is represented by the \dCBPV{} calculus of
\citet{pedrot-tabareau:2020}, a dependently typed version of Levy's
Call-By-Push-Value~\citep{levy:2004} that treats a hierarchy of universes of
algebras for a given theory in parallel to the ordinary universes of
unstructured types.
We are optimistic about the potential of \dCBPV{} as an improved account of
\emph{certain} effects in dependent type theory, especially those for which the
collection of algebras is itself an algebra. Because not all effects that
we wish to support can have this very strong property, we based our theory on the
more traditional Moggi metalanguage~\citep{moggi:1991}.

At this time, we are unsure how best to account for the addition of general
recursion and general store to our language. For general recursion, one
possibility is to model the semantic parts of parametricity structures in a
topos model of synthetic domain theory~\citep{hyland:1991,fiore-rosolini:1997}
or synthetic guarded domain theory~\citep{bmss:2011}; then the computational
monad might be interpreted as a kind of lifting operation. Higher-order store
is a thornier question whose solution likely involves an application of
synthetic guarded domain theory, but remains elusive; there is a circularity
involved in the semantics of higher-order store that seems, even in the
presence of solutions to the needed domain equations, to preclude a
\emph{proof-relevant} interpretation of the parametricity structures of types.

\section{\texorpdfstring{\ModTT}{ModTT}: a type theory for program modules}

We introduce \ModTT{}, a type-theoretic core language for modules based
on the considerations discussed in \cref{sec:intro}. We first give an informal
description of the language using familiar notations in \cref{sec:informal};
in \cref{sec:lf-encoding}, we present the formal presentation of \ModTT{} in a
logical framework.

\subsection{Informal presentation of \texorpdfstring{\ModTT}{ModTT}}\label{sec:informal}

\subsubsection{Judgmental structure}

\ModTT{} is arranged around three basic syntactic classes: contexts
\fbox{$\IsCx{\Gamma}$}, signatures \fbox{$\IsSig{\Gamma}{\sigma}$}, module
values \fbox{$\IsMod{\Gamma}{V}{\sigma}$}, and module commands
\fbox{$\IsCmp{\Gamma}{M}{\sigma}$}. All judgments presuppose the
well-formedness of their constituents; for readability, we omit many
annotations that \emph{in fact} appear in a formal presentation of \ModTT{};
furthermore, module signatures, values, and commands are all subject to
judgmental equality, and we assume that derivability of all judgments is
closed under judgmental equality. These informal assumptions are substantiated
by the use of a logical framework to give the ``true'' definition of \ModTT{}
in \cref{sec:lf-encoding}.

\subsubsection{Types and dynamic modules}

The simplest module signature is `$\SigTp$', the signature
classifying the object-level types of the programming language, like
$\TpBool$ or $s \rightharpoonup t$. Given a module $t:\SigTp$, there is a
signature $\SigDyn{t}$ classifying the values of the type $t$.
\begin{mathpar}
  \inferrule[type]{
  }{
    \IsSig{\Gamma}{\SigTp}
  }
  \and
  \inferrule[dynamic]{
    \IsMod{\Gamma}{t}{\SigTp}
  }{
    \IsSig{\Gamma}{\SigDyn{t}}
  }
\end{mathpar}

In this section, we do not axiomatize any specific types, though our examples
will require them. This choice reflects our (perhaps heterodox) perspective
that a programming language is a dynamic extension of a theory of
modules, not the other way around.

\subsubsection{Generativity via lax modality}

To reconstruct generativity (\cref{sec:hierarchy-and-functors}) in a type
theoretic way, we employ a modal separation of effects and distinguish commands
(computations) from values. This is achieved by means of a strong monad,
presented judgmentally as a lax modality $\SigCmp$ mediating between the
\fbox{$\IsMod{\Gamma}{V}{\sigma}$} and \fbox{$\IsCmp{\Gamma}{M}{\sigma}$}
judgments.\footnote{A lax modality is exactly the same thing as a strong monad;
at this level, the judgmental distinction between a ``value of signature
$\SigCmp{\sigma}$'' and a ``command of signature $\sigma$'' is blurred, because
one conventionally works up to isomorphism. It would therefore be fine to omit
the form of judgment $\IsCmp{\Gamma}{M}{\sigma}$ from our language, but we keep it for
aesthetic reasons.}
\begin{mathpar}
  \inferrule[formation]{
    \IsSig{\Gamma}{\sigma}
  }{
    \IsSig{\Gamma}{\SigCmp{\sigma}}
  }
  \and
  \inferrule[introduction]{
    \IsCmp{\Gamma}{M}{\sigma}
  }{
    \IsMod{\Gamma}{\ModCmp{M}}{\SigCmp{\sigma}}
  }
  \and
  \inferrule[return]{
    \IsMod{\Gamma}{V}{\sigma}
  }{
    \IsCmp{\Gamma}{\CmpRet{V}}{\sigma}
  }
  \and
  \inferrule[bind]{
    \IsMod{\Gamma}{V}{\SigCmp{\sigma}}\\
    \IsCmp{\Gamma,X:\sigma}{M}{\sigma'}
  }{
    \IsCmp{\Gamma}{\prn{\CmpBind{X}{V}{M}}}{\sigma'}
  }
\end{mathpar}

We also include a reduction rule and a commuting conversion corresponding to the monad laws.

\subsubsection{Module hierarchies and functors}\label{sec:hierarchy-and-functors}

Signatures in \ModTT{} are closed under dependent sum (module hierarchy) and
dependent product (functor), using the standard type-theoretic
rules. We display only the formation rules for brevity:
\begin{mathpar}
  \inferrule[dependent sum]{
    \IsSig{\Gamma}{\sigma}\\
    \IsSig{\Gamma,X:\sigma}{\sigma'}
  }{
    \IsSig{\Gamma}{\SigSg{X}{\sigma}{\sigma'}}
  }
  \and
  \inferrule[dependent product]{
    \IsSig{\Gamma}{\sigma}\\
    \IsSig{\Gamma,X:\sigma}{\sigma'}
  }{
    \IsSig{\Gamma}{\SigPi{X}{\sigma}{\sigma'}}
  }
\end{mathpar}

Generative functors are defined as a mode of use of the dependent product
combined with the lax modality, taking $\prn{\SigPiGen{X}{\sigma}{\sigma'}} :=
\SigPi{X}{\sigma}{\SigCmp{\sigma'}}$ as in \citet{crary:2020}.

\subsubsection{Contexts and the static open}

The usual rules for contexts in Martin-L\"of type theories apply, but we have
an additional context former $\Gamma,\StOpn$ called the \emph{static open}
context:
\begin{mathpar}
  \inferrule{
  }{
    \IsCx{\CxEmp}
  }
  \and
  \inferrule{
    \IsCx{\Gamma}\\
    \IsSig{\Gamma}{\sigma}
  }{
    \IsCx{\Gamma, X : \sigma}
  }
  \and
  \inferrule{
    \IsCx{\Gamma}
  }{
    \IsCx{\Gamma, \StOpn}
  }
\end{mathpar}

\begin{remark}
  The notation is suggestive of the accounts of modal type theory based on dependent right
  adjoints~\citep{clouston-mannaa-mogelberg-pitts-spitters:2018}; indeed, the
  context extension $\prn{-,\StOpn}$ can be seen as a modality on contexts left
  adjoint to a modality on signatures that projects out their static parts.
\end{remark}

The purpose of the static open is to facilitate a context-sensitive version of
judgmental equality in which the dynamic parts of different objects are
identified when $\Gamma\vdash\StOpn$. Specifically, we add rules to ensure that
programs of a given type as well as commands of a given signature are \emph{statically
connected} in the sense of having exactly one element under $\StOpn$, as in
\cref{sec:synthetic-phase-distinction}.
\begin{mathpar}
  \inferrule[static connectivity (1)]{
    \IsMod{\Gamma}{t}{\SigTp}\\
    \Gamma\vdash \StOpn
  }{
    \IsMod{\Gamma}{*}{t}
  }
  \and
  \inferrule[static connectivity (2)]{
    \IsMod{\Gamma}{t}{\SigTp}\\
    \IsMod{\Gamma}{e}{t}\\
    \Gamma\vdash \StOpn
  }{
    \EqMod{\Gamma}{e}{*}{t}
  }
  \and
  \inferrule[static connectivity (3)]{
    \IsSig{\Gamma}{\sigma}\\
    \Gamma\vdash \StOpn
  }{
    \IsCmp{\Gamma}{*}{\sigma}
  }
  \and
  \inferrule[static connectivity (4)]{
    \IsSig{\Gamma}{\sigma}\\
    \IsCmp{\Gamma}{M}{\sigma}\\
    \Gamma\vdash \StOpn
  }{
    \EqCmp{\Gamma}{M}{*}{\sigma}
  }
\end{mathpar}

\subsubsection{The static extent}

The static open is a tool to ensure that dependency is only incurred on the
static parts of objects in \ModTT{}; consequently, we do not include an
equality connective or even a general singleton signature (which would incur a
dynamic dependency). Instead, we introduce the \emph{static extent} of a static
element $\IsMod{\Gamma,\StOpn}{V}{\sigma}$ as the signature
$\SigWhere{\sigma}{V}$ of modules $U:\sigma$ whose static part restricts to $V$;
because our results depend on the algebraic character of \ModTT{}, we
provide explicit introduction and elimination forms for the static extent,
which are trivial to elaborate from an implicit notation.
\begin{mathpar}
  \inferrule[extent/formation]{
    \IsSig{\Gamma}{\sigma}\\
    \IsMod{\Gamma,\StOpn}{V}{\sigma}
  }{
    \IsSig{\Gamma}{\SigWhere{\sigma}{V}}
  }
  \and
  \inferrule[extent/intro]{
    \IsMod{\Gamma}{W}{\sigma}\\
    \EqMod{\Gamma,\StOpn}{W}{V}{\sigma}
  }{
    \IsMod{\Gamma}{\InWhere{V}{W}}{
      \SigWhere{\sigma}{V}
    }
  }
  \and
  \inferrule[extent/elim]{
    \IsMod{\Gamma}{V}{\SigWhere{\sigma}{W}}
  }{
    \IsMod{\Gamma}{\OutWhere{W}{V}}{\sigma}
  }
  \and
  \inferrule[extent/inversion]{
    \Gamma\vdash\StOpn\\
    \IsMod{\Gamma}{V}{\SigWhere{\sigma}{W}}
  }{
    \EqMod{\Gamma}{\OutWhere{W}{V}}{W}{\sigma}
  }
  \and
  \inferrule[extent/$\beta$]{
    \IsMod{\Gamma}{W}{\sigma}\\
    \IsMod{\Gamma,\StOpn}{V}{\sigma}\\
    \EqMod{\Gamma,\StOpn}{W}{V}{\sigma}
  }{
    \EqMod{\Gamma}{\OutWhere{V}{\InWhere{V}{W}}}{W}{\sigma}
  }
  \and
  \inferrule[extent/$\eta$]{
    \IsMod{\Gamma,\StOpn}{W}{\sigma}\\
    \IsMod{\Gamma}{V}{\SigWhere{\sigma}{W}}
  }{
    \EqMod{\Gamma}{V}{
      \InWhere{W}{\OutWhere{W}{V}}
    }{\SigWhere{\sigma}{W}}
  }
\end{mathpar}

The static extent reconstructs both type sharing and weak structure sharing,
which appear in SML '97~\citep{milner-tofte-harper-macqueen:1997} and
OCaml~\citep{ocaml:system-manual:4.10.0}.

\begin{example}
  The SML module signature \textcd{(SHOW where type t = bool)} is rendered in
  terms of the static extent as
  $\SigWhere{\mathcd{SHOW}}{\brk{\mathcd{bool},*}}$, using the \textsc{static
  connectivity (1)} rule from \cref{sec:synthetic-phase-distinction}:
  \begin{prooftree*}
    \infer0{
      \IsMod{\Gamma,\StOpn}{\TpBool}{\SigTp}
    }

    \infer0{
      \Gamma,\StOpn\vdash \StOpn
    }
    \infer1{
      \IsMod{\Gamma,\StOpn}{*}{\SigDyn{\TpBool\rightharpoonup\mathcd{string}}}
    }

    \infer2{
      \IsMod{\Gamma,\StOpn}{\brk{\TpBool,*}}{\mathcd{SHOW}}
    }
    \infer1{
      \IsSig{\Gamma}{\SigWhere{\mathcd{SHOW}}{\brk{\TpBool,*}}}
    }
  \end{prooftree*}
\end{example}

We have (intentionally) made no effort to restrict the families of signatures
to depend only on variables of a static nature, in contrast to previous works
on modules. We conjecture, but do not prove here, the admissibility of a
principle that extends any signature to one that is defined over a purely
static context.
This should follow, roughly, from the fact that genuine dependencies are all
introduced ultimately via the static extent and that there is no signature of
signatures. We note that none of the results of this paper depend on the
validity of this conjecture.

\subsubsection{Further extensions: observables and partial function types}

For brevity, we do not extend \ModTT{} with all the features one would
expect from a programming language. However, our examples will require a type
of observables $\TpBool : \SigTp$ with $\ValTt,\ValFf:\SigDyn{\TpBool}$, as
well as a partial function type $s\rightharpoonup t$ such that
$\SigDyn{s\rightharpoonup t} \cong \SigDyn{s}\Rightarrow\SigCmp{\SigDyn{t}}$.

\subsubsection{External language and elaboration}

We do not present here a surface language, which
would include many features not present in the core language \ModTT{}: for
instance, named fields and paths are elaborated to iterated dependent sum
projections, and SML-style sharing constraints and `\textcd{where type}'
clauses are elaborated to uses of the static extent. Elaboration is essential
to support the implicit dropping and reordering of fields in module signature
matching; furthermore, the crucial subtyping and extensional retyping
principles of \citet{lee-crary-harper:2007} are re-cast as an elaboration
strategy guided by $\eta$-laws, as in the elaboration of extension types in the
\texttt{\textcolor{RegalBlue}{cool}tt} proof assistant~\citep{cooltt:2020}. The
status of subtyping and retyping in \ModTT{} is a significant divergence
from previous work, which treated them within the core language (an
untenable position for an algebraic account of modules).

\subsection{Algebraic presentation in a logical framework}\label{sec:lf-encoding}

Rather than studying directly the informal presentation of \ModTT{} given in
\cref{sec:informal}, we intend to study a mathematical version of this syntax
that can be defined in a logical framework, namely the internal language of
locally Cartesian closed categories. We use the logical framework to capture
not only binding structure, but also well-typedness and judgmental equality.
One important difference between the informal presentation and the logical
framework presentation is that the latter does not distinguish contexts from
other forms of judgment; such a distinction can be important for implementation
as well as establishing various metatheorems (\eg normalization), but it does
not seem to play a role in the specification of the theory itself.

Theories are encoded in the LF (logical framework) as follows:

\begin{enumerate}

  \item Both parameters and hypothetical judgments are formulated using the
    dependent products $(x:X)\to Y(x)$ of the LF.

  \item The LF contains one universe $\JDG$ of \emph{judgments};
    an object-level judgment/sort is defined by adding a constant whose type ends in
    $\JDG$. LF signatures must use $\JDG$ in only strictly positive positions.

  \item Object-level equality is specified by adding constants whose types end
    in the logical framework's equality type $(a =_X b)$.

\end{enumerate}

The LF signature of \ModTT{} is presented in \cref{fig:lf-modtt}.

\begin{figure}
  \begingroup\small
  \begin{align*}
    \StOpn &: \JDG\\
    \_ &: \prn{x,y:\StOpn} \to x =\Sub{\StOpn} y\\
    \Sig &: \JDG\\
    \Val &: \Sig\to\JDG\\
    \SigTp &: \Sig\\
    \SigDyn{-} &: \Val\prn{\SigTp}\to \Sig\\
    \Pi,\Sigma &: \prn{\sigma : \Sig}\to \prn{\Val\prn{\sigma}\to\Sig}\to\Sig\\
    \Con{Ext} &: \prn{\sigma : \Sig} \to \prn{\StOpn\to\Val\prn{\sigma}}\to\Sig\\
    \SigCmp &: \Sig\to\Sig\\
    \Pi\Con{/val} &: \prn{\DelimMin{1}\prn{x:\Val\prn{\sigma}}\to \Val\prn{\tau\prn{x}}} \cong \Val\prn{\Pi\prn{\sigma,\tau}}\\
    \Sigma\Con{/val} &: \prn{\DelimMin{1}\prn{x:\Val\prn{\sigma}}\times \Val\prn{\tau\prn{x}}} \cong \Val\prn{\Sigma\prn{\sigma,\tau}}\\
    \Con{Ext/val} &: \prn{\DelimMin{1}\prn{U:\Val\prn{\sigma}}\times\prn{\prn{z : \StOpn}\to U =\Sub{\Val\prn{\sigma}}V\prn{z}}}\cong \Val\prn{\Con{Ext}\prn{\sigma,V}}\\
    \Cmp &: \Sig\to\JDG\\
    \Cmp &:= \lambda \sigma.\Val\prn{\SigCmp{\sigma}}
    \\
    \Con{conn/dyn} &: \StOpn\to{\Val\prn{\SigDyn{t}}\cong\ObjTerm{}}\\
    \Con{conn/cmp} &: \StOpn\to{\Cmp\prn{\sigma}\cong\ObjTerm{}}\\
    \Con{ret} &: \Val\prn{\sigma}\to \Cmp\prn{\sigma}\\
    \Con{bind} &: \prn{\Val\prn{\sigma}\to\Cmp\prn{\tau}}\to\Cmp\prn{\sigma}\to\Cmp\prn{\tau}\\
    \_ &: \Con{bind}\prn{F,\Con{ret}\prn{V}} =\Sub{\Cmp\prn{\tau}} F\prn{V}\\
    \_ &: \Con{bind}\prn{F,\Con{bind}\prn{G,V}} =\Sub{\Cmp\prn{\rho}} \Con{bind}\prn{\lambda x. \Con{bind}\prn{F,G\prn{x}},V}
    \\
    \prn{\rightharpoonup} &: \Val\prn{\SigTp}\to\Val\prn{\SigTp}\to\Val\prn{\SigTp}\\
    {\rightharpoonup}\Con{/val} &: \prn{\DelimMin{1}\Val\prn{\SigDyn{s}}\to\Cmp\prn{\SigDyn{t}}}\cong\Val\prn{\SigDyn{s\rightharpoonup t}}
    \\
    \TpBool &: \Val\prn{\SigTp}\\
    \ValTt,\ValFf &: \Val\prn{\SigDyn{\TpBool}}\\
    \Con{if} &: \Val\prn{\SigDyn{\TpBool}}\to \Cmp\prn{\SigDyn{t}}\to\Cmp\prn{\SigDyn{t}}\to\Cmp\prn{\SigDyn{t}}\\
    \_ &: \Con{if}\prn{\ValTt,M,N} =\Sub{\Cmp\prn{\SigDyn{t}}} M\\
    \_ &: \Con{if}\prn{\ValFf,M,N} =\Sub{\Cmp\prn{\SigDyn{t}}} N
  \end{align*}
  \endgroup

  \caption{The explicit presentation of \ModTT{} as a signature $\SigModTT$ in
  the logical framework; for readability, we omit quantification over certain
  metavariables.  The introduction, elimination, computation, and uniqueness
  rules of the static extent are captured in a \emph{single} rule
  $\Con{Ext/val}$ declaring an isomorphism, which can be unfolded to a
  dependent sum.}
  \label{fig:lf-modtt}
\end{figure}

\NewDocumentCommand\ALG{mm}{\mathbf{Alg}\Sub{#1}\prn{#2}}

\begin{definition}[Algebras for a signature]\label{def:algebra}
  Let $\Sigma$ be a signature in the LF; the signature $\Sigma$ can be viewed
  as a ``dependent record type'' in any sufficiently structured category
  $\ECat$. In particular, if $\Univ$ is a universe in $\ECat$ closed under
  dependent sum, product, and extensional equality, we have a type
  $\ALG{\Sigma}{\Univ}$ in $\ECat$ defined as the dependent sum of all of the
  components of $\Sigma$ where $\JDG$ is interpreted as $\Univ$; an
  element of $\ALG{\Sigma}{\Univ}$ is then a model of the theory presented by
  $\Sigma$, in which judgments and contexts are $\Univ$-small.
\end{definition}

\paragraph{Syntactic category of an LF signature}

A signature $\Sigma$ in the LF \emph{presents} a certain category
$\CCat_\Sigma$ equipped with all finite limits and some dependent products ---
in the sense that there is a bijection between equivalence classes of LF terms
and morphisms in the category.\footnote{Our application does not, however, rely
on this bijection: instead, we treat the morphisms of the presented category as
a \emph{definition} of our language.} The objects of $\CCat_\Sigma$ are
equivalence classes of judgments over $\Sigma$, and the morphisms are
equivalence classes of deductions.

The notion of an algebra (\cref{def:algebra}) is good for concrete
constructions, but the higher-altitude structure of our development is best
served by \emph{functorial semantics} in the spirit of
Lawvere~\citep{lawvere:thesis}. A model of $\Sigma$ in a sufficiently
structured category $\ECat$ can be viewed in two ways:

\begin{enumerate}
  \item A model of $\Sigma$ is an element of $\ALG{\Sigma}{\Univ}$ for some universe $\Univ$ in a locally Cartesian closed category $\ECat$.
  \item A model of $\Sigma$ is a locally Cartesian closed functor functor $\Mor{\CCat_\Sigma}{\ECat}$.
\end{enumerate}

We will use both perspectives in this paper. The induction principle or
universal property of the syntax states that $\CCat_\Sigma$ is the smallest
model of $\Sigma$; this universal property is the main ingredient for proving
syntactic metatheorems by semantic means, as we advocate and apply in this
paper.

\begin{notation}\label{notation:th-cat}
  We will write $\SigModTT$ for the signature presenting \ModTT{} in
  \cref{fig:lf-modtt}, and $\ThCat$ for the syntactic category of \ModTT{}.
\end{notation}

\paragraph{Equational presentation of specific effects}

It is important that our use of an equational logical framework does not
prevent the extension of \ModTT{} with non-trivial computational effects; although
the effect of having a fixed collection of reference cells or exceptions is
clearly algebraic (see \eg \citet{plotkin-power:2002}), an equational and
structural account of fresh names or nominal restriction is needed in
order to account for languages that feature allocation.

An equational presentation of allocation may be achieved along the lines of
\citet{staton:2013} --- as Staton's work shows, there is no obstacle to
the equational presentation of \emph{any} reasonable form of
deterministic effect, but semantics are another story. We do not currently make any claim
about the extension of our representation independence results to the setting
of higher-order store, for instance.

\section{A type theory for synthetic parametricity}\label{sec:paramtt}

Our goal is to define a ``type theory of parametricity structures'' \PSTT{}, in
which the analytic view of logical relations (as a pair of a syntactic object
together with a relation defined on its elements) is replaced by a streamlined
synthetic perspective, captured under the slogan \textbf{logical relations as
types}. Combined with a model construction detailed in \cref{sec:topos}, the
results of this section will imply a generalized version of the Reynolds
abstraction theorem~\citep{reynolds:1983} for \ModTT{} stated in \cref{thm:reynolds}.

\PSTT{} is an extension of the internal dependent type theory of a presheaf
topos with modal features corresponding to phase separated parametricity:
therefore, \PSTT{} has dependent products, dependent sums, extensional equality
types, a strictly univalent universe $\Omega$ of proof irrelevant propositions,
a strict hierarchy of universes $\Univ{\alpha}$ of types, inductive types,
subset types, and effective quotient types (consequently, strict pushouts). We
first axiomatize \PSTT{} in the style of \citet{orton-pitts:2016}, and in
\cref{sec:topos} we construct a suitable model of \PSTT{} using topos theory.
Referring to the types of \PSTT{}, we will often speak of ``parametricity
structures''.

\subsection{Modal structure of iterated phase separation}
\NewDocumentCommand\MOp{m}{\mathsf{Open}\Sub{#1}}
\NewDocumentCommand\MCl{m}{\mathsf{Closed}\Sub{#1}}

Using the insight that logical relations can be seen as a kind of phase
distinction between the syntactic and the semantic, we iterate the use
of the ``static open'' from \ModTT{} and add to $\PSTT$ a system of proof
irrelevant propositions corresponding to the static part and the
disjoint (left)-syntactic and (right)-syntactic parts of a parametricity structure.
\begin{mathpar}
  \StOpn,\LOpn,\ROpn,\SynOpn :\Omega
  \and
  \LOpn\land\ROpn = \bot
  \and
  \SynOpn := \LOpn\lor\ROpn
\end{mathpar}

\subsubsection{Static and syntactic open modalities}

Using the propositions specified above, we may define \emph{open modalities}
that isolate the static and syntactic aspects of a given type.

\begin{construction}[Open modality]
  If $\phi:\Omega$ is a proposition, then the \emph{open modality}
  corresponding to $\phi$ is $\MOp{\phi}\prn{A} := \phi\to{A}$. One observes
  that the open modality has the following properties:
  \begin{enumerate}

    \item It is monadic: indeed, it is the ``reader monad'' for the proposition $\phi$.

    \item It is idempotent, in the sense that $\MOp{\phi}\prn{\MOp{\phi}{A}}
      \cong \MOp{\phi}\prn{A}$.

    \item It is left exact (``lex'' for short), in the sense that $\MOp{\phi}\prn{a =\Sub{A} b}$ is isomorphic to $\prn{\lambda\_.a}=\Sub{\MOp{\phi}\prn{A}}\prn{\lambda\_.b}$.

    \item It commutes with exponentials, in the sense that $\MOp{\phi}\prn{A\to B}$ is isomorphic to $\MOp{\phi}\prn{A}\to\MOp{\phi}\prn{B}$.

  \end{enumerate}
\end{construction}

\begin{definition}
  When $M$ is an idempotent modality, we say that a type $A$ is
  \emph{$M$-modal} when the unit map $\Mor[\eta]{A}{M\prn{A}}$ is an isomorphism; a
  type $A$ is called \emph{$M$-connected} when $M\prn{A}\cong\ObjTerm{}$.
\end{definition}

We define the ``static modality'' to be $\MSt := \MOp{\StOpn}$ and the
``syntactic modality'' to be $\MSyn := \MOp{\SynOpn}$; the notion of a
$\MOp{\phi}$-modal type gives us an abstract way to speak of types that are
purely syntactic or purely static (or both).

Our open modalities isolate the static and syntactic parts of a parametricity
structure respectively; because $\LOpn,\ROpn$ have no overlap, we have an
isomorphism $\MSyn{A}\cong \prn{\IHom{\LOpn}{A}}\times\prn{\IHom{\ROpn}{A}}$. This
isomorphism is captured more generally by the following \emph{systems} notation
of \citet{cchm:2017} from cubical type theory for constructing maps out of
disjunctions of propositions:

\begin{notation}[Systems]\label{notation:systems}
  Following \citet{cchm:2017}, we employ the notation of \emph{systems} for
  constructing elements of parametricity structures underneath the assumption of disjunction of
  propositions $\phi\lor\phi'$: when $\phi\land\phi'$ implies $a = a':A$, we may write
  $
    \brk{\phi\hookrightarrow a, \phi'\hookrightarrow a'}
  $
  for the unique element of $A$ that restricts to $a,a'$ on $\phi,\phi'$ respectively.
\end{notation}

\begin{notation}[Extension]\label{notation:extension}
  As foreshadowed by the static extents of \ModTT{}, every proposition
  $\phi:\Omega$ gives rise to an \emph{extension type}
  connective~\citep{riehl-shulman:2017}: if $A$ is a parametricity structure
  and $a$ is an element of $A$ assuming $\phi$ is true, then
  $\Compr{A}{\phi\hookrightarrow a}$ is the parametricity structure of elements $a':A$
  such that $a=a'$ when $\phi$ is true.
\end{notation}

\subsubsection{Dynamic and semantic closed modalities}\label{sec:closed-modalities}

The static modality forgets the dynamic part of a parametricity structure
(in both syntax and semantics), and the syntactic modality forgets the
semantic part of a parametricity structure. We will require
complementary modalities to do the opposite, \eg form a parametricity
structure with no syntactic force.

\begin{construction}[Closed modality]\label{con:closed-modality}
  If $\phi:\Omega$ is a proposition, then the \emph{closed} modality
  $\MCl{\phi}$ complementing the open modality $\MOp{\phi} = \prn{\phi\to -}$
  can be defined as a quotient of the product $A\times\phi$ or as a pushout. We
  define $\MCl{\phi}$ in both type theoretic and categorical notation below:

  \medskip
  \begin{minipage}{\textwidth}
    \begin{minipage}{.5\textwidth}
      \begin{code}
        data $\MCl{\phi}$ ($A$ : $\Univ$) where
          $\eta$ : $A\to\MCl{\phi}\prn{A}$
          * : $\phi\to\MCl{\phi}\prn{A}$
          \_ : $\Prod{a:A}{\Prod{z:\phi}{\prn{\eta\prn{a}=*\prn{z}}}}$
      \end{code}
    \end{minipage}
    \begin{minipage}{.5\textwidth}
      \[
        \DiagramSquare{
          se/style = pushout,
          east/style = exists,
          south/style = exists,
          nw = A\times\phi,
          sw = A,
          ne = \phi,
          south = \eta,
          east = *,
          west = \pi_1,
          north = \pi_2,
          se = \MCl{\phi}\prn{A}
        }
      \]
    \end{minipage}
  \end{minipage}

  The modality $\MCl{\phi}$ is lex, idempotent, and monadic, but it does not
  usually commute with exponentials.
\end{construction}

Using \cref{con:closed-modality}, we define the ``purely semantic'' and
``purely dynamic'' modalities respectively:
\begin{align*}
  \MSem &:= \MCl{\SynOpn}\\
  \MDyn &:= \MCl{\StOpn}
\end{align*}

\begin{lemma}\label{lem:connectedness}
  For any $\phi:\Omega$, a type $A$ is $\MCl{\phi}$-modal if and only if it is $\MOp{\phi}$-connected.
\end{lemma}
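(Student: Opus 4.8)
The plan is to prove both implications by reducing each to the behaviour of the relevant modality \emph{under the assumption of $\phi$}, exploiting that $\phi$ is proof-irrelevant. The governing observation, which I would record first, is a reformulation of $\MOp{\phi}$-connectedness: since $\phi:\Omega$ is a proposition, a map $\phi\to A$ is the same as an element of $A$ in the extended context $z:\phi$, and function extensionality turns terminality of $\IHom{\phi}{A}$ into the statement that $A\cong\ObjTerm{}$ holds in context $z:\phi$. Thus ``$A$ is $\MOp{\phi}$-connected'' means precisely ``$A$ is contractible assuming $\phi$'', and both directions of the biconditional will be read off from this.

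For the direction \textbf{connected $\Rightarrow$ modal}, I would assume $A$ is $\MOp{\phi}$-connected and build an inverse to the unit $\Mor[\eta]{A}{\MCl{\phi}\prn{A}}$ directly from the pushout presentation of \cref{con:closed-modality}. By the universal property of $A\sqcup_{A\times\phi}\phi$, a map $\MCl{\phi}\prn{A}\to A$ is determined by the identity on the $A$-leg, a map $c:\phi\to A$ on the $\phi$-leg, and the coherence $a=c\prn{z}$ on $A\times\phi$; both the map $c$ (a centre of contraction) and the coherence are available since $A\cong\ObjTerm{}$ whenever $\phi$. This retraction composes with $\eta$ to the identity on $A$ by construction, and the reverse composite is the identity by pushout induction, the $\phi$-leg case being handled by the path constructor $\eta\prn{a}=*\prn{z}$. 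Hence $\eta$ is an isomorphism and $A$ is $\MCl{\phi}$-modal.

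For the direction \textbf{modal $\Rightarrow$ connected}, the key point is that $\MCl{\phi}\prn{A}$ is contractible under the assumption $\phi$ for \emph{every} $A$ --- this is the complementarity $\MSyn{\MSem{A}}\cong\ObjTerm{}$ recorded in \cref{fig:synthetic-parametricity-structures}. Indeed, in context $z:\phi$ the point $*\prn{z}$ is a centre of contraction for $\MCl{\phi}\prn{A}$: the path constructor identifies every $\eta\prn{a}$ with $*\prn{z}$, while proof-irrelevance of $\phi$ identifies every $*\prn{z'}$ with $*\prn{z}$, and a pushout induction assembles these into the required contraction. Now if $A$ is $\MCl{\phi}$-modal then $A\cong\MCl{\phi}\prn{A}$, so $A$ too is contractible assuming $\phi$, which by the reformulation above is exactly $\MOp{\phi}$-connectedness.

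I expect the only genuinely delicate step to be the coherence data of the pushout inductions --- in particular the path-constructor case when verifying that $\eta$ and its candidate inverse compose to identities, and the analogous case in contracting $\MCl{\phi}\prn{A}$. In each instance the coherence lands in an equality type governed by $\phi$, so the proof-irrelevance of $\phi$ (together with the contractibility of $A$ assuming $\phi$) trivialises it; everything else is a routine manipulation of the universal properties of the open and closed modalities.
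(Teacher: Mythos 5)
Your proof is correct and follows essentially the same route as the paper's: the ``modal $\Rightarrow$ connected'' direction reduces to showing $\MCl{\phi}\prn{A}$ has a unique element under $\phi$ via the path constructor $\eta\prn{a}=*\prn{z}$, and the ``connected $\Rightarrow$ modal'' direction builds the inverse to the unit from the unique map $\Mor{\phi}{A}$ and the universal property of the pushout. The only difference is presentational --- your explicit reformulation of connectedness as ``contractible assuming $\phi$'' and the extra attention to the coherence cells, both of which the paper leaves implicit.
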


\begin{proof}
  Suppose that $A$ is $\MCl{\phi}$-modal; to show that $A$ is
  $\MOp{\phi}$-connected, it therefore suffices to show that
  $\MOp{\phi}\prn{\MCl{\phi}\prn{A}}\cong\ObjTerm{}$, which is to say that there is a unique morphism $\Mor{\phi}{\MCl{\phi}\prn{A}}$ given by
  the constructor $\Mor[*]{\phi}{\MCl{\phi}\prn{A}}$. This is clear using the
  induction principle of $\MCl{\phi}\prn{A}$, since the quotienting ensures
  that $\eta\prn{a} = *\prn{z}$ for any $a:A,z:\phi$.

  In the other direction, suppose that $A$ is $\MOp{\phi}$-connected; we must
  check that the unit constructor $\Mor{A}{\MCl{\phi}\prn{A}}$ is an
  isomorphism. We construct the inverse as follows, noting that the
  $\MOp{\phi}$-connectedness of $A$ immediately induces a \emph{unique}
  morphism $\Mor{\phi}{A}$:
  \[
    \begin{tikzpicture}[diagram]
      \SpliceDiagramSquare{
        se/style = pushout,
        nw = A\times\phi,
        sw = A,
        ne = \phi,
        south = \eta,
        east = *,
        west = \pi_1,
        north = \pi_2,
        se = \MCl{\phi}\prn{A},
        width = 2.5cm,
        south/node/style = upright desc,
        east/node/style = upright desc,
      }
      \node (see) [below right = of se] {$A$};
      \path[->,bend right = 30] (sw) edge node [sloped,below] {$\ArrId{A}$} (see);
      \path[->,bend left = 30] (ne) edge (see);
      \path[->,exists] (se) edge (see);
    \end{tikzpicture}
  \]

  We see that $\Mor{\MCl{\phi}\prn{A}}{A}$ is a retraction of the unit, and it
  remains to check that it is a section; this follows immediately from the
  universal property (\ie the $\eta$-law) of the pushout.
\end{proof}

Instantiating \cref{lem:connectedness}, we see the sense in which the pairs of
modalities $\MSyn/\MSem$ and $\MSt/\MDyn$ are each complementary: in
particular, we have $\MSyn{\MSem{A}} \cong \ObjTerm{}$ and $\MSt{\MDyn{A}}
\cong\ObjTerm{}$. Put more crudely, a ``dynamic thing has no static component''
and a ``semantic thing has no syntactic component''. This complementarity is
not the one of boolean logic: the open/closed partition evinces an area of
overlap that is sometimes called the \emph{boundary} or \emph{fringe}, depicted
visually in \cref{fig:geometry}. The geometrical boundary between complementary
the open and closed subspaces is reflected in the modal presentation the fact
that the semantic part of a syntactic thing is \emph{not} trivial, \ie we do
not have $\MSem{\MSyn{A}}\cong \ObjTerm{}$.

\NewDocumentCommand\GenMod{g}{\mathbf{M}\IfValueT{#1}{\prn{#1}}}
\subsection{Universes of modal types}

Each universe $\Univ{\alpha}$ of $\PSTT$ may be restricted to a universe
consisting of \emph{modal} types for each modality described above, \eg a
universe of purely syntactic types or purely dynamic types. Fixing a lex
idempotent modality $\GenMod$, thought to be ranging over
$\brc{\MSyn,\MSem,\MSt,\MDyn}$, we might na\"ively consider defining the
universe $\Univ[\GenMod]{\alpha}$ of $\GenMod$-modal types as a subtype:
\begin{equation*}
  \Univ[\GenMod]{\alpha} :=
  \Compr{
    A : \Univ{\alpha}
  }{
    A\cong\GenMod{A}
  }
  \tag{bad}
\end{equation*}

Unfortunately, such a universe will not itself be $\GenMod$-modal, \ie we do
not have $\GenMod{\Univ[\GenMod]{\alpha}} \cong \Univ[\GenMod]{\alpha}$, hence there
is no hope of closing the $\GenMod$-modal fragment of \PSTT{} under a
hierarchy of universes with such a definition.\footnote{The ``na\"ive''
definition considered here \emph{does} work in homotopy type theories in the
presence of the univalence principle, as shown by
\citet{rijke-shulman-spitters:2017}; because we are working strictly in
ordinary 1-dimensional mathematics, we must choose a different (but
homotopically equivalent) definition of the universe of modal types.} An idea
pioneered in a different context by \citet{streicher:2005} is to apply the
modality directly to the universe:
\begin{equation*}
  \Univ[\GenMod]{\alpha} := \GenMod{\Univ{\alpha}}
  \tag{good}
\end{equation*}

With such a definition, we immediately have $\GenMod{\Univ[\GenMod]{\alpha}}
\cong \Univ[\GenMod]{\alpha}$, \etc; but we still have to specify the decodings
of these new universes, which is to explain what the type of elements of the modal
universe is. This can be done systematically for any modality $\GenMod$, so
long as $\GenMod$ preserves the universe level of types.
Categorically, one views the universe $\Univ{\alpha}$ as a \emph{generic
family} $\Mor[\pi]{\Sum{A:\Univ{\alpha}}{A}}{\Univ{\alpha}}$ that expresses the
indexing of elements over types. The insight of \citet{streicher:2005} was to
apply the modality $\GenMod$ to the entire generic family yielding
$\Mor[\GenMod{\pi}]{\GenMod{\Sum{A:\Univ{\alpha}}{A}}}{\Univ[\GenMod]{\alpha}}$,
and then obtain the collection of elements of a given
$A:\Univ[\GenMod]{\alpha}$ by pullback.

In more type theoretic language, the collection of elements of
$A:\Univ[\GenMod]{\alpha}$ is given by the following decoding map:
\begin{align*}
  \mathsf{El}\Sub{\GenMod} &: \Univ[\GenMod]{\alpha}\to\Univ{\alpha}\\
  \mathsf{El}\Sub{\GenMod}\prn{A} &:=
  \Compr{
    x : \GenMod{\Sum{X:\Univ{\alpha}}{X}}
  }{
    \GenMod{\pi}\prn{x} = A
  }
\end{align*}

We note that each modal universe $\Univ[\GenMod]{\alpha}$ is closed under all the
connectives of \PSTT{}, a general fact about lex idempotent
modalities in topos theory~\citep{maclane-moerdijk:1992} and type
theory~\citep{rijke-shulman-spitters:2017}.

\begin{lemma}\label{fact:el-mod-unit}
  If $A:\Univ{\alpha}$, then $\mathsf{El}\Sub{\GenMod}\prn{\eta\Sub{\GenMod}\prn{A}}\cong \GenMod{A}$.
\end{lemma}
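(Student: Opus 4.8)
The plan is to identify $\mathsf{El}_{\GenMod}\prn{\eta_{\GenMod}\prn{A}}$ with a fiber of the modalized generic family and then compute that fiber using the left-exactness of $\GenMod$. By unfolding the definition, $\mathsf{El}_{\GenMod}\prn{\eta_{\GenMod}\prn{A}}$ is the subtype of $x:\GenMod{\Sum{X:\Univ{\alpha}}{X}}$ cut out by the equation $\GenMod{\pi}\prn{x} = \eta_{\GenMod}\prn{A}$; since this equation is valued in $\GenMod{\Univ{\alpha}}$, the subtype is precisely the pullback of $\Mor[\GenMod{\pi}]{\GenMod{\Sum{X:\Univ{\alpha}}{X}}}{\GenMod{\Univ{\alpha}}}$ along the global point $\eta_{\GenMod}\prn{A}:\ObjTerm{}\to\GenMod{\Univ{\alpha}}$.

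First I would invoke the universal property of the generic family $\Mor[\pi]{\Sum{X:\Univ{\alpha}}{X}}{\Univ{\alpha}}$: the fiber of $\pi$ over the point $A:\ObjTerm{}\to\Univ{\alpha}$ classifying $A$ is $A$ itself, which is to say that $A$ sits in a pullback square
\[
  A \cong \ObjTerm{}\times_{\Univ{\alpha}}\Sum{X:\Univ{\alpha}}{X}.
\]
Next I would apply the modality $\GenMod$. Because $\GenMod$ is lex, it preserves pullbacks and the terminal object ($\GenMod{\ObjTerm{}}\cong\ObjTerm{}$), so
\[
  \GenMod{A} \cong \GenMod{\ObjTerm{}}\times_{\GenMod{\Univ{\alpha}}}\GenMod{\Sum{X:\Univ{\alpha}}{X}} \cong \ObjTerm{}\times_{\GenMod{\Univ{\alpha}}}\GenMod{\Sum{X:\Univ{\alpha}}{X}},
\]
which exhibits $\GenMod{A}$ as the fiber of $\GenMod{\pi}$ over the transported bottom map $\ObjTerm{}\cong\GenMod{\ObjTerm{}}\to\GenMod{\Univ{\alpha}}$.

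The one step that requires care --- and the only real obstacle --- is checking that this transported bottom map is exactly $\eta_{\GenMod}\prn{A}$, so that the fiber of $\GenMod{\pi}$ it computes coincides with the one defining $\mathsf{El}_{\GenMod}\prn{\eta_{\GenMod}\prn{A}}$. This is a naturality calculation: naturality of the unit $\eta$ applied to the map $A:\ObjTerm{}\to\Univ{\alpha}$ shows that $\GenMod$ of this map, precomposed with $\eta_{\ObjTerm{}}$, equals $\eta_{\Univ{\alpha}}$ precomposed with $A$; and since $\eta_{\ObjTerm{}}$ is an isomorphism (again by lex-ness), this identifies the bottom map with the point $\eta_{\GenMod}\prn{A}$. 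Combining the two pullback descriptions of the same fiber then yields the asserted isomorphism $\mathsf{El}_{\GenMod}\prn{\eta_{\GenMod}\prn{A}}\cong\GenMod{A}$. Everything beyond this bookkeeping is a direct consequence of $\GenMod$ being a lex idempotent modality.
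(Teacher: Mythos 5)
Your proof is correct, and it follows exactly the route the paper intends but leaves implicit (the lemma is stated without proof, immediately after the paper describes $\mathsf{El}_{\GenMod}$ as the pullback of the modalized generic family): identify $\mathsf{El}_{\GenMod}\prn{\eta_{\GenMod}\prn{A}}$ as a fiber of $\GenMod{\pi}$, use lexness to see that $\GenMod$ sends the defining pullback square of $A$ to a pullback square over $\GenMod{\ObjTerm{}}\cong\ObjTerm{}$, and use naturality of the unit to match the base points. The only cosmetic remark is that idempotence plays no role here; left exactness alone suffices.
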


In the case of the open modality for a proposition $\phi:\Omega$, there is a
simpler computation of the decoding of the open subuniverse, namely
$
  \mathsf{El}\Sub{\MOp{\phi}}\prn{A} := \Prod{z:\phi}{A\prn{z}}
$.

\begin{notation}
  From \cref{fact:el-mod-unit}, we are inspired to adopt a slight abuse of
  notation: when $A : \Univ{\alpha}$, we will often write $\GenMod{A} :
  \Univ[\GenMod]{\alpha}$  to mean $\eta\Sub{\GenMod}\prn{A}$; we will also leave
  $\mathsf{El}\Sub{\GenMod}$ implicit, since we have already indulged the
  notational fiction of universes \`a la Russell.
\end{notation}

\subsubsection{Strictification and syntactic realignment}

We assert that the universe hierarchies of \PSTT{} moreover satisfy the
following \emph{strictification} axiom of \citet{orton-pitts:2016,bbcgsv:2016},
which we will justify by a model construction in \cref{sec:topos}.
\begin{axiom}[Strictification]\label{axiom:strictification}
  Let $\phi:\Omega$ be a proposition, and let $A : \phi\to\Univ{\alpha}$ be a
  partial type defined on the extent of $\phi$, and let $B:\Univ{\alpha}$ be a
  total type. Now suppose we have a partial isomorphism $f :
  \Prod{x:\phi}{\prn{A\prn{x}\cong B}}$; then there exists a total type $B'$
  with $g : B'\cong B$, such that both $\forall x : \phi. B' = A\prn{x}$ and
  $\forall x : \phi. f\prn{x} = g$ strictly.
\end{axiom}

\cref{axiom:strictification} above plays a critical role in the constructions
of \cref{sec:modtt-in-pstt}, letting $\phi := \SynOpn$.

\begin{corollary}[Realignment]\label{cor:realignment}
  Let $A : \Univ[\MSyn]{\alpha}$ be a syntactic type, and fix
  $\tilde{A}:\Univ{\alpha}$ whose syntactic part is isomorphic to $A$, i.e. we
  have $f : \MSyn*{A \cong \tilde{A}}$.  Then there exists a type $f^*\tilde{A}:\Univ{\alpha}$
  with $f^\dagger\tilde{A} : f^*\tilde{A}\cong \tilde{A}$, such that both
  $\MSyn*{f^*\tilde{A} = A}$ and $\MSyn*{f^\dagger\tilde{A} = f}$ strictly.
\end{corollary}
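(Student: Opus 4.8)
The plan is to recognize \cref{cor:realignment} as precisely the instance of the Strictification axiom (\cref{axiom:strictification}) obtained by taking $\phi := \SynOpn$; essentially all of the work lies in translating between the modal-universe packaging used in the statement of the corollary and the ``partial type over $\phi$'' packaging used in the axiom.

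First I would unpack the hypotheses. Since $\MSyn = \MOp{\SynOpn}$ is the open modality for $\SynOpn$, an element $A : \Univ[\MSyn]{\alpha}$ is by definition a partial type $A : \SynOpn\to\Univ{\alpha}$, defined only on the extent of $\SynOpn$; indeed, by the decoding computation for open modalities its elements are $\mathsf{El}_{\MSyn}\prn{A}\cong\Prod{z:\SynOpn}{A\prn{z}}$, so $A$ plays exactly the role of the partial type $A : \phi\to\Univ{\alpha}$ in \cref{axiom:strictification}. Likewise, the hypothesis $f : \MSyn*{A\cong\tilde{A}}$ unpacks, under the convention that $\MSyn*{P}$ asserts that $P$ holds on the extent of $\SynOpn$, to a partial isomorphism $f : \Prod{z:\SynOpn}{\prn{A\prn{z}\cong\tilde{A}}}$ relating the partial type $A$ to the total type $\tilde{A}:\Univ{\alpha}$. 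Here I would observe that under the assumption of $\SynOpn$ the modal universe $\Univ[\MSyn]{\alpha}$ coincides with the full universe $\Univ{\alpha}$, which is what makes the isomorphism $A\prn{z}\cong\tilde{A}$ between a ``syntactic'' type and a ``total'' type well-typed in the first place.

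With this dictionary fixed, I would invoke \cref{axiom:strictification} with $\phi := \SynOpn$, the partial type taken to be $A$, the total type $B := \tilde{A}$, and the partial isomorphism taken to be $f$. The axiom then produces a total type $B' : \Univ{\alpha}$ together with an isomorphism $g : B'\cong\tilde{A}$ satisfying, strictly, $\forall z : \SynOpn.\ B' = A\prn{z}$ and $\forall z : \SynOpn.\ f\prn{z} = g$. Setting $f^*\tilde{A} := B'$ and $f^\dagger\tilde{A} := g$, these two strict equations are exactly the desired conclusions $\MSyn*{f^*\tilde{A} = A}$ and $\MSyn*{f^\dagger\tilde{A} = f}$. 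Since the corollary is in essence a repackaging of the axiom, there is no genuinely difficult step; the only place demanding care is this bookkeeping, and in particular the verification that ``the syntactic part of $\tilde{A}$ is isomorphic to $A$'', read as a relation between elements of $\Univ[\MSyn]{\alpha}$, really amounts to the fibrewise isomorphism $A\prn{z}\cong\tilde{A}$ under $\SynOpn$ demanded by \cref{axiom:strictification}.
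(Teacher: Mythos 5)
Your proposal is correct and matches the paper's (implicit) argument exactly: the paper states the corollary without a written proof, noting only that one instantiates \cref{axiom:strictification} at $\phi := \SynOpn$, and your unpacking of $\Univ[\MSyn]{\alpha} = \MSyn{\Univ{\alpha}}$ as a partial type over the extent of $\SynOpn$ and of $f$ as the required partial isomorphism is precisely the intended bookkeeping.
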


\subsection{Doubled embedding of syntax}\label{sec:doubled-embedding}

We need to embed the syntax of \ModTT{} into the syntactic fragment of \PSTT{}.
This is done by assuming a $\SigModTT$-algebra valued in a universe
$\Univ[\MSyn]$ of purely syntactic types, \ie an element
$\mathcal{A}\SubSyn : \ALG{\SigModTT}{\Univ[\MSyn]}$. Because we have
specified $\SynOpn = \LOpn\lor\ROpn$, we also obtain ``left-syntactic'' and
``right-syntactic'' algebras
$\mathcal{A}\SubL,\mathcal{A}\SubR$ respectively such that
$
  \mathcal{A}\SubSyn = \brk{
    \LOpn\hookrightarrow\mathcal{A}\SubL,
    \ROpn\hookrightarrow\mathcal{A}\SubR
  }
$.

\begin{notation}[Syntactic embedding]
  The algebra $\mathcal{A}\SubSyn:\ALG{\SigModTT}{\Univ[\MSyn]}$
  determines, by projection, an object corresponding to each piece of syntax
  definable in \ModTT{}. For instance, the object of \ModTT{}-signatures is
  obtained by the projection $\mathcal{A}\SubSyn.\Sig:\Univ[\MSyn]$. To
  lighten the notation we will write these projections informally as $\EmbLR{\Sig}$,
  \etc, writing $\EmbL{\Sig},\EmbR{\Sig}$ for the corresponding projections
  from the induced left-syntactic and right-syntactic algebras respectively.
\end{notation}

To complete our axiomatization of the embedding of \ModTT{} into \PSTT,
we additionally require that under the assumption of $\SynOpn$, we have $\StOpn
= \EmbLR{\StOpn}$; in other words, we require $\StOpn :
\Compr{\Omega}{\SynOpn\hookrightarrow \EmbLR{\StOpn}}$.

\subsection{A parametric model of \texorpdfstring{\ModTT}{ModTT} in \texorpdfstring{\PSTT}{TT/PS}}\label{sec:modtt-in-pstt}

In this section, we exhibit a second algebra for \ModTT{} in \PSTT{} that lies
over the doubled embedding described in \cref{sec:doubled-embedding}.  To be
precise, we will construct an algebra with the following ``syntactic
extent'' type for some sufficiently large universe $\Univ$:
\[
  \mathcal{A} : \Compr{
    \ALG{\SigModTT}{\Univ}
  }{
    \SynOpn\hookrightarrow \mathcal{A}\SubSyn
  }
\]

We do not show every part of the construction of this ``parametric algebra'',
but instead give several representative cases to illustrate the comparative
ease of our approach in contrast to prior work on proof relevant logical
relations~\citep{sterling-angiuli-gratzer:2019,sterling-angiuli:2020,coquand:2019,kaposi-huber-sattler:2019}
and conventional logical
relations~\citep{crary:2017,gratzer-sterling-birkedal:2019,angiuli:2019} for
dependent types.

\subsubsection{Parametricity structure of judgments}\label{sec:parametricity-judgments}

We define a parametricity structure of signatures over
the purely syntactic parametricity structure of syntactic signatures
$\EmbLR{\Sig}$. Letting $\alpha<\beta<\gamma$, we define $\Sig:
\Univ{\beta}$ with the following interface:
\begin{align*}
  \Sig&: \Compr{\Univ{\gamma}}{\SynOpn\hookrightarrow \EmbLR{\Sig}}\\
  \Sig&\cong
  \Sum{
    \sigma : \EmbLR{\Sig}
  }{
    \Compr{
      \Univ{\beta}
    }{
      \SynOpn \hookrightarrow \EmbLR{\Val}\prn{\sigma}
    }
  }
\end{align*}

The construction of $\Sig$ proceeds in the following way. First, we
define $\Sig'$ to be the dependent sum
$\Sum{ \sigma : \EmbLR{\Sig} }{ \Compr{ \Univ{\beta} }{ \SynOpn \hookrightarrow
\EmbLR{\Val}\prn{\sigma} } }$.
We observe that there is a canonical partial isomorphism $f :
\MSyn{\Sig' \cong \EmbLR{\Sig}}$; supposing $\SynOpn=\top$, it suffices
to construct an ordinary isomorphism:
\begin{align*}
  \Sig' &=
  \Sum{
    \sigma : \EmbLR{\Sig}
  }{
    \Compr{\Univ{\beta}}{
      \SynOpn \hookrightarrow \EmbLR{\Val}\prn{\sigma}
    }
  }
  &&\text{def.\ of $\Sig'$}
  \\
  &=
  \Sum{
    \sigma : \EmbLR{\Sig}
  }{
    \Compr{\Univ{\beta}}{
      \top \hookrightarrow \EmbLR{\Val}\prn{\sigma}
    }
  }
  &&\text{$\SynOpn=\top$}
  \\
  &\cong
  \Sum{
    \sigma : \EmbLR{\Sig}
  }{
    \ObjTerm{}
  }
  &&\text{singleton}
  \\
  &\cong \EmbLR{\Sig}
  &&\text{trivial}
\end{align*}

Therefore, by \cref{cor:realignment} we
obtain $\Sig\cong \Sig'$ strictly extending
$\EmbLR{\Sig}$ as desired.
Next, we may define the collection of elements of a glued signature directly:
\begin{align*}
  \Val&: \Compr{\IHom{\Sig}{\Univ{\beta}}}{
    \SynOpn\hookrightarrow \EmbLR{\Val}
  }
  \\
  \Val\prn{\sigma,\tilde\sigma} &= \tilde\sigma
\end{align*}

\subsubsection{Parametricity structure of dependent products}
We show that $\Sig$ is closed under dependent product (dependent sums are analogous); fixing $\sigma_0
: \Sig$ and $\sigma_1 : \Val\prn{\sigma_0}
\to\Sig$, we may define
$\Pi\Sub{\Sig}\prn{\sigma_0,\sigma_1} : \Sig$ as follows. We desire
the first component to be the syntactic dependent product type
$
  \sigma_\Pi =
  \EmbLR{\Pi\Sub{\Sig}}\prn{\sigma_0,\lambda x:\EmbLR{\Val}\prn{\sigma_0}. \sigma_1\prn{x}}
$.\footnote{We note that we always have $\SynOpn=\top$ in scope when constructing an element of $\EmbLR{\Sig}$.}
For the second component, we note that the syntactic modality commutes with
dependent products up to isomorphism, so (using \cref{cor:realignment}) we may
define the second component lying strictly over $\sigma_\Pi$:
\begin{align*}
  \tilde{\sigma}_\Pi &: \Compr{\Univ{\beta}}{\SynOpn\hookrightarrow \EmbLR{\Val}\prn{\sigma_\Pi}}
  \\
  \tilde{\sigma}_\Pi &\cong \Pi\Sub{\Univ{\beta}}\prn{
    \Val\prn{\sigma_0},
    \Val\circ\sigma_1
  }
\end{align*}

Because we used the dependent product $\Pi\Sub{\Univ{\beta}}$ of \PSTT{}, we
automatically have an appropriate model of the $\lambda$-abstraction,
application, computation, and uniqueness rules without further work.

\begin{remark}
  The parametricity structure of the dependent product is the ``proof'' that
  our synthetic approach is a big step forward (\eg compared to the explicit
  constructions of \citet{kaposi-huber-sattler:2019,sterling-angiuli:2020}). In
  those formulations one constantly uses the fact that the gluing functor
  preserves finite limits, and it is non-trivial to show that the resulting
  construction is in fact a dependent product (which is here made trivial). The
  work did not disappear: it is in fact located in several pages of SGA~4, in
  which certain comma categories are proved to satisfy the Giraud axioms of a
  category of sheaves~\citep{sga:4}, a result that is easier to prove in
  generality than any specific type theoretic corollary.\footnote{Later on we
  simplify matters further by making use of the closure of presheaf topoi under
  gluing along continuous functors.}
\end{remark}

\subsubsection{Parametricity structure of types}

From the syntax of \ModTT{}, we have the signature of types $\EmbLR{\SigTp} : \EmbLR{\Sig}$ and its decoding
$\SigDyn{-}\SubSyn : \EmbLR{\Val\prn{\SigTp}\to\Sig}$; we must provide parametricity
structures for both. First, we may define a collection of small statically connected
parametricity structures for types, using \cref{cor:realignment}:
\begin{align*}
  \mathbf{Type}&: \Compr{\Univ{\beta}}{\SynOpn\hookrightarrow \EmbLR{\Val\prn{\SigTp}}}\\
  \mathbf{Type}&\cong
  \Sum{
    t : \EmbLR{\Val\prn{\SigTp}}
  }{
    \Compr{\Univ[\MDyn]{\alpha}}{
      \SynOpn \hookrightarrow \EmbLR{\Val}\SigDyn{t}\SubSyn
    }
  }
\end{align*}

We may therefore construct the parametricity structure of the signature of types:
\begin{align*}
  \SigTp&: \Compr{\Sig}{\SynOpn\hookrightarrow \EmbLR{\SigTp}}
  &
  \SigDyn{-}&: \Compr{\Val\prn{\SigTp} \to \Sig}{\SynOpn\hookrightarrow \SigDyn{-}\SubSyn}
  \\
  \SigTp&= \prn{\EmbLR{\SigTp}, \mathbf{Type}}
  &
  \SigDyn{\prn{t,\tilde{t}}}&= \prn{\SigDyn{t}\SubSyn, \tilde{t}}
\end{align*}

\subsubsection{Parametricity structure of observables}\label{sec:parametricity-booleans}

We have a type $\TpBool : \EmbLR{\Val}\prn{\EmbLR{\SigTp}}$ and two constants $\ValTt,\ValFf :
\EmbLR{\Val}\prn{\SigDyn{\EmbLR{\TpBool}}}$; we must construct parametricity
structures for all these. First, we define the collection of computable
booleans as follows, using \cref{cor:realignment} as usual:\footnote{Observe that the second component of the dependent sum is a singleton when $\SynOpn = \top$.}
\begin{align*}
  \mathbf{bool}&: \Compr{\Univ{\alpha}}{\SynOpn\hookrightarrow\EmbLR{\Val\prn{\SigDyn{\TpBool}}}}\\
  \mathbf{bool}&\cong
  \Sum{
    b : \EmbLR{\Val\prn{\SigDyn{\TpBool}}}
  }{
    \MDyn
    \MSem
    \brc{
      \tilde{b} : 2
      \mid
      b = \mathsf{case}\brk{\tilde{b}}\prn{\EmbLR{\ValTt},\EmbLR{\ValFf}}
    }
  }
\end{align*}

The application of the closed modality $\MDyn$ ensures that
the values of observable type have no static part (they are ``statically
connected''). We may therefore define the type of booleans:
\begin{align*}
  \TpBool&: \Compr{\Val\prn{\SigTp}}{\SynOpn\hookrightarrow \EmbLR{\TpBool}}
  \\
  \TpBool&= \prn{\EmbLR{\TpBool}, \mathbf{bool}}
\end{align*}

The parametricity structures for the observable values are defined as follows:
\begin{gather*}
  \ValTt,\ValFf: \Compr{\Val\prn{\SigDyn{\TpBool}}}{\SynOpn\hookrightarrow\EmbLR{\ValTt},\EmbLR{\ValFf}}\\
  \ValTt= \prn{
    \EmbLR{\ValTt},
    \DelimMin{1}
    \upeta\Sub{\MDyn}\prn{
      \upeta\Sub{\MSem}\prn{
        0
      }
    }
  }
  \qquad
  \ValFf= \prn{
    \EmbLR{\ValFf},
    \DelimMin{1}
    \upeta\Sub{\MDyn}\prn{
      \upeta\Sub{\MSem}\prn{
        1
      }
    }
  }
\end{gather*}

\subsubsection{Parametricity structure of computational effects}

In this section, we show how to construct a monad on parametricity structures
corresponding to the lax modality of \ModTT{}, following an internal version of
the recipe of \citet{goubault-larrecq-lasota-nowak:2008} for gluing together
two monads along a monad morphism.
Emanating from the syntax is an internal monad $\EmbLR{\SigCmp} :
\EmbLR{\Sig}\to\EmbLR{\Sig}$ on the internal category of syntactic signatures;
here we describe how to glue this monad together with a monad on
the internal category of purely semantic parametricity structures. Let
$\MonadT : \Univ[\MSem]{\beta}\to\Univ[\MSem]{\beta}$ be such a monad;
we furthermore have an internal functor $F : \EmbLR{\Sig}\to
{\Univ[\MSem]{\beta}}$ defined by taking the purely semantic part of the
collection of modules of every syntactic signature:
\begin{align*}
  F\prn{\sigma} &= \MSem{\EmbLR{\Val}\prn{\sigma}}
\end{align*}

\NewDocumentCommand\MonadRun{og}{\mathsf{run}\IfValueT{#1}{\Sub{#1}}\IfValueT{#2}{\prn{#2}}}

We parameterize the constructions of this section in a monad morphism
$\MonadRun : \EmbLR{\SigCmp}\to \MonadT$ over $F$ in the sense of
\citet{street:1972}, \ie an internal natural transformation $\MonadRun :
\MonadT\circ F \to F \circ \EmbLR{\SigCmp}$ satisfying a number of coherence
conditions. Following \citet{goubault-larrecq-lasota-nowak:2008}, we may glue
the two monads together along this morphism to define a monad on
$\Sig$, \ie the internal category of glued signatures and glued
modules determined by the constructions in \cref{sec:parametricity-judgments}.
Fixing $\sigma : \Sig$, we may define a type $\MonadT<\XTop>{\sigma} :
\Univ{\beta}$ as follows, writing $\pi_\sigma :
\MSem{\Val\prn{\sigma}}\to\MonadT{F\prn{\sigma}}$ for the induced projection in $\Univ[\MSem]{\beta}$:
\begin{align*}
  \MonadT<\XTop> &: \Prod{\sigma:\Sig}{
    \Compr{\Univ{\beta}}{
      \SynOpn\hookrightarrow\EmbLR{\Val}\prn{\EmbLR{\SigCmp}{\sigma}}
    }
  }
  \\
  \MonadT<\XTop>{\sigma} &\cong
  \Sum{
    x^\PtO : \EmbLR{\SigCmp}{\sigma}
  }{
    \Compr{
      \DelimMin{1}
      x^\PtC : \MonadT{\MSem{\Val\prn{\sigma}}}
    }{
      \MonadRun[\sigma]{
        \MonadT{\pi_\sigma}\prn{x^\PtC}
      }
      =
      \upeta\Sub{\MSem}\prn{x^\PtO}
    }
  }
\end{align*}

Therefore, we may define the monad on parametricity
structures for signatures as follows:
\begin{align*}
  \SigCmp&: \Sig\to \Sig\\
  \SigCmp{\sigma} &=
  \prn{
    \EmbLR{\SigCmp}{\sigma},
    \MonadT<\XTop>\prn{\sigma}
  }
\end{align*}

If $\ModTT{}$ is suitably extended by monadic operations (such as those
corresponding to exceptions, printing, a global reference cell, \etc), then the
assumptions of this section are readily substantiated by the corresponding
monad on purely semantic objects.
Some computational effects may require the constructions of \cref{sec:topos} to
be relativized from $\SET$ to a suitable presheaf category--- for instance,
partiality / general recursion might be modeled by replacing $\SET$ with the
topos of trees as in \citet{bmss:2011,paviotti:2016} (but we do not make any
claims in this direction).

\begin{example}\label{ex:crash}
  Suppose that $\ModTT{}$ were extended with an operation
  $\mathsf{throw} : \SigCmp{\sigma}$ for each signature $\sigma$, such
  that $\SigCmp$ corresponds to the exception monad. We may glue this
  together with the internal monad $\MonadT{X} = \MDyn{1+X}$ on the internal
  category of purely semantic parametricity structures. We must define a family
  of functions $\MonadRun[\sigma] : \MonadT{F{\prn{\sigma}}}\to
  F\prn{\EmbLR{\SigCmp}{\sigma}}$. Because $F\prn{\EmbLR{\SigCmp}{\sigma}}$ is purely dynamic and
  $\MDyn$ is a lex idempotent modality, any such function $\MonadRun[\sigma]$ is uniquely
  determined by a map $1+F\prn{\sigma}\to
  F\prn{\sigma}$, which we may choose as follows:
  \begin{mathpar}
    \mathit{inl}\prn{*} \mapsto \upeta\Sub{\MSem}\prn{\EmbLR{\mathsf{throw}}}
    \and
    \mathit{inr}\prn{x} \mapsto \upeta\Sub{\MSem}\prn{\EmbLR{\CmpRet}\prn{x}}
  \end{mathpar}

  Then, the monad $\MonadT<\XTop>{\sigma}$ on a parametricity structure
  $\sigma:\Sig$ associates to each syntactic computation
  $M:\EmbLR{\SigCmp}{\sigma}$ either a proof that $M$ throws the exception or a proof that $M$
  returns a computable value.
\end{example}

\section{Case study: representation independence for queues}

In this section, we consider an extension of \ModTT{} by an inductive type of
lists, as well as the \textcd{throw} effect of \cref{ex:crash}. For the purpose
of readability, we adopt a high-level notation for modules and their signatures
where components are identified by name rather than by position.

\subsection{A simulation structure between two queues}

\begin{figure}
  \begin{code}
signature QUEUE =
sig
  type t
  val emp : t
  val ins : bool * t $\rightharpoonup$ t
  val rem : t $\rightharpoonup$ bool * t
end
\-
structure ListQueue : QUEUE =
struct
  type t = bool list
  val emp = nil
  fun ins (x, q) = ret (x :: q)
  fun rem q =
    bind val rev\_q $\leftarrow$ rev q in
    case rev\_q of
    | nil $\Rightarrow$ throw
    | x :: xs $\Rightarrow$
      bind val rev\_xs $\leftarrow$ rev xs in
      ret (f, rev\_xs)
end
\-
structure BatchedQueue : QUEUE =
struct
  type t = bool list * bool list
  val emp = (nil, nil)
  fun ins (x, (fs, rs)) = ret (fs, x :: rs)
  fun rem (fs, rs) =
    case fs of
    | nil $\Rightarrow$
      bind val rev\_rs $\leftarrow$ rev rs in
      (case rev\_rs of
       | nil $\Rightarrow$ throw
       | x::rs' $\Rightarrow$ ret (x, rs', nil))
    | x::fs' $\Rightarrow$ ret (x, fs', rs)
end
  \end{code}
  \caption{Two implementations of a queue in an extended version of \ModTT{}, written in SML-style notation.}
  \label{fig:queues}
\end{figure}

We may define an abstract type of queues $\EmbLR{\mathcd{QUEUE}}$ together with two implementations as
in \citet{harper:2016}, depicted in \cref{fig:queues}.
We will observe that the semantic part of
$\mathcd{QUEUE}$ is the collection of
proof-relevant phase separated simulation relations between two given closed
syntactic queues. First, we note the meaning of $\mathcd{QUEUE}$ in the glued algebra:
\begin{align*}
  \mathcd{QUEUE} &\cong
  \Sum{t:\SigTp}{
    \SigDyn{t}
    \times
    \SigDyn{\TpBool * t\rightharpoonup t}
    \times
    \SigDyn{t\rightharpoonup{\TpBool* t}}
  }
\end{align*}

\NewDocumentCommand\Bits{}{\Con{bits}}

The two implementations internalize as elements $\cd{ListQueue} :
\EmbL{\Val\prn{\mathcd{QUEUE}}}, \cd{BatchedQueue} :
\EmbR{\Val\prn{\mathcd{QUEUE}}}$; these can be combined into
$\cd{Q}:\EmbLR{\Val\prn{\mathcd{QUEUE}}}$ by splitting:
\[
  \cd{Q} = \brk{\LOpn\hookrightarrow \cd{ListQueue},
\ROpn\hookrightarrow \cd{BatchedQueue}}
\]

We may define a purely dynamic type that represents the invariant structure on
a pair of queues using \cref{cor:realignment}, writing $\Bits{} = 2^\star$ for the \PSTT{}-type
of finite lists of bits and $\ceils{-}$ for the obvious
projection of a syntactic element of \ModTT{}-type $\Con{list}\prn{\Con{bool}}$ from finite list of bits.
\[
  \begin{array}{l}
    \Con{invariant} :
    \Compr{\Univ[\MDyn]{\alpha}}{
      \SynOpn \hookrightarrow \MDyn\relax\MSyn\relax {\Val\prn{\cd{Q}.\mathcd{t}}}
    }
    \\
    \Con{invariant} \cong\\
    \ \
    \begin{array}{l}
      \Sum{
        q : \EmbLR{\Val}\prn{\SigDyn{\cd{Q}.\mathcd{t}}}
      }\\
      \MSem
      \Compr{
        \vec{x},\vec{y},\vec{z} : \MDyn\Bits
      }{
        \vec{x} = \prn{\vec{y}+\mathit{rev}\prn{\vec{z}}}
        \land
        q = \brk{\LOpn \hookrightarrow \ceils{\vec{x}}\mid \ROpn\hookrightarrow \prn{\ceils{\vec{y}},\ceils{\vec{z}}}}
      }
    \end{array}
  \end{array}
\]

We may then define a single parametricity structure to unite the two
implementations under the invariant above, depicted in \cref{fig:simulation};
it is now possible to prove the central result of our case study, the representation independence theorem for queues.

\begin{figure}
  \raggedright
  A simulation over $\cd{Q} = \brk{\LOpn\hookrightarrow\cd{ListQueue}, \ROpn\hookrightarrow\cd{BatchedQueue}}$ consists of the following data:
  \begin{align*}
    t &: \Compr{\Val\prn{\SigTp}}{\SynOpn\hookrightarrow \cd{Q}.\mathcd{t}}
    \\
    \mathit{emp} &: \Compr{\Val\prn{\SigDyn{t}}}{\SynOpn\hookrightarrow \cd{Q}.\mathcd{emp}}
    \\
    \mathit{ins} &: \Compr{\Val\prn{\SigDyn{\TpBool*t\rightharpoonup t}}}{\SynOpn\hookrightarrow\cd{Q}.\mathcd{ins}}
    \\
    \mathit{rem} &: \Compr{
      \Val\prn{
        \SigDyn{t\rightharpoonup{\TpBool* t}}
      }
    }{\SynOpn\hookrightarrow\cd{Q}.\mathcd{rem}}
  \end{align*}

  \medskip

  These operations are implemented in \PSTT{} as follows.
  \begingroup
  \scriptsize
  \begin{align*}
    t &= \prn{\cd{Q}.\mathcd{t},\mathsf{invariant}}\\
    \mathit{emp} &= \prn{\cd{Q}.\mathcd{emp}, \prn{\gl{},\gl{},\gl{}}}\\
    \mathit{ins}\prn{\prn{b,x},\prn{q,\prn{\vec{x},\vec{y},\vec{z}}}} &=
    \prn{
      \cd{Q}.\mathcd{ins}\prn{b,q},
      \upeta\Sub{\MonadT}\prn{
        \brk{
          \LOpn\hookrightarrow b :: q,
          \ROpn\hookrightarrow{\prn{\mathcd{fst}\prn{q},b::\mathcd{snd}\prn{q}}}
        },
        \prn{x :: \vec{x}, x::\vec{y}, \vec{z}}
      }
    }
    \\
    \mathit{rem}\prn{q,\prn{\vec{x},\vec{y},\vec{z}}}.1 &= \cd{Q}.\mathcd{rem}\prn{q}\\
    \mathit{rem}\prn{q,\prn{\gl{},\gl{},\gl{}}}.2 &=
      \mathit{throw}\Sub{\MonadT}
    \\
    \mathit{rem}\prn{q,\prn{\vec{x}\ldots x}, x :: \vec{y},\vec{z}}.2
    &=
      \upeta\Sub{\MonadT}%
      (%
        (%
          \ceils{x},
          [%
            \LOpn\hookrightarrow \ceils{\vec{x}}
            \mid
            \ROpn\hookrightarrow (\ceils{\vec{y}},\ceils{\vec{z}})%
          ]%
        ),%
        (%
          x, (\vec{x},\vec{y},\vec{z})%
        )%
      )%
    \\
    \mathit{rem}(q,((\vec{x}\ldots x), \gl{}, \vec{z}\ldots x)).2
    &=
      \upeta\Sub{\MonadT}
      \prn{
        (%
          \ceils{x},
          [%
            \LOpn\hookrightarrow \ceils{\vec{x}}
            \mid
            \ROpn\hookrightarrow (\ceils{\mathit{rev}(\vec{z})},\ceils{\vec{y}})%
          ]%
        ),%
        (x, (\vec{x},\mathit{rev}(\vec{z}),\vec{y}))
      }
  \end{align*}

  \medskip
  where
  \begin{align*}
    \mathsf{invariant} &: \Compr{\Univ[\MDyn]{\alpha}}{
      \SynOpn \hookrightarrow \MDyn\relax\MSyn\relax {\Val\prn{\cd{Q}.\mathcd{t}}}
    }
    \\
    \mathsf{invariant} &\cong
    \Sum{
      q : \MSyn\relax {\Val\prn{\SigDyn{\cd{Q}.\mathcd{t}}}}
    }{
      \MSem
      \Compr{
        \vec{x},\vec{y},\vec{z} : \MDyn\Bits
      }{
        \vec{x} = \prn{\vec{y}+\mathit{rev}\prn{\vec{z}}}
        \land
        q = \brk{\LOpn \hookrightarrow \ceils{\vec{x}}\mid \ROpn\hookrightarrow \prn{\ceils{\vec{y}},\ceils{\vec{z}}}}
      }
    }
  \end{align*}
  \endgroup

  \caption{Constructing a simulation between the two queue implementations
  becomes a straightforward \emph{programming problem} in \PSTT.}

  \label{fig:simulation}
\end{figure}

\begin{theorem}\label{thm:repr-ind}
  Let $f : \EmbLR{\mathcd{QUEUE}\to \SigDyn{\TpBool}}$; then we have $f\prn{\cd{ListQueue}} =
  f\prn{\cd{BatchedQueue}}$.
\end{theorem}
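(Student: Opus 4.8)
The plan is to recognize that the data assembled in \cref{fig:simulation} constitutes a genuine module in the parametric algebra, and then to feed it to a lift of $f$. The four extension types displayed there assert exactly that the components $t,\mathit{emp},\mathit{ins},\mathit{rem}$ restrict under $\SynOpn$ to the corresponding projections of $\mathcd{Q}_{01}=\brk{\LOpn\hookrightarrow\mathcd{Q}_0\vertsep\ROpn\hookrightarrow\mathcd{Q}_1}$, so bundling them yields a module $\mathbf{Q}:\Ext{\Val\prn{\mathcd{QUEUE}}}{\SynOpn\hookrightarrow\mathcd{Q}_{01}}$ in the parametric algebra $\mathcal{A}$ of \cref{sec:modtt-in-pstt}. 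Because $\mathcal{A}$ is a $\SigModTT$-model whose syntactic part is the doubled syntax, the abstraction theorem (\cref{thm:reynolds}) furnishes a parametric lift $\mathbf{f}:\Ext{\Val\prn{\mathcd{QUEUE}\to\SigDyn{\TpBool}}}{\SynOpn\hookrightarrow f}$, which the parametricity structure of dependent products (\cref{sec:modtt-in-pstt}) identifies with a function $\Val\prn{\mathcd{QUEUE}}\to\Val\prn{\SigDyn{\TpBool}}$. This lift is the content of the fundamental lemma: every syntactic function is parametric precisely because the glued structure sits over the syntax.

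Next I would apply $\mathbf{f}$ to $\mathbf{Q}$, landing in the parametricity structure of observables. The $\SynOpn$-part of $\mathbf{f}\prn{\mathbf{Q}}$ is $f\prn{\mathcd{Q}_{01}}=\brk{\LOpn\hookrightarrow f\prn{\mathcd{Q}_0}\vertsep\ROpn\hookrightarrow f\prn{\mathcd{Q}_1}}$, whereas by the definition of $\mathbf{bool}$ in \cref{sec:parametricity-booleans} its semantic part inhabits $\MDyn{\MSem{\brc{\tilde b:2\mid f\prn{\mathcd{Q}_{01}}=\mathsf{case}\brk{\tilde b}\prn{\EmbLR{\ValTt},\EmbLR{\ValFf}}}}}$. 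The essential feature is that this is a \emph{single, global} bit $\tilde b$ shared across both phases --- not a left bit together with an independent right bit --- witnessing that $f\prn{\mathcd{Q}_{01}}$ is the diagonal image of one \ModTT{}-boolean $\mathsf{case}\brk{\tilde b}\prn{\ValTt,\ValFf}$. Restricting this equation along $\LOpn$ and along $\ROpn$ then exhibits $f\prn{\mathcd{Q}_0}$ and $f\prn{\mathcd{Q}_1}$ as the left- and right-syntactic images of that one boolean, which is the claimed equality.

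The step I expect to require the most care is discharging the modal wrapper, since $\tilde b$ is available only beneath the closed modalities $\MDyn$ and $\MSem$. To do so I would read $f\prn{\mathcd{Q}_0}=f\prn{\mathcd{Q}_1}$ as the diagonality of $f\prn{\mathcd{Q}_{01}}$ --- the existence of a single \ModTT{}-boolean realizing both sides --- and verify that this proposition is both $\MDyn$- and $\MSem$-modal, so that $\MDyn{\MSem{\prn{-}}}$ eliminates into it by \cref{lem:connectedness}. The $\MDyn$ half holds because observables are statically connected, making the equation vacuous under $\StOpn$; the $\MSem$ half holds because $\LOpn\land\ROpn=\bot$, so under either disjunct of $\SynOpn$ one side collapses to a terminal type and diagonality becomes trivially satisfiable. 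With both modalities discharged, producing $\tilde b$ completes the argument, and the remaining work --- assembling $\mathbf{Q}$ from \cref{fig:simulation} and invoking the abstraction theorem --- is routine.
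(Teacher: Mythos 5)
This is essentially the paper's own argument: interpret $f$ in the glued parametric algebra, apply it to the simulation module of \cref{fig:simulation}, and conclude from the fact that $\mathbf{bool}$ is defined along the diagonal, so the semantic component supplies a single bit $\tilde{b}$ realizing both $f\prn{\mathcd{Q}_0}$ and $f\prn{\mathcd{Q}_1}$. One caveat on your final paragraph (a step the paper elides entirely): the cleaner way to discharge $\MDyn{\MSem{\prn{-}}}$ is to compute global sections of the closed-modal type concretely --- at the closed/semantic point both modalities act trivially, so a global element yields an honest $\tilde{b}:2$ satisfying the defining equation in both syntactic components --- rather than arguing internally that ``diagonality'' is $\MSem$-modal, since under $\LOpn$ alone that proposition demands internal canonicity for the left copy of the syntax and is not trivially satisfiable.
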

\begin{proof}
  This can be seen by considering the image of $f$ under the
  parametricity interpretation of $\ModTT{}$ into \PSTT, $\tilde{f} :
  \mathcd{QUEUE}\to\SigDyn{\TpBool}$. Applying $\tilde{f}$ to the
  simulation queue defined in \cref{fig:simulation}, we have a single element of
  $\SigDyn{\TpBool}$ relating two syntactic booleans:
  \[
    b : \Compr{\SigDyn{\TpBool}}{\LOpn\hookrightarrow \EmbL{f\prn{\cd{ListQueue}}}, \ROpn\hookrightarrow\EmbR{f\prn{\cd{BatchedQueue}}}}
  \]

  But we have defined $\TpBool$ along the diagonal
  (\cref{sec:parametricity-booleans}), so this actually proves that either
  $f\prn{\cd{ListQueue}} = f\prn{\cd{BatchedQueue}} = \ValTt$ or $f\prn{\cd{ListQueue}} = f\prn{\cd{BatchedQueue}} = \ValFf$.
\end{proof}

\section{The topos of phase separated parametricity structures}\label{sec:topos}

The simplest way to substantiate the type theory \PSTT{} of \cref{sec:paramtt} is
to use the existing infrastructure of Grothendieck topoi and Artin
gluing~\citep{sga:4}; every topos possesses an extremely rich \emph{internal
type theory}, so our strategy will be roughly as follows:

\begin{enumerate}

  \item Embed the syntax of \ModTT{} into a topos $\PrTop{\ThCat}$; this will
    be the topos corresponding to the free cocompletion of the syntactic
    category $\ThCat$ (see \cref{notation:th-cat}). The copower $2\cdot\PrTop{\ThCat}$ will then serve as a
    suitable index for binary parametricity.

  \item Identify a topos $\SP$ that captures the notion of phase distinction:
    a type in the internal language of $\SP$ should be a set that
    has both a static part and a dynamic part depending on it.

  \item Glue the topos of (doubled) syntax $2\cdot\PrTop{\ThCat}$ and the topos
    of semantics $\SP$ together to form a topos $\XTop$ of \emph{phase
    separated parametricity structures}: a type in the internal language of
    $\XTop$ will have several aspects corresponding to the orthogonal
    distinctions ((left syntax, right syntax), semantics) and (static,
    dynamic). The topos $\XTop$ then has enough structure to model all of
    \PSTT{}.

\end{enumerate}

\begin{figure}[h!]
  \begin{center}
    \includegraphics[width=3in]{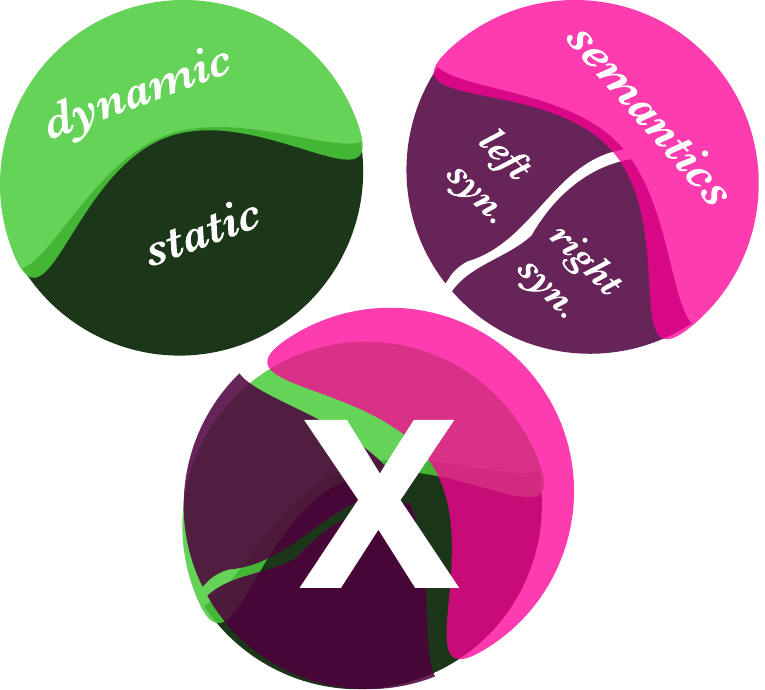}
  \end{center}
  \caption{A geometrical depiction of the topos $\XTop$ of parametricity structures: dark and light regions in the same color-range indicate complementary open and closed subtopoi corresponding to the static--dynamic and syntactic--semantic distinctions.}
  \label{fig:geometry}
\end{figure}

\subsection{Topo-logical metatheory of programming languages}

To prove a property of a logical system, it has been common practice
since the famous work of \citet{tarski-mckinsey:1946} and \citet{kripke:1965}
to interpret the logic into the preorder $\Opn{X}$ of opens of a carefully
chosen topological space $X$.  In this way, one may study a given axiom by
finding a space whose logic of opens either verifies or refutes it. One quickly
runs up against the limitations of this ``topo-logical'' approach, however: it
is not appropriate to interpret the terms $\Gamma\vdash a : A$ of a programming
language as morphisms in a preorder, because there exist non-equal $a,b : A$!

\paragraph{From opens to sheaves}

The problem identified above can be partly resolved by generalizing the concept
of an \emph{open} of a topological space to a \emph{sheaf} on a topological
space. While an open $U \in\Opn{X}$ can be thought of as a continuous mapping
from $X$ to the space of truth values, a sheaf $E : \Sh{X}$ can be thought of
as a continuous mapping from $X$ to the space of all sets. From this characterization, it is clear that sheaves
generalize opens, and one might hope this would make enough room for the
investigation of most type theoretic problems.

\paragraph{A category of points?}

Although the generalization to sheaves solves many problems for the study of
logic qua type theory, it is not enough. In programming languages one considers
semantics in functor categories $\ECat = \brk{\CCat,\SET}$ as in the work of
\citet{reynolds:1995,oles:1986}, but $\ECat$ is not likely to be of the form
$\Sh{X}$ for a topological space $X$ unless $\CCat$ is a preorder. The
geometric way to view this problem is as follows: if $\brk{\CCat,\SET}$ were
the category of sheaves on a topological space, the collection of points of
this space would have to form a category and not a preorder.

The preponderance of useful categories that behave \emph{as if they were} the
category of sheaves on a space led algebraic geometers under the leadership of
Grothendieck in the early 1970s to consider a new kind of generalized
space called a \textbf{topos} defined in terms of such categories, in which
the refinement relation between two points might be witnessed by non-trivial
evidence rather than being at most true~\citep{sga:4}.
The importance of this ``proof-relevance'' in geometry is
as follows: while there cannot be a topological space whose collection of
points is the category of local algebras for a given ring, there is a category
that behaves \emph{as if it were} the category of sheaves
on such a space, if it could exist.

\paragraph{Logoi and topoi}

What does it mean to behave like a category of sheaves on a space? The
behavioral properties of such a category, called a \emph{logos} by
\citet{anel-joyal:2021}, were concentrated by Giraud into a several simple
axioms.

\begin{definition}[Logos]
  A \emph{logos}, or category of sheaves, is a category closed under finite
  limits and small colimits, such that colimits commute with finite limits,
  sums are disjoint, and quotients are effective;\footnote{The condition that
  colimits commute with finite limits is analogous to the way that finite meets
  distribute over joins in $\Opn{X}$ for a topological space $X$.} for
  technical reasons one also requires that a logos be presentable by generators
  and relations.  A morphism between logoi is just a functor that preserves
  this structure, \ie finite limits and small colimits.
\end{definition}

Grothendieck's important idea was to take the (very large) category of logoi
and then define a new kind of space in terms of these, which he called the
\emph{topos}.

\begin{definition}[Topos]\label{def:topos}
  A topos $\XTop$ is defined by specifying a logos conventionally called
  $\Sh{\XTop}$, the category of ``sheaves on $\XTop$''; a continuous map of
  topoi $\Mor[f]{\XTop}{\YTop}$ is defined by specifying a morphism of logoi
  $\Mor[f^*]{\Sh{\YTop}}{\Sh{\XTop}}$ called the \emph{inverse image} of $f$,
  \ie a functor that is left exact (preserves finite limits) and cocontinuous
  (preserves colimits).  In this way, by definition, one has a contravariant
  equivalence $\Sh{-} : \Mor{\OpCat{\mathbf{Topos}}}{\mathbf{Logos}}$.
\end{definition}

\begin{remark}
  The left exactness and cocontinuity of morphisms of logoi generalizes the way that the
  inverse image of a continuous map between topological spaces preserves all
  joins and finite meets, as a morphism between frames of open sets.
\end{remark}

\begin{definition}[Direct image]
  For a morphism of topoi $\Mor[f]{\XTop}{\YTop}$, the cocontinuity of the
  inverse image $\Mor[f^*]{\Sh{\YTop}}{\Sh{\XTop}}$ implies that it is a left
  adjoint $f^*\dashv f_*$; the right adjoint $f_*$ is called the \emph{direct
  image}.
\end{definition}

The style of \cref{def:topos} is analogous to how a topological space is defined by
specifying what its open sets are! In the case of topoi sheaves play the role
that opens play in topological spaces.  A topological space $X$ gives rise to a topos
$\EnvTop{X}$, setting $\Sh{\EnvTop{X}}$ to be the classic category of sheaves on $X$;
but the language of topoi is more practical than the language of topological
spaces, because it contains more of the objects that we need in order to solve
type theoretic and logical problems.

\begin{example}
  The domain interpretation of programming languages can be seen to be an
  instance of this generalized ``topo-logical'' approach: while we are not
  aware of any topological space whose category of sheaves embeds the
  $\omega$-CPOs, it is possible to find a topos with this property, making the
  Scott semantics of programming languages a special case of sheaf
  semantics~\citep{fiore-rosolini:1997}.
\end{example}

\subsection{The language of topoi}

We give a crash course in the language of topoi insofar as it is
pertinent to the present paper. This section can be skipped and referred back
to by those comfortable with topoi; not all (or even most) of the material
presented here is necessary to understand our constructions, but we provide it
to assist the reader in developing topological intuitions for topoi which were
important for developing the present work. More details can be found in several
cited
resources~\citep{anel-joyal:2021,vickers:2007,johnstone:2002,stacks-project,wraith:1975}.

\begin{definition}[Subtopos]
  A subtopos $\XTop\subseteq\YTop$ is given by a logos $\Sh{\XTop}$ that is a
  subcategory of $\Sh{\YTop}$.
\end{definition}

\begin{definition}[Embedding]
  A morphism of topoi $\Mor[f]{\XTop}{\YTop}$ is called an \emph{embedding},
  written $\EmbMor[f]{\XTop}{\YTop}$, when the direct image $f_*$ functor is fully
  faithful.
\end{definition}

\begin{definition}[Equivalence]
  A morphism of topoi $\Mor[f]{\XTop}{\YTop}$ is called an \emph{equivalence}
  when the inverse image functor $f^*$ (equivalently, the direct image functor
  $f_*$) is an equivalence of categories.
\end{definition}

\begin{definition}[Opens of a topos]\label{def:open}
  An \emph{open} of a topos $\XTop$ is defined to be a subterminal object in
  $\Sh{\XTop}$, \ie a proof-irrelevant proposition in the internal type
  theory of $\XTop$. We will write $\Opn{\XTop}$ for the frame of opens of the
  topos $\XTop$.
  An open $U \in \Opn{\XTop}$ gives rise to an \emph{open subtopos}
  ${\XTop_U}\subseteq{\XTop}$: we define $\Sh{\XTop_U}$ to be the full subcategory
  of $\Sh{\XTop}$ spanned by objects $E$ such that the canonical map
  $\Mor{E}{E^U}$ is an isomorphism. Equivalently, $\Sh{\XTop_U}$ is the slice
  logos $\Sl{\Sh{\XTop}}{U}$.
\end{definition}

\begin{definition}[Open immersion]
  An embedding of topoi $\EmbMor[f]{\XTop}{\YTop}$ is called an \emph{open
  immersion}, written $\Mor|open embedding|[f]{\XTop}{\YTop}$ when it factors
  through an equivalence $\XTop\simeq\YTop_U$ and an open subtopos inclusion $\YTop_U\subseteq\YTop$
  for some open $U\in \Opn{\YTop}$ in the following sense:
  \[
    \begin{tikzpicture}[diagram]
      \node (0) {$\XTop$};
      \node (2) [right = of 0] {$\YTop$};
      \node (1) [between = 0 and 2,yshift = -1.5cm] {$\YTop_U$};
      \path[->] (0) edge node [sloped,below] {$\simeq$} (1);
      \path[open embedding] (0) edge node [above] {$f$} (2);
      \path[open embedding] (1) edge (2);
    \end{tikzpicture}
  \]
\end{definition}

\begin{definition}[Closed complement]\label{def:closed-complement}
  Let $U\in \Opn{\XTop}$ be an open of a topos $\XTop$; the \emph{closed
  complement} of the open subtopos $\XTop_U$ can be defined by means of the
  full subcategory $\Sh{\XTop}\Sup{\lor U} \subseteq \Sh{\XTop}$ spanned by objects $E$ such that the
  canonical map $\Mor{E}{E\sqcup\Sub{E\times U}U}$ is an isomorphism, where
  $E\sqcup\Sub{E\times U} U$ is the following pushout:
  \[
    \DiagramSquare{
      nw = E\times U,
      sw = E,
      ne = U,
      se = E\sqcup\Sub{E\times U}U,
      se/style = pushout,
    }
  \]

  Then the closed complement $\XTop\Sub{\setminus U}\subseteq \XTop$ is defined by the
  identification $\Sh{\XTop\Sub{\setminus{U}}} = \Sh{\XTop}\Sup{\lor U}$.
\end{definition}

It is not hard to show that a sheaf on the closed complement
$\XTop\Sub{\setminus{U}}$ is the same as a sheaf $E$ on $\XTop$ that is
\emph{$U$-connected} in the sense that $E\times U \cong U$, or equivalently, $E^U
\cong \ObjTerm{\Sh{\XTop}}$.

\begin{definition}[Closed immersion]\label{def:closed-immersion}
  Likewise an embedding of topoi is called a \emph{closed} immersion, written
  $\Mor|closed embedding|[f]{\XTop}{\YTop}$ when it factors through an
  equivalence $\XTop\simeq\YTop\Sub{\setminus{U}}$ and a closed subtopos inclusion $\YTop\Sub{\setminus U} \subseteq
  \YTop$ for some open $U\in\Opn{\YTop}$.
\end{definition}

The open and closed subtopoi corresponding to $U\in\Opn{\XTop}$ are
complementary in the sense of classical topology, but this does not mean there
is no substance lying between them. Just as in classical topology, between an open
subspace and its closed complement lies a ``boundary'' $\partial_\XTop{U} =
\overline{\XTop_U}\cap \XTop\Sub{\setminus U}\subseteq\XTop$ where
$\overline{U}$ is the closure of $U$.

\begin{definition}[Closure of an open subtopos]
  If $U\in\Opn{\XTop}$ is an open of a topos $\XTop$, the \emph{closure} of $U$
  is the smallest closed subtopos $\overline{\XTop_U}\subseteq \XTop$ that
  contains $\XTop_U$. Writing $\Mor|open embedding|[\OpEmb]{\XTop_U}{\XTop}$
  for the open immersion corresponding to $U$, we have a lex idempotent monad
  $\Mor[\OpEmb_*\OpEmb^*]{\Opn{\XTop}}{\Opn{\XTop}}$ on the frame of opens
  given by the adjunction $\OpEmb^*\dashv\OpEmb_*$. Considering the
  characterization of $\Sh{\XTop_U}$ as the slice $\Sl{\Sh{\XTop}}{U}$, we see
  that $\OpEmb_*\OpEmb^*V$ is the Heyting implication $\prn{U\Rightarrow V}$
  for any $V\in \Opn{\XTop}$.
  The \emph{closure} of $U$ can then be computed to be the closed complement of
  the open $\OpEmb_*\OpEmb^*\bot = \lnot{U}$, \ie the closed subtopos
  $\XTop\Sub{\setminus\lnot{U}}\subseteq\XTop$. Explicitly, a sheaf on
  $\XTop\Sub{\setminus\lnot{U}}$ is a sheaf on $\XTop$ that is
  $\lnot{U}$-connected, \ie becomes a singleton when restricted to
  $\XTop\Sub{\lnot{U}}$.
\end{definition}

\begin{definition}[Fringe of an open subtopos]\label{def:fringe}
  Let $U\in\Opn{\XTop}$ be an open of a topos $\XTop$; the \emph{fringe}
  $\partial_\XTop{U}\subseteq\XTop$ of the open subtopos $\XTop_U$ is defined
  to be the intersection of the closure of $\XTop_U$ with the closed complement
  $\XTop\Sub{\setminus{U}}$.
  \[
    \DiagramSquare{
      nw/style = pullback,
      nw = \partial_\XTop{U},
      ne = \overline{\XTop_U}\mathrlap{{}=\XTop\Sub{\setminus{\lnot{U}}}},
      se = \XTop,
      sw = \XTop\Sub{\setminus{U}},
      south/style = closed embedding,
      east/style = closed embedding,
      west/style = {closed embedding,exists},
      north/style = {closed embedding,exists},
    }
  \]

  By a further computation, we may observe that the fringe
  $\partial_\XTop{U}$ is the closed subtopos corresponding to the open
  $U\lor\lnot{U}\in\Opn{\XTop}$, \ie we have $\partial_\XTop{U} =
  \XTop\Sub{\setminus U\lor\lnot{U}}$. This is not trivial unless $U$ is
  simultaneously closed and open!
\end{definition}

\begin{remark}
  The above shows the geometric sense in which the open and closed subtopoi are
  complementary; although we always have $\OpEmb_*\OpEmb^*\ClEmb_*\ClEmb^*E \cong
  \ObjTerm{\Sh{\XTop}}$, we do not have
  $\ClEmb_*\ClEmb^*\OpEmb_*\OpEmb^*E\cong\ObjTerm{\Sh{\XTop}}$ except when $U$ is
  clopen.
\end{remark}

\begin{definition}[The fringe functor]\label{def:fringe-functor}
  Given an open $U\in\Opn{\XTop}$, define what is called the \emph{fringe
  functor} $\Mor[F_{U}]{\Sh{\XTop_U}}{\Sh{\XTop\Sub{\setminus{U}}}}$ to be the
  following composite:
  \[
    \begin{tikzpicture}[diagram]
      \node (0) {$\Sh{\XTop_U}$};
      \node (1) [right = of 0] {$\Sh{\XTop}$};
      \node (2) [right = of 1] {$\Sh{\XTop\Sub{\setminus{U}}}$};
      \path[->] (0) edge node [above] {$\OpEmb_*$} (1);
      \path[->] (1) edge node [above] {$\ClEmb^*$} (2);
      \path[->,exists,bend right=30] (0) edge node [below] {$F_U$} (2);
    \end{tikzpicture}
  \]
\end{definition}

The relationship between the fringe functor corresponding to $U\in\Opn{\XTop}$
(\cref{def:fringe-functor}) and the fringe of the open subtopos
$\XTop_U\subseteq\XTop$ (\cref{def:fringe}) is expressed in the
\cref{lem:fringe-vs-fringe,thm:recollement} below.

\begin{lemma}[\citet{wraith:1975}]\label{lem:fringe-vs-fringe}
  If $E$ is a sheaf on $\XTop_U$, the sheaf $F_U{E}$ on
  $\XTop\Sub{\setminus{U}}$ is trivial away from the fringe
  $\partial_\XTop{U}\subseteq\XTop\Sub{\setminus{U}}$, \ie in $F_U{E}$
  restricts to the terminal sheaf in the open complement of
  $\partial_\XTop{U}\subseteq\XTop\Sub{\setminus{U}}$.
\end{lemma}

\begin{proof}
  As a closed subtopos of $\KTop := \XTop\Sub{\setminus{U}}$, the fringe
  $\partial_\XTop{U}$ is the complement of the open
  $\ClEmb^*\prn{U\lor\lnot{U}}$, which is equal to $\ClEmb^*\lnot{U}$ because
  inverse image is cocontinuous and $\ClEmb^*U = \bot$.  Therefore we may
  reconstruct $\partial_\XTop{U}\subseteq\KTop$ as
  $\KTop\Sub{\setminus{\ClEmb^*\lnot{U}}}$ and our goal is to show
  that $F_UE\in \Sh{\KTop\Sub{\setminus\ClEmb^*\lnot{U}}}\subseteq\Sh{\KTop}$, which is the same as to show
  $F_U{E}\times \ClEmb^*\lnot{U} \cong \ClEmb^*\lnot{U}$.
  \begin{align*}
    F_U{E}\times \ClEmb^*\lnot{U} &\cong
    \ClEmb^*\OpEmb_*{E}\times \ClEmb^*\lnot{U}
    &&\text{\cref{def:fringe-functor}}
    \\
    &\cong
    \ClEmb^*\prn{\OpEmb_*{E}\times \lnot{U}}
    &&\text{$\ClEmb^*$ lex}
    \\
    &\cong
    \ClEmb^*\prn{\OpEmb_*\OpEmb^*\OpEmb_*{E}\times \lnot{U}}
    &&\text{$\OpEmb_*\OpEmb^*$ idempotent}
    \\
    &\cong
    \ClEmb^*\prn{\DelimMin{1}\prn{\OpEmb_*{E}}^U\times \lnot{U}}
    &&\text{$\OpEmb_*\OpEmb^* \cong \prn{-}^U$}
    \\
    &\cong
    \ClEmb^*\lnot{U}
    &&
    \qedhere
  \end{align*}
\end{proof}

\begin{theorem}[Artin gluing / Recollement~\citep{sga:4}]\label{thm:recollement}
  A topos $\XTop$ can be reconstructed up to equivalence from the data of a
  partition $U\in\Opn{\XTop}$ into open and closed subtopoi:\footnote{The construction
  takes place in the category of categories and functors, rather than the
  category of logoi and morphisms of logoi; this is because the fringe funtor $F_U$ need not be cocontinuous.}
  \[
    \DiagramSquare{
      ne = \ArrCat{\Sh{\XTop\Sub{\setminus{U}}}},
      se = {\Sh{\XTop\Sub{\setminus{U}}}},
      east = \Cod,
      sw = \Sh{\XTop_U},
      south = F_U,
      nw/style = pullback,
      west = \OpEmb^*,
      nw = \mathllap{\Sh{\XTop}\simeq{}}\brc{\Sh{\XTop\Sub{\setminus{U}}}}\downarrow F_U,
      width = 3cm,
    }
  \]

  Conversely if $\Mor[F]{\Sh{\UTop}}{\Sh{\KTop}}$ is a left exact and
  accessible functor between logoi, then there exists a topos $\XTop$ together
  with an open $U\in\Opn{\XTop}$ such that $\UTop\simeq \XTop_U$ and
  $\KTop\simeq\XTop\Sub{\setminus{U}}$ configured like so:
  \[
    \DiagramSquare{
      ne = \ArrCat{\Sh{\KTop}},
      se = {\Sh{\KTop}},
      east = \Cod,
      sw = \Sh{\UTop},
      south = F,
      nw/style = pullback,
      west = \OpEmb^*,
      nw = \Sh{\XTop},
    }
  \]

  Above, $\XTop$ is called the \emph{Artin gluing} of $F$; the open
  $U\in\Opn{\XTop}$ that reconstructs $\prn{\UTop,\KTop}$ as
  $\prn{\XTop_U,\XTop\Sub{\setminus{U}}}$ can be defined to be the subterminal
  sheaf $\prn{\ObjTerm{\Sh{\UTop}},
  \Mor{\ObjInit{\Sh{\KTop}}}{F\prn{\ObjTerm{\Sh{\UTop}}}}}$.
\end{theorem}

\begin{remark}[Geometric gluing]
  In certain cases, including those investigated in this paper, the fringe
  functor $F_U$ turns out to be either the direct image or inverse image part of
  a morphism of topoi; in those cases, a construction of the Artin gluing taking
  place in the category of \emph{topoi} rather than the category of categories is
  available. When $F$ is the global sections functor, the Artin gluing is called
  the \emph{scone} or \emph{Sierpi\'nski cone}; when $F = f_*$ (resp.\ $F=f^*$)
  for a morphism of topoi $\Mor[f]{\YTop}{\ZTop}$, the Artin gluing is referred
  to by \citet{johnstone:topos:1977} as the open (resp.\ closed) mapping cylinder of
  $f$, depicted below:
  \[
    \DiagramSquare{
      nw = \YTop,
      ne = \ZTop,
      sw = \YTop\times\SP,
      se = \TopIdent{M}_f^\PtO,
      north = f,
      se/style = pushout,
      west/style = closed embedding,
      east/style = closed embedding,
      west = \prn{\ArrId{},\PtC},
      east = \ClEmb,
    }
    \qquad
    \DiagramSquare{
      nw = \YTop,
      ne = \ZTop,
      sw = \YTop\times\SP,
      se = \TopIdent{M}_f^\PtC,
      north = f,
      se/style = pushout,
      west/style = open embedding,
      east/style = open embedding,
      west = \prn{\ArrId{},\PtO},
      east = \OpEmb,
    }
  \]

  Above we have $\Sh{\TopIdent{M}_f^\PtO} \simeq \brc{\Sh{\ZTop}}\downarrow
  f_*$ and $\Sh{\TopIdent{M}_f^\PtC}\simeq\brc{\Sh{\YTop}}\downarrow f^*$.
  Indeed, our \cref{con:xtop} of the topos of parametricity structures in
  \cref{sec:parametricity-topos} is an example of the open mapping cylinder
  depicted above on the left.
\end{remark}

\subsubsection{Classifying topoi and geometric figures}

The same topos $\XTop$ can be profitably understood in two \emph{different}
ways: what happens when you map into it, and what happens when you map out of
it. These two perspectives correspond respectively to viewing a topos as
\emph{classifying space} of some kind of data, \emph{vs.}\ as a \emph{geometrical
figure}; these correspond to algebraic and geometrical perspectives on topoi
respectively. In other words, a morphism of topoi $\Mor{\XTop}{\YTop}$ can be
thought of as constructing a ``point'' of $\YTop$ in the language of
$\Sh{\XTop}$, but it can also be thought of as drawing an $\XTop$-shaped figure
in $\YTop$.
We use both perspectives in this paper in a critical way; in particular, the
Sierpi\'nski topos $\SP$ appears in our construction both as a geometrical
figure and as a classifier (see \cref{rem:sierp-duality}).

\begin{example}[The punctual topos]
  The logos of sets $\SET$ is the category of sheaves on the one-point space.
  Therefore we define the \emph{punctual} topos $\PtTop : \mathbf{Topos}$ to
  by the identification $\Sh{\PtTop} = \SET$.
  From the geometrical point of view, a morphism $\Mor{\PtTop}{\XTop}$
  corresponds to constructing a \emph{point} of $\XTop$. From the algebraic
  point of view, one thinks of a morphism $\Mor{\XTop}{\PtTop}$ as constructing
  no data whatsoever in the language of $\Sh{\XTop}$. For this reason, there
  is always a unique such morphism and hence $\PtTop$ is the terminal topos.
\end{example}

\begin{example}[Presheaves and finite limit theories]\label{ex:psh}
  Let $\CCat$ be a small category; then $\Psh{\CCat}$ is the category of
  \emph{presheaves} on $\CCat$, \ie functors $\Mor{\OpCat{\CCat}}{\SET}$. We
  write $\PrTop{\CCat}$ for the topos whose sheaves are the presheaves on
  $\CCat$, \ie $\Sh{\PrTop{\CCat}} = \Psh{\CCat}$.
  Suppose that $\CCat$ has finite limits, \ie $\CCat$ is the classifying
  category for a finite limit theory $\mathbb{T}$; then Diaconescu's
  theorem~\citep{diaconescu:1975} states that a morphism
  $\Mor{\YTop}{\PrTop{\CCat}}$ corresponds to a left exact functor
  $\Mor{\CCat}{\Sh{\YTop}}$, \ie a model of $\mathbb{T}$ in $\Sh{\YTop}$.

  Hence the algebraic perspective says that $\PrTop{\CCat}$ is the classifier
  of $\mathbb{T}$-models. On the other hand, $\PrTop{\CCat}$ is a geometric
  figure that captures the configuration of \emph{all} $\mathbb{T}$-models and
  their homomorphism. It is helpful to consider the case where $\mathbb{T}$ is
  the theory of groups: then, for example, a morphism of topoi
  $\Mor{\PrTop{\CCat}}{\SP}$ corresponds to a diagram of truth values that is
  labeled by the collection of \emph{all} groups and group homomorphisms.
\end{example}

\begin{example}[Sierpi\'nski topos]\label{ex:sierp}
  The logos of \emph{families} of sets $\ArrCat{\SET}$ is also the category of
  sheaves on the classic Sierpi\'nski space $S =
  \brc{\brc{},\brc{\PtO},\brc{\PtC,\PtO}}$. Hence we define the Sierpi\'nski
  \emph{topos} $\SP$ by the identification $\Sh{\SP} = \ArrCat{\SET}$. As a
  geometrical figure, the Sierpi\'nski topos is a \emph{directed interval} in
  that it has two points $\Mor[\PtC,\PtO]{\PtTop}{\SP}$ and a morphism
  $\Mor{\PtC}{\PtO}$.  A morphism $\Mor{\SP}{\XTop}$ corresponds to a pair of
  points $\Mor[x,y]{\PtTop}{\XTop}$ together with a morphism of points
  $\Mor{x}{y}$. The open point determines a distinguished open that we
  might write $\brc{\PtO}\in\Opn{\SP}$, which is defined by the subterminal
  family of sets $\Mor{\ObjInit{\SET}}{\ObjTerm{\SET}} : \ArrCat{\SET}$.

  From the algebraic point of view, $\SP$ classifies
  \emph{opens} or \emph{propositions} in that every open subtopos
  $\XTop_U\subseteq \XTop$ arises in an essentially unique way by pullback
  along the open point
  $\OpnEmbMor[\PtO]{\PtTop}{\SP}$:
  \[
    \DiagramSquare{
      ne = \PtTop,
      se = \SP,
      sw = \XTop,
      nw = \XTop_U,
      south = \floors{U},
      east = \PtO,
      south/style = exists,
      west/style = open embedding,
      east/style = open embedding,
      nw/style = pullback,
    }
  \]

  The characteristic map $\floors{U}$ has a universal property in the inverse
  image direction, namely it is the unique map such that
  $\floors{U}^*\brc{\PtO} = U \in \Opn{\XTop}$. A geometric/pointwise intuition
  is also helpful: the characteristic map $\Mor[\floors{U}]{\XTop}{\SP}$ sends
  a point $x\in \XTop$ to the open point $\PtO\in \SP$ if $x\in \XTop_U$, and
  sends it to the closed point $\PtC$ if $x\not\in\XTop_U$.
\end{example}

\begin{computation}
  Given an open $U\in\Opn{\XTop}$, how is the corresponding characteristic map
  $\Mor[\floors{U}]{\XTop}{\SP}$ actually constructed? Dualizing into the
  language of logoi, we must construct a lex and cocontinous functor
  $\Mor[\floors{U}^*]{\Sh{\SP}}{\Sh{\XTop}}$. Note that $\Sh{\SP}$ is the
  category of presheaves on the interval category $\brk{1} =
  \brc{\Mor|>->|{0}{1}}$ and that moreover, $\brk{1}$ has finite limits. Hence
  as discussed in \cref{ex:psh}, Diaonescu's theorem states that a lex and
  cocontinuous functor $\Mor{\Sh{\SP}}{\Sh{\XTop}}$ is the same thing as a lex
  functor $\Mor{\brk{1}}{\Sh{\XTop}}$, which can be seen to be the same thing
  as a subterminal object in $\Sh{\XTop}$, \ie an open of $\XTop$.

  A slightly more elementary way to understand what is happening here is to
  observe that $\Sh{\SP}$ is the free cocompletion of $\brk{1}$, so a
  cocontinuous morphism out of $\Sh{\SP}$ has freedom only in the base case: it
  must take formal colimits of $\brk{1}$ to actual colimits of $\Sh{\XTop}$.
\end{computation}

It is helpful to investigate the morphisms $\Mor[\PtO,\PtC]{\PtTop}{\SP}$ in
terms of the geometry--algebra duality, which we depict in \cref{tab:sierp-points}.

\begin{table}
  \begin{center}
    \begin{tabular}{lll}
      \toprule
       & \textbf{Geometric perspective} & \textbf{Algebraic perspective}\\
      \midrule
      $\Mor|closed embedding|[\PtC]{\PtTop}{\SP}$ & the closed point of $\SP$ & the proposition $\bot$ in $\SET$\\
      $\Mor|open embedding|[\PtO]{\PtTop}{\SP}$ & the open point of $\SP$ & the proposition $\top$ in $\SET$\\
      $\Mor{\PtC}{\PtO}$ & the directed interval & the implication $\bot\leq \top$\\
      \bottomrule
    \end{tabular}
  \end{center}

  \caption{Investigating the points of the Sierpi\'nski topos from the
  geometrical (topos as figure) and algebraic perspectives (topos as
  classifier).}
  \label{tab:sierp-points}
\end{table}

\subsection{Phase separation and the Sierpi\'nski topos}

We intend to use the Sierpi\'nski topos $\SP$ to capture the notion of phase
separation: in essence, a sheaf on $\SP$ will be a kind of ``phase separated
set''. To substantiate this intuition, we must consider an explicit
construction of $\SP$ that allows us to characterize its sheaves in terms of
something familiar.

\begin{computation}\label{comp:sp-top}

  If a sheaf on $\SP$ is just a family of sets, then we may profitably view the
  downstairs part of such a family as its ``static component'', the upstairs part
  as its ``dynamic component''; the projection expresses the dependency of
  dynamic on static.  The inverse image of the open point $\OpnEmbMor[\PtO]{\PtTop}{\SP}$ is
  the codomain functor $\Mor[\Cod]{\ArrCat{\SET}}{\SET}$, and the inverse image
  of the closed point $\ClEmbMor[\PtC]{\PtTop}{\SP}$ is the domain functor
  $\Mor[\Dom]{\ArrCat{\SET}}{\SET}$.
\end{computation}

Of course, we might equally well replace the (static, dynamic) intuition with
(syntactic, semantic), reflecting the fact that splitting a logical relation
into syntactic and semantic parts is \emph{itself} a kind of phase distinction
in the language of logical relations.  For this reason logical relations for a
calculus that admits a phase distinction can be thought of as an
iteration of logical relations: the underlying calculus \ModTT{} is
already a language of (proof-relevant) synthetic logical
relations over the sublanguage of purely static kinds and constructors.

\subsubsection{Phase separated global sections}\label{sec:ph-gsec}

Let $\ThCat$ be the syntactic category of \ModTT{}; we may
manipulate $\ThCat$ in the language of topoi by enlarging it to
$\PrTop{\ThCat}$, the topos of presheaves on $\ThCat$ (see \cref{ex:psh}).
$\PrTop{\ThCat}$ can be thought of as a topos of generalized syntax.

We consider the characteristic map $\Mor[\PhGS]{\PrTop{\ThCat}}{\SP}$
of the open $\StOpn \in \Opn{\PrTop{\ThCat}}$, so we have
$\PhGS^*\brc{\PtO} = \StOpn$.
We will see in \cref{comp:ph-gs-dir-img} that the direct image
$\Mor[\PhGS_*]{\Psh{\ThCat}}{\Sh{\SP}}$ can be viewed as a ``phase separated'' version of the
global sections functor, sending each object to the weakening map from its
closed elements to their static parts.

\begin{computation}\label{comp:ph-gs-dir-img}
  To verify the intuition above, we proceed to compute the action of the direct
  image $\PhGS_*$ on a presheaf $X:\Psh{\ThCat}$. First, we recognize that the
  direct image $\PhGS_*X$ should be a family of sets (\ie a
  $\OpCat{\Simplex{1}}$-shaped diagram of sets) by definition; we probe
  this family of sets at the map $\Mor|>->|{0}{1} : \Simplex{1}$ using the
  Yoneda lemma, adjointness, and the fact that $\PhGS$ is the characteristic
  map of the open $\StOpn$:
  \begin{align*}
    \PhGS_*X\brc{\Mor|>->|{0}{1}} &\cong
    \Hom[\Sh{\SP}]{\Yo[\Simplex{1}]\brc{\Mor|>->|{0}{1}}}{\PhGS_*X}
    &&\text{by Yoneda lemma}
    \\
    &\cong
    \Hom[\Sh{\SP}]{\brc{\Mor|>->|{\brc{\PtO}}{\ObjTerm{\Sh{\SP}}}}}{\PhGS_*X}
    &&\text{by \cref{ex:sierp}}
    \\
    &\cong
    \Hom[\Psh{\ThCat}]{\PhGS^*\brc{\Mor|>->|{\brc{\PtO}}{\ObjTerm{\Sh{\SP}}}}}{X}
    &&\text{by $\PhGS^*\dashv\PhGS_*$}
    \\
    &\cong
    \Hom[\Psh{\ThCat}]{\brc{\Mor|>->|{\StOpn}{\ObjTerm{\Psh{\ThCat}}}}}{X}
    &&\text{by def.\ of $\PhGS$}
  \end{align*}

  Hence $\PhGS_*X$ is the diagram $\Mor{X\prn{\ObjTerm{\ThCat}}}{X\prn{\StOpn}}$ of sets
  that projects from a global element (closed term) of $X$ its static part.
\end{computation}

\subsection{Topos of parametricity structures}\label{sec:parametricity-topos}

We will construct a topos whose sheaves will model the parametricity
structures of \PSTT, as proof-relevant relations between two potentially different
syntactic objects.
Let $E$ be a finite cardinal and $\YTop$ a topos. The copower $E\cdot\YTop = \Coprod{e\in
E}{\YTop}$ is a topos, whose corresponding logos may be computed as follows:
$
  \Sh{E\cdot\YTop} = \Sh{\Coprod{e\in E}{\YTop}}= \Prod{e\in E}{\Sh{\YTop}} =
\Sh{\YTop}^E
$.

The codiagonal morphism of topoi $\Mor[\nabla]{E\cdot\YTop}{\YTop}$
corresponds under inverse image to the diagonal morphism of logoi
$\Mor[\nabla^*]{\Sh{\YTop}}{\Sh{\YTop}^E}$; indeed, the diagonal map is lex as
it is right adjoint to the colimit functor
$\Mor[\Colim{E}]{\Sh{\YTop}^E}{\Sh{\YTop}}$, and it is cocontinuous because it
is left adjoint to the limit functor, \ie the direct image
${\color{gray}\nabla^*}\dashv\nabla_*$.
Because we are considering binary parametricity, we will set $E := 2$
and define a topos whose sheaves correspond to parametricity structures by
gluing. We may consider the following morphism
$\Mor[\FR]{2\cdot\PrTop{\ThCat}}{\SP}$ of topoi:
\begin{equation}\label[diagram]{diag:fringe-functor}
  \begin{tikzpicture}[diagram]
    \node (0) {$2\cdot\PrTop{\ThCat}$};
    \node (1) [right = of 0] {$\PrTop{\ThCat}$};
    \node (2) [right = of 1] {$\SP$};
    \path[->] (0) edge node [above] {$\nabla$} (1);
    \path[->] (1) edge node [above] {$\PhGS$} (2);
    \path[->,exists,bend right = 30] (0) edge node [below] {$\FR$} (2);
  \end{tikzpicture}
\end{equation}

\begin{computation}\label{cmp:fringe-dirimg}
  The direct image $\Mor[\FR_*]{\Psh{\ThCat}^2}{\Sh{\SP}}$ takes a pair
  $\prn{X_L,X_R} : \Psh{\ThCat}^2$ of (generalized) syntactic objects to
  $\PhGS_*X_L\times \PhGS_*X_R$, the product of their phase separated global
  sections.
\end{computation}

\begin{proof}
  To see that this is the case, we first dualize \cref{diag:fringe-functor}
  into the language of logoi.
  \begin{equation}\label[diagram]{diag:fringe-functor:invimg}
    \begin{tikzpicture}[diagram]
      \node (0) {$\Psh{\ThCat}^2$};
      \node (1) [left = of 0] {$\Psh{\ThCat}$};
      \node (2) [left = of 1] {$\Sh{\SP}$};
      \path[<-] (0) edge node [above] {$\nabla^*$} (1);
      \path[<-] (1) edge node [above] {$\PhGS^*$} (2);
      \path[<-,exists,bend left = 30] (0) edge node [below] {$\FR^*$} (2);
    \end{tikzpicture}
  \end{equation}

  In \cref{diag:fringe-functor:invimg} above, the inverse image $\nabla^*$ is
  the diagonal functor and hence its right adjoint $\nabla_*$ is the product functor.
  Hence we may compute the direct image part of $\FR$ as follows:
  \begin{equation}\label[diagram]{diag:fringe-functor:dirimg}
    \begin{tikzpicture}[diagram]
      \node (0) {$\Psh{\ThCat}^2$};
      \node (1) [right = of 0] {$\Psh{\ThCat}$};
      \node (2) [right = of 1] {$\Sh{\SP}$};
      \path[->] (0) edge node [above] {$\prn{\times}$} (1);
      \path[->] (1) edge node [above] {$\PhGS_*$} (2);
      \path[->,exists,bend right = 30] (0) edge node [below] {$\FR_*$} (2);
    \end{tikzpicture}
  \end{equation}

  Because $\PhGS_*$ is continuous, we may commute it past the product functor:
  \begin{equation}
    \begin{tikzpicture}[diagram,baseline=(sw.base)]
      \SpliceDiagramSquare{
        width = 2.25cm,
        nw = \Psh{\ThCat}^2,
        sw = \Psh{\ThCat},
        ne = \Sh{\SP}^2,
        se = \Sh{\SP},
        north = \PhGS_*^2,
        east = \prn{\times},
        west = \prn{\times},
        south = \PhGS_*,
      }
      \path[->] (nw) edge node [desc] {$\FR_*$} (se);
    \end{tikzpicture}
  \end{equation}

  Hence $\FR_*$ takes a pair $\prn{X_L,X_R} : \Psh{\ThCat}^2$ to
  $\PhGS_*X_L\times \PhGS_*X_R$.
\end{proof}

\begin{construction}[Topos of parametricity structures]\label{con:xtop}
  We then obtain a topos $\XTop$ whose sheaves correspond to parametricity
  structures by gluing, specifically via a phase separated version of the
  Sierpi\'nski cone construction: we first form the Sierpi\'nski cylinder
  $\prn{2\cdot\PrTop{\ThCat}}\times \SP$ and then \emph{pinch} the end
  corresponding to the closed point $\PtC\in \SP$ along $\FR$ as follows:
  \[
    \begin{tikzpicture}[diagram]
      \SpliceDiagramSquare<sq/>{
        se/style = pushout,
        east/style = {closed embedding, exists, color = RedDevil},
        west/style = closed embedding,
        south/style = exists,
        east = \ClEmb,
        nw = 2\cdot\PrTop{\ThCat},
        sw = \prn{2\cdot \PrTop{\ThCat}}\times \SP,
        se = \XTop,
        ne = \SP,
        west = \prn{\ArrId{}, \PtC},
        north = \FR,
        width = 2.25cm,
        height = 2.25cm,
      }
      \node (sw) [below = 2.25cm of sq/sw] {$2\cdot\PrTop{\ThCat}$};
      \path[open embedding] (sw) edge node [left] {$\prn{\ArrId{},\PtO}$} (sq/sw);
      \path[open embedding,exists,color = RegalBlue] (sw) edge node [sloped,below] {$\OpEmb$} (sq/se);
    \end{tikzpicture}
  \]

  The induced embedding $\Mor|open
  embedding|[\OpEmb]{2\cdot\PrTop{\ThCat}}{\XTop}$ can be seen to be an open
  immersion; moreover, its image is the open complement of the image of the
  closed immersion. Therefore $\XTop$ is a topos governing parametricity
  structures, and restricting along the open immersion projects the (doubled)
  syntactic part of a parametricity structure, whereas restricting along the
  closed immersion projects the (phase separated) semantic part of a
  parametricity structure.
\end{construction}

\begin{remark}\label{rem:sierp-duality}
  The Sierpi\'nski topos $\SP$ plays two distinct roles in \cref{con:xtop}:
  first, we use $\SP$ to form a cylinder on $2\cdot\PrTop{\ThCat}$ (which is
  always done in gluing), and secondly $\SP$ is the codomain of the functor we
  are gluing along. In the first case, $\SP$ is acting as a directed interval figure
  whereas in the second case, $\SP$ is acting as the classifier of opens and
  $\FR$ is the characteristic map for the open of $2\cdot\PrTop{\ThCat}$
  that restricts on each side to the static open $\StOpn$.
  This second use corresponds to the fact that we
  are constructing \emph{phase separated} parametricity structures rather than
  ordinary parametricity structures, in which case we would be gluing into the
  punctual topos $\PtTop$.
\end{remark}

\begin{computation}
  We may compute an explicit description of parametricity structures, \ie
  sheaves on $\XTop$. A parametricity structure $X : \Sh{\XTop}$ is given by
  the following data:
  \begin{enumerate}
    \item A pair of generalized syntactic objects $X^\PtO_L,X^\PtO_R : \Psh{\ThCat}$.
    \item A family of phase separated sets $\Mor{X^\PtC}{\PhGS_*{X^\PtO_L}\times\PhGS_*{X^\PtO_R}}:\Sh{\SP}$, \ie a proof-relevant relation between the (phase separated) closed terms of $X^\PtO_L$ and $X^\PtO_R$.
  \end{enumerate}
\end{computation}

\begin{proof}
  We recall the pushout of topoi that defines $\XTop$ from \cref{con:xtop}.
  \begin{equation}\label[diagram]{diag:dualizing:0}
    \begin{tikzpicture}[diagram,baseline=(sw.base)]
      \SpliceDiagramSquare{
        se/style = pushout,
        east/style = {closed embedding},
        west/style = closed embedding,
        east = \ClEmb,
        nw = 2\cdot\PrTop{\ThCat},
        sw = \prn{2\cdot \PrTop{\ThCat}}\times \SP,
        se = \XTop,
        ne = \SP,
        west = \prn{\ArrId{}, \PtC},
        north = \FR,
        width = 2.25cm,
        height = 2.25cm,
      }
    \end{tikzpicture}
  \end{equation}

  We translate \cref{diag:dualizing:0} into the language of logoi, at first
  only dualizing:
  \begin{equation}\label[diagram]{diag:dualizing:1}
    \begin{tikzpicture}[diagram,baseline=(sw.base)]
      \SpliceDiagramSquare{
        nw/style = pullback,
        nw = \Sh{\XTop},
        se = \Sh{2\cdot\PrTop{\ThCat}},
        ne = \Sh{\prn{2\cdot \PrTop{\ThCat}}\times \SP},
        sw = \Sh{\SP},
        west = \ClEmb^*,
        east = \prn{\ArrId{}, \PtC}^*,
        south = \FR^*,
        width = 3.25cm,
        height = 2.25cm,
      }
    \end{tikzpicture}
  \end{equation}

  First we observe that the a sheaf on $\YTop\times\SP$ is the same as a family
  of sheaves on $\YTop$, and that the closed point is the coordinate for the
  \emph{domain} of such a family. Therefore we may rewrite the right-hand map
  of \cref{diag:dualizing:1} as follows:
  \begin{equation}\label[diagram]{diag:dualizing:2}
    \begin{tikzpicture}[diagram,baseline=(sw.base)]
      \SpliceDiagramSquare{
        nw/style = pullback,
        nw = \Sh{\XTop},
        se = \Sh{2\cdot\PrTop{\ThCat}},
        ne = \ArrCat{\Sh{2\cdot \PrTop{\ThCat}}},
        sw = \Sh{\SP},
        west = \ClEmb^*,
        east = \Dom,
        south = \FR^*,
        width = 3cm,
        height = 2.25cm,
      }
    \end{tikzpicture}
  \end{equation}

  By \cref{diag:dualizing:2}, we see that a sheaf on $\XTop$ carries the data
  of a sheaf $X^\PtC$ on $\SP$, a sheaf $X^\PtO = \prn{X^\PtO_L,X^\PtO_R}$ on
  $2\cdot\PrTop{\ThCat}$, and a morphism $\Mor{\FR^*X^\PtC}{X^\PtO} :
  \Sh{2\cdot\PrTop{\ThCat}}$. By adjoint transpose, this is the same
  a morphism $\Mor{X^\PtC}{\FR_*X^\PtO}$ and by \cref{cmp:fringe-dirimg}, this
  is a morphism $\Mor{X^\PtC}{\PhGS_*X^\PtO_L\times \PhGS_*X^\PtO_R}$.
\end{proof}

The open immersion $\OpnEmbMor[\OpEmb]{2\cdot\PrTop{\ThCat}}{\XTop}$ corresponds (by definition)
to an open $\SynOpn \in \Opn{\XTop}$, \ie the subterminal parametricity structure
$
  \SynOpn = \prn{\Mor{\ObjInit{\Sh{\SP}}}{\FR_*\prn{\ObjTerm{\Psh{\ThCat}^2}}}}
$.
Let $\YTop$ be a topos and $E$ a finite cardinal; the injections
$\OpnEmbMor[\mathsf{inj}_e]{\YTop}{E\cdot\YTop}$ into the coproduct are in fact
open immersions~\citep[Lemma~B.3.4.1]{johnstone:2002}. Therefore
we may reconstruct $\PrTop{\ThCat}$ as two different open subtopoi of
$\XTop$:
\[
  \begin{tikzpicture}[diagram]
    \node (2Th) {$2\cdot\PrTop{\ThCat}$};
    \node (Th/l) [above left = of 2Th,xshift=-.8cm] {$\PrTop{\ThCat}$};
    \node (Th/r) [below left = of 2Th,xshift=-.8cm] {$\PrTop{\ThCat}$};
    \node (X) [right = 2cm of 2Th] {$\XTop$};
    \path[open embedding*] (Th/l) edge node [desc,sloped] {$\mathsf{inj}_0$} (2Th);
    \path[open embedding] (Th/r) edge node [desc,sloped] {$\mathsf{inj}_1$} (2Th);
    \path[open embedding] (2Th) edge node [upright desc] {$\OpEmb$} (X);
    \path[open embedding,exists,bend left=25] (Th/l) edge node [above,sloped] {$\LOpEmb$} (X);
    \path[open embedding*,exists,bend right=25] (Th/r) edge node [below,sloped] {$\ROpEmb$} (X);
  \end{tikzpicture}
\]

We associate to each open subtopos of $\XTop$ a subterminal object and a
corresponding open modality in $\Sh{\XTop}$. In particular, we have opens
$\Mor|>->|{\SynOpn,\LOpn,\ROpn}{\ObjTerm{\Sh{\XTop}}}$ reconstructing
$\Psh{\ThCat}^2$ as $\Sl{\Sh{\XTop}}{\SynOpn}$, and $\Psh{\ThCat}$ twice as
$\Sl{\Sh{\XTop}}{\LOpn}$ and $\Sl{\Sh{\XTop}}{\ROpn}$ respectively,
corresponding to the symmetry of swapping the left and right syntactic
components of a parametricity structure. Moreover, $\SynOpn = \LOpn\lor\ROpn$
and $\LOpn\land\ROpn = \bot$.

Working synthetically, we may use the modalities $\MSynL,\MSynR$ in the
internal language of $\Sh{\XTop}$ to isolate the (left, right) syntactic parts
of a parametricity structure --- or to \emph{construct} parametricity
structures that are degenerate everywhere except for in their (left, right)
syntactic parts. The modality $\MSyn$ isolates the left and right parts of the
syntax together, and its \emph{closed complement} $\MSem$ is used to trivialize
the syntactic parts and isolate the semantic part: in particular, we have
$\MSyn{\MSem{X}} = \ObjTerm{}$.  The closed complement to an open modality is
not in general open, but it is always a \emph{lex idempotent modality} in the
sense of \citet{rijke-shulman-spitters:2017}.

The parametricity structure of phase separation is also expressed as an open
modality.  Recalling that we already have an open $\StOpn \in
\Opn{2\cdot\PrTop{\ThCat}}$ that isolates the static part of (each copy of) the
syntax, we note that we have an analogous open $\brc{\PtO}\in\Opn{\SP}$ of the
Sierpi\'nski topos that spans the open point $\PtO\in\SP$; by intersection, we
may therefore define an open of $\XTop$ to isolate the static part of a general
parametricity structure all at once: $ \StOpn := \OpEmb_*\StOpn \land
\ClEmb_*\brc{\PtO} $.

\begin{lemma}\label{lem:xtop-presheaf}
  The logos of parametricity structures $\Sh{\XTop}$ is a category of presheaves, \ie there exists a category $\DCat$ such that $\Sh{\XTop}\simeq \Psh{\DCat}$.
\end{lemma}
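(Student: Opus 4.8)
The plan is to recognize $\Sh{\XTop}$ as an Artin gluing of two presheaf topoi and then apply the gluing theorem of Carboni and Johnstone, which characterizes exactly when such a gluing is again a presheaf topos. By \cref{con:xtop} and the computation immediately following it, an object of $\Sh{\XTop}$ is precisely a triple consisting of a pair of syntactic presheaves $W\in\Psh{\ThCat}^2$, a phase separated set $S\in\Sh{\SP}$, and a map $S\to\FR_*W$ in $\Sh{\SP}$. In other words, $\Sh{\XTop}$ is the comma category $\Sh{\SP}\downarrow\FR_*$, i.e.\ the Artin gluing $\mathbf{Gl}(\FR_*)$ of the direct image functor $\Mor[\FR_*]{\Psh{\ThCat}^2}{\Sh{\SP}}$.

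Both ends of this gluing are presheaf topoi: on the syntactic side $\Psh{\ThCat}^2=\Psh{\ThCat}\times\Psh{\ThCat}\simeq\Psh{\ThCat\sqcup\ThCat}$, and on the semantic side $\Sh{\SP}=\ArrCat{\SET}\simeq\Psh{\Simplex{1}}$. The theorem of Carboni and Johnstone (see also \citet{johnstone:2002}) asserts that, for a functor $F$ between presheaf topoi, the gluing $\mathbf{Gl}(F)$ is again a presheaf topos precisely when $F$ preserves connected limits, and in that case it furnishes an explicit small category $\DCat$ --- assembled as a collage of $\Simplex{1}$, $\ThCat\sqcup\ThCat$, and the familial-representability data of $F$ --- with $\mathbf{Gl}(F)\simeq\Psh{\DCat}$. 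It therefore suffices to verify the hypothesis for $F=\FR_*$.

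This verification is immediate. Recall that $\FR$ factors as the composite of geometric morphisms $2\cdot\PrTop{\ThCat}\xrightarrow{\nabla}\PrTop{\ThCat}\xrightarrow{\PhGS}\SP$, so its direct image factors as $\FR_*=\PhGS_*\circ\nabla_*$. Each factor is the direct image part of a geometric morphism and hence a right adjoint ($\PhGS^*\dashv\PhGS_*$, and $\nabla^*\dashv\nabla_*$ as noted above), so $\FR_*$ is a composite of right adjoints and is itself a right adjoint. In particular $\FR_*$ preserves all limits, a fortiori all connected limits, and the Carboni--Johnstone criterion is satisfied; the conclusion $\Sh{\XTop}\simeq\Psh{\DCat}$ follows.

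The only genuine content is isolating the correct hypothesis and matching conventions, and this is where I would be most careful: one must glue along the \emph{direct image} $\FR_*$ (a right adjoint), not the inverse image, and confirm that both the source $\Psh{\ThCat}^2$ and the target $\Sh{\SP}$ are genuinely presheaf --- not merely Grothendieck --- topoi, since the theorem fails for gluings of arbitrary sheaf topoi. I do not anticipate any obstacle beyond this bookkeeping, because the decisive preservation-of-connected-limits condition drops out formally from $\FR_*$ being a right adjoint. Should an explicit $\DCat$ be wanted, it can be read off from the Carboni--Johnstone construction applied to $\FR_*$; but for the present statement only existence is required.
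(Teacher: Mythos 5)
Your proposal is correct and follows essentially the same route as the paper: both identify $\Sh{\XTop}$ as the Artin gluing along the direct image $\FR_*$ between the presheaf topoi $\Psh{\ThCat}^2$ and $\Sh{\SP}$, and both invoke the SGA~4 / Carboni--Johnstone criterion, verified because $\FR_*$ is a right adjoint and hence preserves (in particular connected) limits. The only cosmetic difference is that the paper cites continuity of $\FR_*$ directly as one of the equivalent conditions, whereas you pass through preservation of connected limits; this changes nothing.
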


\begin{proof}
  First, we note that $\Psh{\ThCat}^2$ is $\Psh{2\cdot\ThCat}$ and $\Sh{\SP}$
  is $\Psh{\Simplex{1}}$. Moreover, the direct image
  $\Mor[\FR_*]{\Psh{\ThCat}^2}{\Sh{\SP}}$ is continuous, being a right adjoint; but this is
  one of the equivalent conditions for the stability of presheaf topoi under
  gluing identified by the Grothendieck school in SGA~4, Tome 1, Expos\'e iv,
  Exercise 9.5.10 (and worked out by \citet{carboni-johnstone:1995}).
\end{proof}

Consequently, we may construct $\Sh{\XTop}$ such that its internal dependent
type theory contains a \emph{strict} hierarchy of universes $\Univ{\alpha}$
\`a la \citet{hofmann-streicher:1997} and moreover enjoys the
\emph{strictification} axiom of \citet{orton-pitts:2016}, restated here as
\cref{axiom:strictification}. This is of course only possible because the
high-altitude structure of our work respects the principle of equivalence.

The central theorem of this section is an immediate consequence of the forgoing
discussion, combined with standard results in the presheaf semantics of dependent
type theory~\citep{hofmann-streicher:1997,hofmann:1997,streicher:2005}.

\begin{theorem}\label{thm:model}
  The category of sheaves $\Sh{\XTop}$ admits the structure of a model of \PSTT{}.
\end{theorem}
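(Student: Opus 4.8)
The plan is to lean on \cref{lem:xtop-presheaf}, which tells us that $\Sh{\XTop}\simeq\Psh{\DCat}$ is a presheaf topos, and thereby reduce the theorem to assembling standard structure together with the modal and syntactic data specific to \PSTT{}. Because every presheaf topos carries an extremely rich internal dependent type theory, the bulk of the connectives demanded by \PSTT{} --- dependent products and sums, extensional equality types, a proof-irrelevant universe $\Omega$ (the subobject classifier), subset types, inductive types, and effective quotients (hence strict pushouts) --- are immediately available and require no further verification. The remaining work is to supply (i) the strict universe hierarchy and strictification, (ii) the four distinguished propositions with their lattice relations and the modalities they induce, and (iii) the doubled embedding of \ModTT{}.

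For (i), I would invoke the construction of universes due to \citet{hofmann-streicher:1997}, which equips any presheaf topos with a strict cumulative hierarchy $\Univ{\alpha}$ closed under all the connectives above; the essential point is that these universes be interpreted \emph{strictly} rather than merely up to isomorphism, which is exactly what the presheaf presentation of \cref{lem:xtop-presheaf} buys us. The strictification axiom (\cref{axiom:strictification}) is then the principle of \citet{orton-pitts:2016}, which holds in any presheaf topos carrying such universes, so I would cite it directly rather than reprove it. With strictification in hand, the realignment principle (\cref{cor:realignment}) and the modal universes $\Univ[\GenMod]$ of Section~3 are already justified.

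For (ii), the propositions arise geometrically from the subtopos structure established after \cref{con:xtop}: the two open immersions of $\Psh{\ThCat}$ into $\XTop$ yield subterminal objects $\LOpn,\ROpn:\Opn{\XTop}$, their join recovers the open immersion of $2\cdot\PrTop{\ThCat}$ and hence $\SynOpn=\LOpn\lor\ROpn$, and their disjointness as coproduct injections yields $\LOpn\land\ROpn=\bot$; the static open is then $\StOpn:=\OpEmb_*\StOpn\land\ClEmb_*\brc{\PtO}$. Each open determines an open modality $\MOp{\phi}\prn{A}=\phi\to A$, and the closed complements $\MSem,\MDyn$ are the pushout modalities of \cref{con:closed-modality}; all of their required properties were established abstractly in Section~3, so no topos-specific argument is needed here.

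For (iii), I would produce the algebra $\mathcal{A}_{\mathsf{Syn}}:\ALG{\SigModTT}{\Univ[\MSyn]}$ from the fact that the syntactic part $\Sl{\Sh{\XTop}}{\SynOpn}\simeq\Psh{\ThCat}^2$ is the doubled presheaf topos on the syntactic category $\ThCat$ (\cref{notation:th-cat}), which carries the generic \ModTT{}-model by its universal property; the disjointness $\LOpn\land\ROpn=\bot$ splits this into the left- and right-syntactic algebras. The one genuinely load-bearing compatibility is the condition $\StOpn:\Ext{\Omega}{\SynOpn\hookrightarrow\EmbLR{\StOpn}}$ from \cref{sec:doubled-embedding}, requiring that the static open of $\XTop$ restrict, under $\SynOpn$, to the embedded static open of the syntax. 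This is where I expect the only real friction: it amounts to unwinding the definition $\StOpn=\OpEmb_*\StOpn\land\ClEmb_*\brc{\PtO}$ and confirming that its first conjunct is precisely the open $\EmbLR{\StOpn}$ transported along the open immersion $\OpEmb$ --- a computation in the frame of opens of $\XTop$ rather than a deep theorem. Once this is in place, every axiom of \PSTT{} has been interpreted, establishing the model.
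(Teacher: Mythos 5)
Your proposal is correct and follows essentially the same route as the paper, whose proof is simply the observation that the theorem is an immediate consequence of the foregoing discussion (the presheaf presentation of $\Sh{\XTop}$ from \cref{lem:xtop-presheaf}, the opens and modalities arising from the gluing construction) together with standard presheaf semantics of dependent type theory à la \citet{hofmann-streicher:1997} and the strictification axiom of \citet{orton-pitts:2016}. Your writeup is a faithful and somewhat more explicit unpacking of exactly those ingredients, including the one point of genuine friction you correctly identify, namely the compatibility $\StOpn:\Ext{\Omega}{\SynOpn\hookrightarrow\EmbLR{\StOpn}}$.
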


Combined with the internal constructions in \cref{sec:paramtt}, we may simply
unfold definitions until we reach a proof-relevant and phase separated version
of Reynolds' abstraction theorem~\citep{reynolds:1983} in the context of
\ModTT{}.

\begin{corollary}[Generalized abstraction theorem]\label{thm:reynolds}
  Fix two families of signatures $\sigma,\tau:\Val\prn{\SigTp}\to\Sig$, and a
  closed module functor $V:\Val\prn{\Prod{x :
  \SigTp}{\Prod{\_:\sigma\prn{x}}{\tau\prn{x}}}}$, together with a pair of
  closed module values $U_i:\Val\prn{\sigma\prn{T_i}}$ for a pair of closed types $T_0,T_1:\Val\prn{\SigTp}$.
  Now, fix a family of $\alpha$-small sets $\tilde{T}$ indexed in the closed values of type
  $T_0\times T_1$; the interpretations of $\sigma,\tau$ induce a pair of families of
  phase separated sets
  $\bbrk{\sigma}\prn{\tilde{T}},\bbrk{\tau}\prn{\tilde{T}}$
  indexed in the closed values of $\sigma\prn{T_0}\times\sigma\prn{T_1}$ and
  $\tau\prn{T_0}\times\tau\prn{T_1}$ respectively.
  The \textbf{generalized abstraction theorem} states that we have a function of phase separated sets from
  $\bbrk{\sigma}\prn{\tilde{T}}\brk{U_0,U_1}$ to
  $\bbrk{\tau}\prn{\tilde{T}}\brk{V\prn{T_0,U_0},V\prn{T_1,U_1}}$, tracked by a function between the static components.
\end{corollary}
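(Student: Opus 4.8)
The plan is to read the statement as the \emph{externalization} of the internal parametric interpretation already assembled in \cref{sec:modtt-in-pstt}. By \cref{thm:model} the logos $\Sh{\XTop}$ models \PSTT{}, and in that model we have the parametric algebra $\mathcal{A} : \Ext{\ALG{\SigModTT}{\Univ}}{\SynOpn\hookrightarrow\mathcal{A}_{\mathsf{Syn}}}$. Functorial semantics turns $\mathcal{A}$ into a structure-preserving interpretation $\bbrk{-}$ of every piece of \ModTT{}-syntax into \PSTT{}; because $\mathcal{A}$ restricts to the doubled embedding $\mathcal{A}_{\mathsf{Syn}}$ under $\SynOpn$, the $\MSyn$-part of $\bbrk{X}$ is always the pair $\prn{\EmbL{X},\EmbR{X}}$ of the two syntactic copies of $X$, while its complementary $\MSem$-part is, under the concrete description of $\Sh{\XTop}$ computed earlier, exactly a proof-relevant phase separated relation between the closed terms of $\EmbL{X}$ and $\EmbR{X}$. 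The entire corollary is then obtained by instantiating $\bbrk{-}$ at the given data and unfolding.

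Concretely, I would first assemble a parametricity structure of types $\mathbf{T} : \Val\prn{\SigTp}$ carrying $\tilde{T}$ as its semantic content: using the construction $\SigTp = \prn{\EmbLR{\SigTp},\mathbf{Type}}$ from \cref{sec:parametricity-judgments} together with \cref{cor:realignment}, I take the syntactic part to be $\brk{\LOpn\hookrightarrow T_0\vertsep\ROpn\hookrightarrow T_1}$ and the (statically connected) dynamic part to be $\MDyn{\tilde{T}}$. Since $\sigma,\tau$ are syntactic, their interpretations $\bbrk{\sigma},\bbrk{\tau} : \Val\prn{\SigTp}\to\Sig$ are functions on parametricity structures lying over $\sigma,\tau$ in each syntactic copy; applying them to $\mathbf{T}$ yields parametricity structures of signatures $\bbrk{\sigma}\prn{\mathbf{T}},\bbrk{\tau}\prn{\mathbf{T}} : \Sig$ whose $\MSyn$-parts are $\prn{\sigma\prn{T_0},\sigma\prn{T_1}}$ and $\prn{\tau\prn{T_0},\tau\prn{T_1}}$, and whose $\MSem$-parts externalize precisely to the induced relations $\bbrk{\sigma}\prn{\tilde{T}}$ and $\bbrk{\tau}\prn{\tilde{T}}$. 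Crucially, because the parametric model interprets dependent products by the \emph{genuine} dependent product of \PSTT{} (see the parametricity structure of dependent products), the functor becomes an honest dependent function $\bbrk{V} : \Prod{x:\SigTp}{\Prod{\_:\bbrk{\sigma}\prn{x}}{\bbrk{\tau}\prn{x}}}$ lying over $\EmbL{V},\EmbR{V}$.

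Given a point of $\bbrk{\sigma}\prn{\tilde{T}}\brk{U_0,U_1}$ --- i.e.\ a witness that $U_0,U_1$ are related --- I would package it together with the syntactic data $U_0,U_1$ into a single related element $u : \Val\prn{\bbrk{\sigma}\prn{\mathbf{T}}}$ with $\EmbL{u}=U_0$ and $\EmbR{u}=U_1$. Evaluating $\bbrk{V}\prn{\mathbf{T},u} : \Val\prn{\bbrk{\tau}\prn{\mathbf{T}}}$ then yields an element whose $\MSyn$-part is $\prn{V\prn{T_0,U_0},V\prn{T_1,U_1}}$, since $\bbrk{V}$ restricts to $V$ in each syntactic copy, and whose $\MSem$-part is exactly a witness in $\bbrk{\tau}\prn{\tilde{T}}\brk{V\prn{T_0,U_0},V\prn{T_1,U_1}}$. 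This assignment is the promised function of phase separated sets; because the construction is carried out internally in \PSTT{} and respects the $\StOpn$/$\MSt$ phase distinction throughout, its action on static components is exactly the tracking map on the base of the phase separated sets.

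The only real content beyond unfolding lies in the \textbf{externalization} step, and this is where I expect the main (bookkeeping) obstacle. The corollary is phrased in the external language of ``$\alpha$-small families of sets indexed in closed values'' and ``functions of phase separated sets tracked on static components'', whereas $\bbrk{-}$ and $\bbrk{V}$ live internally in $\Sh{\XTop}$. Bridging the two requires the concrete description of $\Sh{\XTop}$ computed earlier in this section: a sheaf on $\XTop$ is a pair of syntactic presheaves together with a family over $\PhGS_*\prn{-}\times\PhGS_*\prn{-}$, so that the $\MSem$-part of $\bbrk{X}$ is a proof-relevant relation on closed terms, and the $\PhGS_*$ direct image presents each such relation as a phase separated set whose base is its static part. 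Under this dictionary the internal function $\bbrk{V}\prn{\mathbf{T},-}$ externalizes to a morphism in $\Sh{\SP}=\ArrCat{\SET}$, i.e.\ a map of phase separated sets together with its tracking map between static components --- which is exactly the generalized abstraction theorem. No further technical lemmas are needed; everything reduces to the already-established \cref{thm:model}, \cref{cor:realignment}, and the constructions of \cref{sec:paramtt}.
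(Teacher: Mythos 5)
Your proposal is correct and follows the same route as the paper, which itself offers no more than the remark that the corollary is obtained by combining \cref{thm:model} with the internal constructions of \cref{sec:paramtt} and unfolding definitions; your write-up is essentially that unfolding made explicit (interpret the syntax via the parametric algebra, package $\tilde{T}$ into an element of $\Val\prn{\SigTp}$ by realignment, apply the interpreted functor, and externalize through the concrete description of $\Sh{\XTop}$). The only looseness is in the packaging of $\tilde{T}$ as the second component of an element of $\mathbf{Type}$ --- it must be presented as a purely dynamic $\alpha$-small type strictly extending the syntactic values, not literally $\MDyn{\tilde{T}}$ --- but this is exactly the realignment step you already invoke.
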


A further consequence of our abstraction theorem is that the static behavior of
a module functor on closed modules does not depend on its dynamic behavior.

\section{Conclusions and future work}

What is the relationship between programming languages and their module
systems? Often seen as a useful feature by which to extend a programming
language, we contrarily view a language of modules as the ``basis theory'' that
any given programming language ought to extend. To put it bluntly, a
programming language is a universe $\mathcal{L}$ in the module type theory, and
specific aspects (such as evaluation order) are mediated by the decoding
function $\IsSig{t:\mathcal{L}}{\SigDyn{t}}$ of the universe.

\subsection{Relaxing the static--dynamic phase distinction}

In the present version of \ModTT{} we chose to force all ``object language''
types to be purely dynamic, in the sense that $\SigDyn{t}$ always has a trivial
static component. This design, inspired by the actual behavior of ML languages
with weak structure sharing (SML '97, OCaml, and \OneML), is by no means
forced: by allowing types to classify values with non-trivial static
components, we could reconstruct the ``half-spectrum'' dependent types
available in current versions of Haskell~\citep{eisenberg:2016:thesis}.

Allowing programs to have a non-trivial static component is also necessary to
support \emph{abstraction} in the presence of applicative functors like
$\Con{MkSet}$, as pointed out by \citet{rossberg-russo-dreyer:2014}. Under the
current strong static--dynamic phase distinction, abstraction for applicative
functors can still be achieved by ``tainting'' every value declaration with an
abstract type component, but there is reason to be skeptical this is \emph{in
fact} more desirable than simply achieving abstraction directly from general
type dependency. In light of both Idris~2 and
Lean~4~\citep{brady:2021,de-moura-ullrich:2021}, it would indeed be very hard
to argue today that full-spectrum type dependency presents any unsurmountable
problems for compilation of general-purpose programming languages.

Neither does it appear forced that module \emph{commands} should be statically
connected (except simply to reproduce the behavior of existing ML languages);
the original difficulty inherent in the question of when two impure modules are
identified by sharing seems to be already resolved by the modal separation of
effects \`a la \citet{moggi:1991}. Decoupling static connectivity from
computational effects significantly simplifies the theory of program modules.  Future
ML languages may expose \emph{closed} modalities like $\MDyn$ to enable more
flexible and fine-grained imposition of non-interference.

\subsection{Let a hundred phase distinctions bloom!}

Taking Reynolds's dictum\footnote{``Type structure is a syntactic discipline
for enforcing levels of abstraction''~\citep{reynolds:1983}.} seriously, we
believe that the phase distinction is the prototype for any number of
\emph{levels of abstraction}, each corresponding to a different open modality.
The lax-modal separation of effects renders full type dependency quite
unproblematic, hence some of the original motivations for the static--dynamic
phase distinction may be weaker than previously thought. In contrast, the
concept of \emph{phase distinction} generally is more important than ever.

\begin{example}[Logical relations]
  In this paper we considered the phase distinction between \textbf{syntactic}
  and \textbf{semantic}, which allows one to prove parametricity results as
  well as other important metatheorems such as canonicity and
  normalization~\citep{sterling-angiuli:2021,gratzer:normalization:2021}.
\end{example}

\begin{example}[Type refinements]
  Type refinements \`a la \citet{mellies-zeilberger:2015} can be interpreted by
  a phase distinction between \textbf{computational} and \textbf{logical}. Type
  refinements differ from the built-in verification capabilities of type theory
  in that logical/specification-level code is guaranteed to not interfere with
  computational-level code --- even when the specification-level information is
  proof-relevant. The view of type refinements as a phase distinction is a
  compelling alternative to realizability-style accounts of program
  extraction~\citep{constable:1986}. Here, extraction is implemented
  \emph{internally} by the weakening substitution
  $\Mor{\Gamma,\GenericOpn}{\Gamma}$.
\end{example}

\begin{example}[Separate compilation]
  Modules with free variables ranging over their dependencies (referred to as
  \emph{units} by \citet{flatt-felleisen:1998,smlsc:2006}) are an
  attractive model of separate compilation: each module can be compiled
  independently of its dependencies, which are then linked by means of
  simultaneous substitution or \emph{cut}. Separate compilation has the
  side effect, however, of limiting the ability of the compiler to
  generate more efficient code by inlining.
  Short of abandoning separate compilation entirely \`a la
  MLton~\citep{weeks:2006}, one may consider the suggestion of
  \citet[\S~1.5.3]{stone:2000} and \citet{leroy:2000} to use value-sharing
  (singletons) to expose definitions for inlining, but this has the destructive
  effect of breaking all abstraction boundaries imposed intentionally by the
  programmer.
  We suggest introducing a phase distinction between ``compile-time'' and
  ``development-time'', exposing inlineable definitions along compile-time
  extents $\Compr{\sigma}{\GenericOpn{\mathsf{cmpl}}\hookrightarrow V}$.
\end{example}

\begin{example}[Information flow]
  Information flow calculi \`a la \citet{abadi:1999} can be interpreted by a
  phase distinction between \textbf{low} and \textbf{high security}.  The open
  modality $\Op\Sub{\ell}$ projects the data that is visible to clients with
  security clearance $\ell$; the closed modality $\Cl\Sub{\ell}$ hides
  information from clients with clearance $\ell$. Non-interference follows immediately from
  the laws of the closed modality.
\end{example}

\subsection{Formalization of parametricity theorems}

Our approach is firmly rooted within the tradition of logical frameworks and
categorical algebra, which has enabled us to reduce the
highly technical (and very syntactic) logical relations arguments of prior work
on modules to some trivial type theoretic arguments that are amenable to
formalization \`a la \citet{orton-pitts:2016}. Actually formalizing the axioms
of \PSTT{} in a proof assistant like Agda, Coq, or Lean is within reach, thanks to the
work of \citet{gilbert-cockx-sozeau-tabareau:2019}.

\subsection{Non-trivial computational effects}

Another area for future work is to instantiate \ModTT{} with non-trivial
effects, such as recursive types or higher-order store. These features, often
accounted for using step-indexing, will likely require relativizing the
construction of \PSTT{} (\cref{sec:topos}) from $\SET$ to a logos in which
domain equations can be solved.

\section*{Acknowledgments}

Thanks to Mathieu Anel, Carlo Angiuli, Steve Awodey, Karl Crary, Derek Dreyer, Daniel
Gratzer, Guillaume Munch-Maccagnoni, Gordon Plotkin, and Michael Shulman for
their advice and comments, and to Tristan Nguyen at AFOSR for support.

This work was supported in part by AFOSR under grants MURI FA9550-15-1-0053 and
FA9550-19-1-0216.  Any opinions, findings and conclusions or recommendations
expressed in this material are those of the authors and do not necessarily
reflect the views of the AFOSR.

\nocite{dreyer-harper-chakravarty-keller:2007}
\nocite{stone-harper:2000}
\nocite{harper-lillibridge:1994}

\nocite{dreyer:2005,dreyer:2007} %
\nocite{mogelberg-simpson:2007}
\nocite{berger-mellies-weber:2012}

\nocite{cardelli-leroy:1990,leroy:1996,leroy:1994}
\nocite{aspinall:1995}

\bibliographystyle{ACM-Reference-Format}
\bibliography{references/refs-bibtex,temp-refs}

\end{document}